\newcommand{\id}{\ensuremath{\mathds{1}}}
\renewcommand{\vec}[1]{\boldsymbol{#1}}
\DeclareMathOperator{\polylog}{polylog}
\newtheorem{thm}{Theorem}
\newtheorem{cor}[thm]{Corollary}
\newtheorem{res}[thm]{Result}
\newtheorem{lem}[thm]{Lemma}
\newtheorem{definition}[thm]{Definition}
\newenvironment{manuallemma}[1]{%
  \manuallemmainner
}{\endmanuallemmainner}
\newenvironment{manualresult}[1]{%
  \manualresultinner
}{\endmanualresultinner}
\begin{document}
\nonfrenchspacing

\title{Phases of Matrix Product States with Symmetric Quantum Circuits and Symmetric Measurements with Feedforward}

\author{David Gunn}
\affiliation{Institute for Theoretical Physics, University of Innsbruck, Technikerstraße 21A, 6020 Innsbruck, Austria}
\affiliation{Department of Physics, QAA, Technical University of Munich, James-Franck-Str. 1, D-85748 Garching, Germany}

\author{Georgios Styliaris}
\affiliation{Max-Planck-Institut für Quantenoptik, Hans-Kopfermann-Straße 1, D-85748 Garching, Germany}
\affiliation{Munich Center for Quantum Science and Technology (MCQST), Schellingstraße 4, D-80799 München, Germany}

\author{Tristan Kraft}
\affiliation{Institute for Theoretical Physics, University of Innsbruck, Technikerstraße 21A, 6020 Innsbruck, Austria}
\affiliation{Department of Physics, QAA, Technical University of Munich, James-Franck-Str. 1, D-85748 Garching, Germany}

\author{Barbara Kraus}
\affiliation{Department of Physics, QAA, Technical University of Munich, James-Franck-Str. 1, D-85748 Garching, Germany}
\affiliation{Institute for Theoretical Physics, University of Innsbruck, Technikerstraße 21A, 6020 Innsbruck, Austria}

\date{\today}

\begin{abstract}

Two matrix product states (MPS) are in the same phase in the presence of symmetries if they can be transformed into one another via symmetric short-depth circuits. We consider how symmetry-preserving measurements with feedforward alter the phase classification of MPS in the presence of global on-site symmetries. We demonstrate that, for all finite abelian symmetries, any two symmetric MPS belong to the same phase. We give an explicit protocol that achieves a transformation between any two phases and that uses only a depth-two symmetric circuit, two rounds of symmetric measurements, and a constant number of auxiliary systems per site. In the case of non-abelian symmetries, symmetry protection prevents one from deterministically transforming symmetry-protected topological (SPT) states to product states directly via measurements, thereby complicating the analysis. Nonetheless, we provide protocols that allow for asymptotically deterministic transformations between the trivial phase and certain SPT phases.

\end{abstract}

\maketitle

\section{Introduction\label{sec:intro}}

One of the central goals of condensed matter physics is the classification of zero-temperature quantum phases~\cite{sachdev2011quantum}.
Quantum states in that regime become pure, and thus all their correlations are due to entanglement, whose distinct patterns give rise to the different topological phases~\cite{wen2017colloquium}.
One example is the classification of topological phases for spin Hamiltonians with short-ranged interactions. Their ground states are described by tensor network states~\cite{Fannes1992,hastings2007area,hastings2006solving,molnar2015approximating,verstraete2004valencebond,buerschaper2009explicit}, a physically motivated family of states that provide a controllable entanglement structure.
The tensor network formalism not only allows for an efficient description of many-body ground states~\cite{cirac2021matrix} but has also made possible the explicit classification of topological phases in one spatial dimension, using matrix-product states (MPS)~\cite{Chen2011_Phases1,Schuch2011_Phases2}.
The landscape of phases is vastly enriched with the inclusion of symmetries, giving rise to symmetry-protected topological (SPT) phases~\cite{affleck1987rigorous,li2008entanglement,pollmann2010entanglement,Chen2011_Phases1,Schuch2011_Phases2}.
Symmetries refine the phase diagram and explain important physical features of the underlying symmetric MPS ground states, such as the degeneracy of their spectrum and the long-range entanglement of their edge modes~\cite{cirac2021matrix}.

From the point of view of quantum information, topological phases is the classification of states according to their complexity, when locality restrictions are imposed~\cite{chen2010local,Huang2015_symandlindepthcircuits}.
This operational definition can be made precise using quantum circuits ($QC$), i.e., decompositions of unitary operators into elementary gates~\cite{NielsenAndChuang}.
Two many-body states belong to the same phase if and only if they can be converted into each other by a circuit consisting of nearest-neighbor gates whose depth (i.e., number of layers) is roughly constant for all system sizes~\cite{chen2010local,Chen2011_Phases1,Huang2015_symandlindepthcircuits}.
Such unitary operations are, from a complexity perspective, simple as unitary operators generically require a depth which scales exponentially in the system size~\cite{NielsenAndChuang}.
Thus, two states corresponding to different topological phases cannot be converted into each other by this restricted class of operations. Moreover, the quantum circuit picture naturally allows for the inclusion of SPT order by further restricting the allowed gates to be symmetric~\cite{Huang2015_symandlindepthcircuits}. In quantum computing, SPT phases have been associated with the resourcefulness of a state in measurement-based quantum computation~\cite{raussendorf2001oneway,verstraete2004valencebond,stephen2017computational,Stephen2019_MBQC3}, providing a link between symmetries, measurements and computation.

Both perspectives on topological phases come together in quantum simulators, where phases of matter can be actively engineered~\cite{satzinger2021realizing,semeghini2021probing,iqbal2023topological}.
From the complexity point of view, states belonging to the trivial phase capture roughly what is feasible to prepare. This is because actual devices are heavily constrained by noise and often constrained by the locality of interactions~\cite{preskill2018quantum}; thus, low-depth circuits on product states, i.e., the trivial phase, capture these constraints.
Despite these limitations, measurements can also be performed and be used to assist state transformations.
In particular, mid-circuit measurements and subsequent conditioning of gates on the outcomes (i.e., feedforward) can, in certain situations, facilitate the conversion of states belonging to distinct topological phases~\cite{raussendorf2005longrange,aguado2008creation,bolt2016foliated,watts2019exponential,Piroli2021,tantivasadakarn2022longrange,tantivasadakarn2023hierarchy,bravyi2022adaptive,lu2022measurement,buhrman2023state}.
It is therefore relevant to consider how the classification of topological phases is altered if measurements and feedforward are included.

It is known that adding a single round of measurements to a finite-depth circuit can deterministically convert a product state to a long-range correlated state.
Examples include the GHZ state~\cite{watts2019exponential} (symmetry breaking) and the toric code in 2D~\cite{raussendorf2005longrange} (topological order). Both of these states belong to a nontrivial topological phase and collapse to the trivial one if measurements are included.
More recently, it was shown that all translation-invariant MPS collapse to the trivial phase, including long-range correlated states ~\cite{Piroli2021,malz2023preparation}.
Several other topologically ordered phases have been shown to also collapse to the trivial one, such as quantum double models~\cite{kitaev2006anyons} of finite groups~\cite{bravyi2022adaptive,lu2022measurement}, string-net models~\cite{lu2022measurement}, and a hierarchy of phases based on the number of rounds of measurements has been suggested~\cite{tantivasadakarn2023hierarchy}.

Inspired by the results of Ref.~\cite{Piroli2021}, we study how the classification of phases of MPS under symmetries is altered when symmetric measurements are allowed.
We completely resolve the case of finite abelian groups, and we find that all phases collapse to the trivial one, including those with symmetry breaking and nontrivial cohomology.
For non-abelian groups, we construct asymptotically deterministic transformations from certain SPT phases to product states, and vice versa. We also partially extend this result to non-normal phases.
However, it remains an open problem whether this is a general feature of phases with respect to non-abelian groups. A summary of our results in given in Table~\ref{tab:results}.

The rest of the paper is organized as follows. In Section~\ref{sec:prelim}, we recall some important facts about MPS, the classification of their phases with and without symmetries, as well as the role of circuits and measurements with feedforward. In Section~\ref{sec:operations}, we will define the operations that we consider and make some preliminary observations about the phase diagram that directly follow from these operations. In Section~\ref{sec:FiniteAbelianSymmetries}, we solve the classification of phases of MPS with quantum circuits and measurements with feedforward under finite abelian symmetries. In Section~\ref{sec:NonAbelianSymmetries}, we consider non-abelian symmetries and investigate normal and non-normal phases. 

\begin{table}[t]
    \centering
    \begin{tabular}{c||c|c}
         & normal & non-normal \\ \hline\hline 
        abelian  & \makecell[c]{trivializes } & \makecell[ct]{trivializes \\ (Sec.~\ref{sec:FiniteAbelianSymmetries})} \\ \hline
        non-abelian &  \makecell[ct]{Some SPT trivialize \\ (Asymptotically)\\ (Sec.~\ref{sec:SPTtrivial})} & \makecell[ct]{GHZ $\rightarrow$ trivial phase \\ (Asymptotically)\\ (Sec.~\ref{sec:GHZproduct})} \\ 
    \end{tabular}
    \caption{The results of this work. We consider the phase classification of MPS under symmetric circuits and symmetric measurements. We consider four cases: abelian symmetries vs non-abelian symmetries; normal MPS (SPTs) vs non-normal MPS (i.e., GHZ-like). Abelian-normal phases were already known to trivialize. In Sec.~\ref{sec:FiniteAbelianSymmetries}, we show non-normal phases also trivialize for all finite abelian groups. Then, we consider non-abelian symmetries. In Sec.~\ref{sec:SPTtrivial}, we construct asymptotically deterministic transformations from certain SPT phases to product states and vice versa. Finally, in Sec.~\ref{sec:GHZproduct}, we investigate non-normal MPS with non-abelian symmetries.}
    \label{tab:results}
\end{table}

\section{Preliminaries}\label{sec:prelim}

In this section, we recall some important facts about MPS and their symmetries. We also discuss the classification of 1D gapped phases of matter~\cite{Chen2011_Phases1,Schuch2011_Phases2} in the Hamiltonian and the circuit picture (with and without symmetries). Finally, we discuss the phases of MPS when measurements and feedforward are included~\cite{Piroli2021,malz2023preparation}. Throughout, we use the notation $[n]=\{0,1,\dots,n-1\}$.

\subsection{Matrix Product States and Local Symmetries}
Any translationally invariant MPS with periodic boundary conditions is defined by a single tensor, $A$, leading to the state
\begin{equation}
    \ket{\psi_n[A]}=\sum_{i_1,\dots ,i_n} \mathrm{tr}[A^{i_1}\dots A^{i_n}]\ket{i_1\dots i_n}.
\end{equation}
Here, $A^i$ is a $D\times D$ matrix for each $i\in[d]$, where $D\in\mathbbm{N}$ is the \textit{bond dimension} and $d\in\mathbbm{N}$ is the physical dimension. The definition of an MPS is robust under ``blocking" physical sites into a single super-site. Blocking $l$ sites together corresponds to mapping the tensor $A^i$ to the product $\tilde{A}^{i_1,\dots,i_l}=\prod_{j=1}^l A^{i_j}$. Any MPS tensor, after blocking sufficiently many sites, can be brought into a \emph{canonical form} (CF)~\cite{Cirac2017_FundThm1}~\footnote{Using the terminology of Ref.~\cite{Cirac2017_FundThm1}, this canonical form would be referred to as 'Block Injective Canonical Form II'. This canonical form ensures the unitarity of the gauge transformation in Eq.~\eqref{eq:FundThmInj}.}.
In this canonical form,
\begin{equation}\label{eq:TensorCF}
    A^i = \bigoplus_{\alpha\in[m]} A_\alpha^i,
\end{equation}
with the property that the $A^i$ span all matrices with the same block structure~\cite{Cirac2017_FundThm1} (with blocks not necessarily being of the same size), and where $m$ is the number of blocks. If a tensor in canonical form has only one block, then it is called \emph{injective}. An MPS is called \emph{normal} if, after blocking, its tensor in canonical form is injective. We represent the MPS tensor in Eq.~\eqref{eq:TensorCF} graphically as shown below.
\begin{equation*}
\begin{aligned}\label{eq:TensorCF2}
    \includegraphics[width=0.12\linewidth]{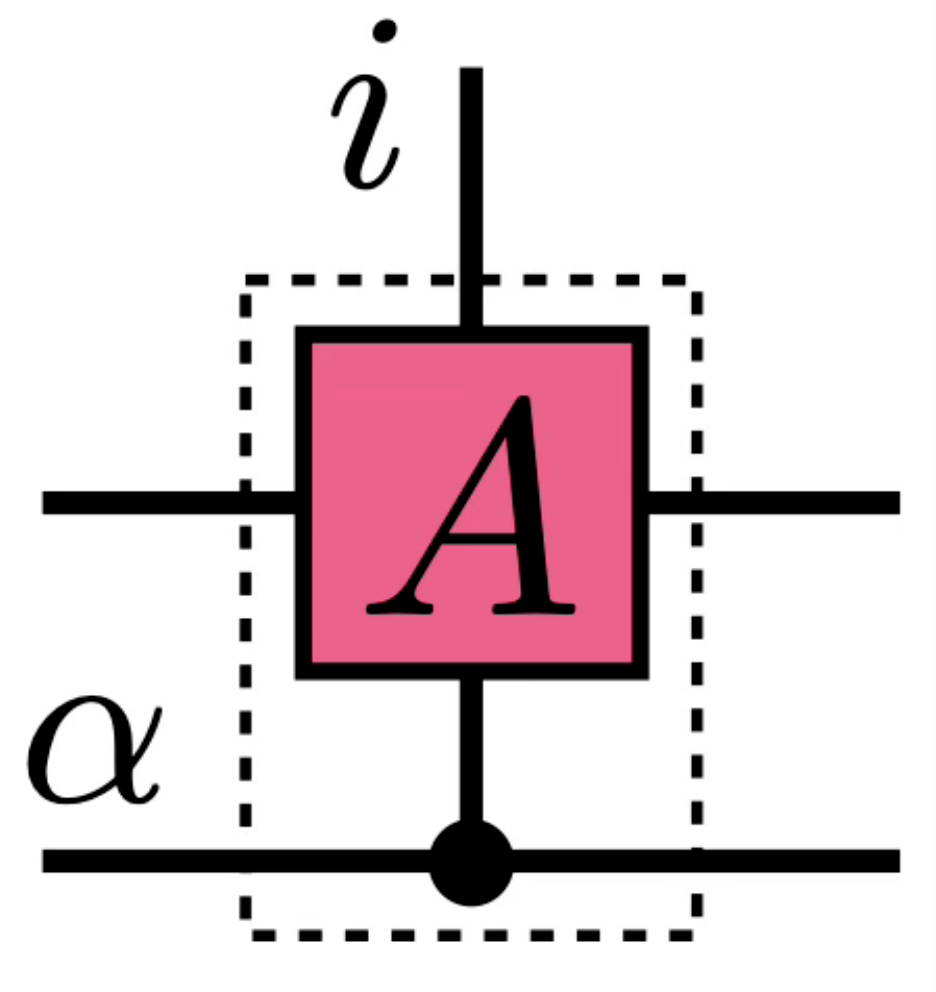}
\end{aligned}
\end{equation*}
Here, the filled circle indicates a delta function (see Ref.~\cite{Cirac2017_FundThm1} for an introduction to this graphical notation). The leg $i$ corresponds to the \textit{physical space}, whereas the horizontal legs corresponds to the so-called \textit{virtual space}. We will suppress the labeling of each leg from now on. Given a tensor $A$, one defines its so-called \emph{fiducial state} as $\ket{A}=\sum_{i,l,m}(A^i)_{lm} \ket{l}\otimes\ket{i}\otimes\ket{m}$. Graphically, the fiducial state can be obtained as shown below.
\begin{equation}\label{eq:fiducial}
\begin{aligned}
    \includegraphics[width=0.5\linewidth]{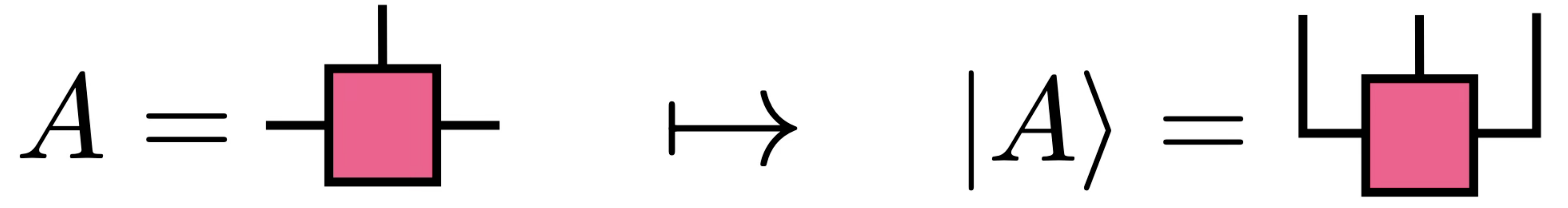}
\end{aligned}
\end{equation}
Two tensors in canonical form that yield the same MPS (up to a phase) for all $n$ are related by the so called \emph{fundamental theorem} of MPS~\cite{Cirac2017_FundThm1}. For injective MPS it states that
\begin{equation} \label{eq:FundThmInj}
    \ket*{\psi_n[A]}\propto \ket*{\psi_n [B]} \forall n \Rightarrow
     A^i = e^{i\phi} \omega^\dagger B^i \omega  ,\ \ \forall i\in[d],
\end{equation}
where $\omega$ is a unitary. For non-injective MPS the fundamental theorem states that
\begin{multline}
    \ket*{\psi_n[A]}\propto \ket*{\psi_n [B]} \forall n  \\
     \Rightarrow \bigoplus_\alpha A^i_\alpha = P^T \left( \bigoplus_\alpha e^{i\phi^\alpha}(\omega^\alpha)^\dagger   B^i_\alpha \omega^\alpha  \right) P,\ \ \forall i \in[d],
\end{multline}
where $P$ is a permutation operator that permutes the $\alpha$-blocks, $\omega_\alpha$ are unitaries, and $ e^{i\phi^\alpha}$ are phases.

The fundamental theorem can be used to characterize how local symmetries of an MPS transform its associated canonical form tensor. Let $U_{g\in G}$ be a linear unitary representation of some group, $G$. If $U_g$ is a \emph{global on-site symmetry} of an injective MPS, i.e., if $U^{\otimes n}_g\ket*{\psi_n[A]}\propto\ket*{\psi_n[A]}$ for all $g\in G$ and all $n$, then, by the fundamental theorem, for injective MPS, it must hold that \cite{SanzEtAl2009_MPSHamiltonians, Schuch2011_Phases2}
\begin{equation}\label{eq:injectivesym}
    \sum_j (U_g)_{ij} A^j = e^{i\phi_g} \omega_{g}^\dagger A^i \omega_{g} .
\end{equation}
Here, the phases $e^{i\phi_g}$ form a 1D unitary irreducible representation (irrep) of $G$. The $\omega_g$ form a \emph{projective representation} of $G$, meaning that $\omega_g\omega_h= \gamma(g,h) \omega_{gh}$, for all $g,h\in G$, where $\gamma: G\times G \rightarrow U(1)$ is referred to as the \emph{cocycle}. Crucially,  as any transformation $\omega_g\mapsto \nu(g) \omega_g$ leaves Eq.~\eqref{eq:injectivesym} invariant for $\nu_g\in U(1)$, the $\omega_g$ are only defined up to a phase. Hence $\gamma (g,h)$ is defined up to the equivalence relation $\gamma (g,h)\sim \frac{\nu(gh)}{\nu(g) \nu(h)} \gamma (g,h)$. The induced equivalence classes of the $\gamma (g,h)$ can be shown to be isomorphic to the second cohomology group of $G$ over $U(1)$, $H^2(G,U(1))$, and thus they are typically referred to as \emph{cohomology classes} of $G$ (see Appendix~\ref{app:AppendixAInjectiveMPS}). We will later see that they play an important role in the classification of phases under symmetries~\cite{Chen2011_Phases1, Schuch2011_Phases2}.

For non-injective MPS, the algebraic structure of the action of the local symmetries on the virtual level becomes more involved. A global on-site symmetry, $U_g$, acts on the tensor as~\cite{Schuch2011_Phases2}
\begin{equation}\label{eq:noninjectivesym}
    \sum_j (U_g)_{ij} \bigoplus_\alpha A^j_\alpha = P_g^T \left[ \bigoplus_\alpha e^{i\phi_g^\alpha} (\omega_{h(g,\alpha)})^\dagger A^i_\alpha \omega_{h(g,\alpha)} \right] P_g .
\end{equation}
Here, the $P_g$ form a \emph{permutation representation} of $G$ that permutes the blocks of $A$. By considering for which $g\in G$ a given block does not move, this permutation representation action fixes a subgroup $H$ of $G$. Then, $\omega_{h(g,\alpha)}$ is a projective representation of $H$, that depends on both $g$ and $\alpha$, and is associated with a cohomology class of $H$. Finally, the phases $e^{i \phi^\alpha_g}$ form a 1D irrep of $G$ for all $\alpha$ (see Ref.~\cite{Schuch2011_Phases2} for further explanation, or Appendix~\ref{app:AppendixANonInjectiveMPS}). Graphically, we may represent this as shown below.
\begin{equation}
\begin{aligned}
    \includegraphics[width=0.6\linewidth]{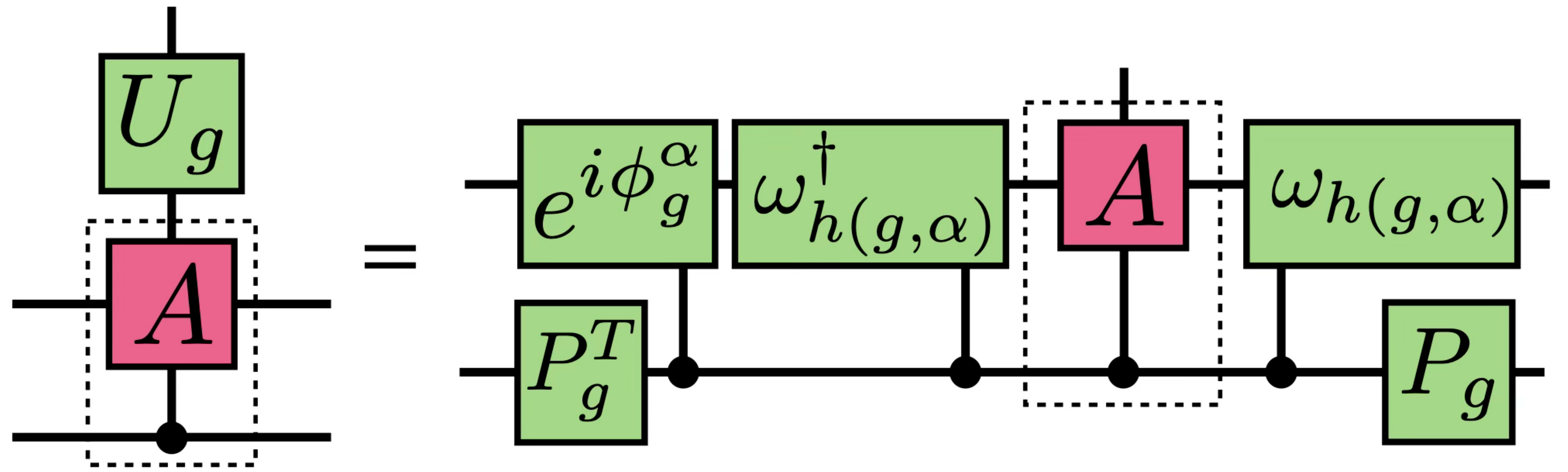}
\end{aligned}
\end{equation}

\subsection{Classification of Phases of MPS} \label{sec:ClassificationOfPhasesHamiltonian}
Any MPS in its canonical form can be associated with a \emph{parent Hamiltonian} that has this MPS as its ground state~\cite{Fannes1992,PerezGarciaEtAl2007_MPSrepresentations,Schuch2011_Phases2}. The parent Hamiltonian is a gapped, translation invariant, local Hamiltonian of the form $H=\sum_{i=1}^N h_{i,i+1}$, where $h_{i,i+1}\geq 0$. The degeneracy of the ground state space can be shown to coincide with the number of blocks in the canonical form in Eq.~\eqref{eq:TensorCF}. Importantly, the mapping between MPS and their parent Hamiltonians can be made one-to-one~\cite{PerezGarciaEtAl2007_MPSrepresentations}. Two systems are deemed to be in the same phase if their parent Hamiltonians can be transformed into one another along a path of local Hamiltonians without ever closing the spectral gap above the ground state subspace (in the thermodynamic limit).

For systems with a unique ground state, it is known that without imposing symmetries the phase diagram contains only a single phase, i.e., all normal MPS are in the same phase. Throughout, we will refer to the phase containing product states as the trivial phase. Thus, without imposing symmetries, all normal MPS are in the trivial phase. In case one imposes symmetries, the phase diagram becomes richer, leading to SPT phases. In this case, phases of normal MPS are determined by the action of the local symmetries on the virtual level, cf. Eq.~\eqref{eq:injectivesym}, and are correspondingly labeled by the elements of the second cohomology group, $H^2(G,U(1))$ \cite{Chen2011_Phases1,Schuch2011_Phases2}.

For non-normal MPS without symmetries, MPS belong to the same phase if and only if they have the same number of blocks in canonical form \cite{Schuch2011_Phases2}, i.e., the same ground state degeneracy in their parent Hamiltonian\footnote{Note, in the absence of symmetry protection, the ground state degeneracy of Hamiltonians is not stable, and thus, such Hamiltonians are often not considered to correspond to a distinct phase. See Appendix~\ref{app:noninjectivesym} for a further discussion.}. In case symmetries are imposed, phases are determined by how the symmetry permutes the blocks in the canonical form, i.e., by the action of $P_g$, and the cohomology class of $\omega_h$ in Eq.~\eqref{eq:noninjectivesym}. One can show that this corresponds to labeling phases under G by a tuple, $(H,\mu)$, where $H\le G$ is a subgroup of $G$, and $\mu\in H^2(H,U(1))$ is a cohomology class of $H$ \cite{Schuch2011_Phases2}. Such phases are referred to as symmetry breaking (see Appendix~\ref{sec:AppendixAPhasesHamiltonians} for a further discussion). An example of this phase classification is given in Appendix~\ref{app:Example}.

\subsection{Topological Phases under Quantum Circuits}\label{subsec:topophasescircuits}

Another possibility to address the classification of phases is by considering quantum circuits. Here, one defines two (sequences of) MPS $\ket{\psi_n[A]} $, $\ket{\psi_n[B]}$ to belong to the same phase if and only if there exist unitaries $U_n$ such that (i) $\| \ket{\psi_n[A]} - U_n \ket{\psi_n[B]} \| \le \epsilon$ with $\epsilon \to 0$ as $n \to \infty$, and (ii) each $U_n$ admits a decomposition as a quantum circuit with local gates and depth (i.e., number of layers) $l = O( \polylog n)$. This definition turns out to be equivalent to the Hamiltonian one~\cite{hastings2005quasi,osborne2006efficient,CoserandGarcia2019_PhasesMixedStatesDissipativeEvo,haah2021quantum, Schuch2011_Phases2, malz2023preparation, Huang2015_symandlindepthcircuits}.  Ref.~\cite{Huang2015_symandlindepthcircuits} argues that if the gates comprising the quantum circuit are also restricted to be symmetric, then, in the absence of symmetry breaking, one obtains phases under symmetries as in the Hamiltonian picture outlined in the previous section (see also Ref.~\cite{CoserandGarcia2019_PhasesMixedStatesDissipativeEvo}). We comment on the connection between the Hamiltonian and circuit picture in more detail in Appendix~\ref{sec:AppendixAPhasesCircuits}.

The landscape of the different topological phases thus exactly consists of the classes of states among which transformations (in the thermodynamic limit) are impossible under (symmetric) local quantum circuits of constrained depth. These operations, however, do not allow for measurements.

\subsection{Topological Phases with Circuits, Measurements and Feedforward (CMF)}\label{subsec:topophasescircuitsandfeedforward}

To incorporate measurements, additional auxiliary systems are usually allowed, which interact locally with the rest of the system, can be measured, and also traced out (i.e., not considered as output)\footnote{Note, the classification of phases of MPS according to Ref.~\cite{Schuch2011_Phases2} (with symmetries) is unchanged if one also has access to (symmetric) auxiliary systems.}. Importantly, gates that come after measurements can be classically conditioned on the previous measurement outcomes. Such operations are referred to as circuits with measurements and feedforward (CMF). Note that the classical conditioning (i.e., classical communication) does not need to obey any locality restrictions.

In the context of MPS without any symmetry constraints, it was shown in Refs~\cite{Piroli2021,malz2023preparation} that the inclusion of measurements and feedforward collapses the phase diagram to a single phase. In particular, all translation-invariant MPS over $n$ sites with a constant bond dimension (including the non-normal ones with long-range entanglement) are deterministically reachable from the product state, either (i) by circuit depth $O(\log n)$ and a single round of measurements, or (ii) by $O(\log \log n)$ circuit depth and the same number of rounds of measurements. In both cases, the transformation is approximate but the error vanishes in the thermodynamic limit. In this paper, we study the phases of matter that follow from considering transformations via circuits with measurement and feedforward that both respect a given symmetry. A summary comparing the phase classification with and without measurements and/or symmetries is provided in the Table \ref{tab:my_label}.

\begin{table}[t]
    \begin{tabular}{c||c|c}
         & \makecell[ct]{Gapped Hamiltonians/ \\Circuits}   &\makecell[ct]{Circuits, Measurements \\
         $\&$ Feedforward}  \\ \hline\hline  
        No Sym.  &  \makecell[ct]{\# blocks in\\  canonical form~\cite{Schuch2011_Phases2}}& \makecell[ct]{all MPS trivial\\ \cite{Piroli2021, malz2023preparation}} \\ \hline
        \makecell[ct]{Sym.\\ $(G, U_g)$} & \makecell[ct]{permutation action and \\ cohom. classes~\cite{Chen2011_Phases1, Schuch2011_Phases2}}  & \makecell[ct]{This work.\\ See also \cite{CoserandGarcia2019_PhasesMixedStatesDissipativeEvo, deGroot2021_SPTInOpenQSystems}}\\
    \end{tabular}
    \caption{Characterization of the phases of MPS under different operations. Two MPS are in the same phase (with symmetries) if they can be connected by a path of (symmetric) gapped parent Hamiltonians. This classification is equivalent to the classification under (symmetric) short-depth circuits~\cite{hastings2005quasi,osborne2006efficient,CoserandGarcia2019_PhasesMixedStatesDissipativeEvo,haah2021quantum} (cf. Section~\ref{subsec:topophasescircuits}). In the Hamiltonian/circuit classification without symmetries, all MPS with the same number of blocks in the canonical form [see Eq.~\eqref{eq:TensorCF}] belong to the same phase \cite{Schuch2011_Phases2}. By imposing symmetries, the phase classification is enriched as now phases depend on the permutation action and the group cohomology induced by the action of the symmetry on the MPS tensor [see Eq.~\eqref{eq:noninjectivesym}] ~\cite{Schuch2011_Phases2,Chen2011_Phases1}. In the case where measurements with feedforward are added to the circuit classification, the phase diagram trivializes if no symmetries are imposed~\cite{Piroli2021,malz2023preparation}. Here, we investigate how this classification is altered if one imposes symmetries.}
    \label{tab:my_label}
\end{table}

\section{Symmetric Quantum Circuits and Symmetric Measurements with Feedforward}\label{sec:operations}

In this section we introduce a symmetry-preserving extension of CMF, which we denote by G-CMF. We also will discuss the role of auxiliary systems in assisting transformations, and we will make some preliminary observations on possible transformation between phases that follow directly from the definition of G-CMF operations.

\subsection{Definition of G-CMF Operations}

First, one fixes a linear unitary representation $U_{g\in G}$ of a group $G$. A unitary, $V$, on $m$ qudits is called symmetric if it commutes with $m$ copies of the representation $U_g$, i.e., if $[V,U_g^{\otimes m}]=0$ for all $g\in G$. Similarly, a projective measurement on $m$ qudits, $\{P_k\}_k$, with $P_k^2=P_k\geq 0$, and $\sum_k P_k=\id$, is symmetric if $[P_k,U_g^{\otimes m}]=0$, for all $g\in G$ and for all $k$. G-CMF operations are then defined by sequences of (i) applying circuits where each gate acts on $O(1)$ geometrically-local physical sites, (ii) appending or removing $O(1)$ auxiliary systems on-site in an unentangled, pure, symmetric state, and (iii) performing on-site symmetric projective measurements. The operations are summarized in Fig.~\ref{fig:GQCccl}. We can establish the following notion of equivalence of MPS under G-CMF operations.

\begin{definition}[G-CMF equivalence] \label{def:G-CMF}
     Let $G$ be a group and $U_g$ a linear unitary representation thereof. We say that two MPS, $\ket{\psi_n[A]} $ and $ \ket{\psi_n[B]}$, with global on-site symmetry $U_g$, are in the same phase under $G$ if they can be transformed (in the sense of convergence of sequences of MPS explained in Section~\ref{subsec:topophasescircuits}) into one another via G-CMF operations with a cumulative $O(\text{polylog}(n))$ depth circuit.
\end{definition}

Before moving on, let us discuss some limiting cases of our definition:
\begin{enumerate}
    \item In the absence of symmetries, our definition collapses to the operation $QCcc_{O(\polylog n)}$ as introduced in Ref.~\cite{Piroli2021} (with the additional constraint of only $O(1)$ auxiliary systems per physical site, whereas in $QCcc_l$ the number of auxiliary systems is in principle unbounded).
    \item In the absences of measurements, our definition collapses to phases of MPS as classified by symmetric quantum circuits (or equivalently, as discussed above, symmetric gapped Hamiltonians).
    \item In the absence of circuits, our definition corresponds to a symmetric version of Local Operations and Classical Communication (LOCC) \cite{Chitambar2014_EverythingLOCC}, with additional constraints on the auxiliary systems (see Refs \cite{SchuchEtAl2004_SuperselectionLOCC2, deGroot2020_InaccessEnt,deGroot2021_SPTInOpenQSystems} for similar operations).
\end{enumerate}

Next, let us explain the motivation for defining the operations as explained above.

\begin{figure}[t]
        \includegraphics[width=0.8\linewidth]{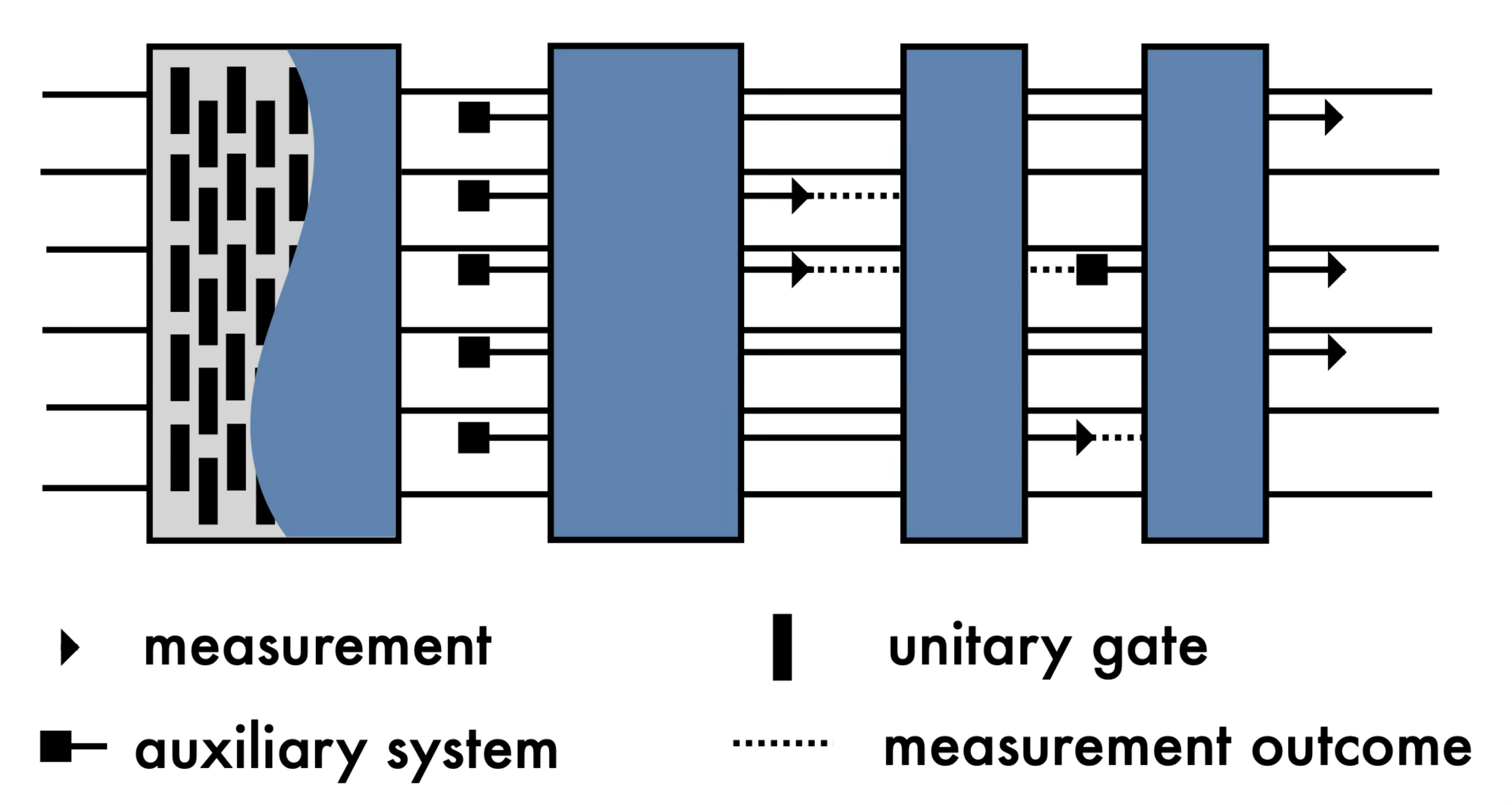}
        \caption{G-CMF operations consist of sequences of applying a local quantum circuit, where each gate acts on $O(1)$ physical sites, appending (or discarding) $O(1)$ auxiliary systems in an unentangled, pure, symmetric state, and on-site symmetric measurements with feedforward. Two states are in the same phase if they can be transformed into one another with these operations, such that the total circuit depth is $O(\text{polylog}(n))$}
        \label{fig:GQCccl}
\end{figure}

\subsection{Motivation for the Constraints on Auxiliary Systems} \label{sec:ancillas}
The constraints one imposes when adding and discarding auxiliary systems has considerable consequences for the classification. Here, we only allow auxiliary systems to be added and discarded on-site and only if they are in a pure, common eigenstate of $U_g$. Firstly, this ensures that adding and discarding auxiliary systems cannot lead to a non-symmetric state. Secondly, if one were allowed to discard entangled states, one could transform any phase to the trivial phase, independent of the symmetry group, by simply discarding the initial state.
Moreover, Eq.~\eqref{eq:TrToSPT} would hold, due to a similar argument, but independent of the symmetry.

Finally, if we were allowed to discard mixed, separable states, then one could transform any state to the trivial phases by applying an entanglement breaking channel locally and then discarding it. By only allowing auxiliary systems to be discarded if they are in a pure, common eigenstate of $U_g$, we avoid this possibility. As a final comment, our definition allows one to consider auxiliary systems as similar to catalysts: a common eigenstate can be appended locally to assist the transformation, but a common eigenstate must be returned at the end of the transformation.

Another important aspect of using auxiliary systems is that it allows us to implement quasi-commuting unitaries, while still satisfying the stronger condition of commuting with the symmetry. To see this, consider a unitary, $V$, that quasi-commutes with $U_g$; that is, $U_g V = e^{i\phi_g} V U_g,\ \forall g\in G$. It is easy to verify that $e^{i\phi_g}$ must be a 1D irrep of $G$. Moreover, using Schur's Lemma and the Burnside-Brauer theorem~\cite{Isaacs2003-no}, one can show that the fact that $V$ quasi-commutes means there is an $m\in\mathbbm{N}$ such that $U_g^{\otimes m}$ contains $\ketbra{\phi_0}+e^{-i\phi_g}\ketbra{\phi_1}$ as a sub-representation, where $\ket{\phi_{0}},\ket{\phi_{1}}$ are two orthonormal vectors. Moreover, it is easy to verify that the unitary given by $\tilde{V}=(\ketbra{\phi_1}{\phi_0}\otimes V + h.c.)\oplus \id$ commutes with $U_g^{\otimes m}\otimes U_g$ and  $\tilde{V}(\ketbra{\phi_0} \otimes \rho) \tilde{V}^\dagger  = \ketbra{\phi_1} \otimes V \rho V^\dagger$ for all $\rho\ge0$. Thus, after appending $m$ auxiliary systems, initialized in the symmetric state $\ket{\phi_0}$ and applying the strictly commuting $\tilde{V}$, the auxiliary systems are left in the state $\ket{\phi_1}$ and thus may be discarded, and thus one will have implemented $V$. 

Next, we will make some observations that follow directly from our discussion above.

\subsection{Initial Observations}\label{subsec:initial}

Under abelian symmetries, a transformation from any nontrivial phase to the trivial phase is always possible, i.e.,
\begin{equation}
    \ket{\psi}\mapsto\ket{\rm Triv}.
\end{equation}
This transformation can be achieved by measuring each physical system on-site in the common symmetric eigenbasis of the representation $U_g$. Such a basis exists as any representation of an abelian group can be decomposed into 1D irreps. This transforms the state to a symmetric pure product state, which can be discarded and replaced by a symmetric state in the trivial phase.

Additionally, it has been observed (see, e.g., Ref.~\cite{CoserandGarcia2019_PhasesMixedStatesDissipativeEvo}) that, for abelian symmetries, one can always transform from the trivial phase to any SPT phase, i.e.,
\begin{equation}\label{eq:TrToSPT}
    \ket{\rm Triv}\mapsto\ket{\rm SPT}.
\end{equation}
To do so, one notices that for any SPT state, $\ket{\text{SPT}(\mu)}$, belonging to cohomology class $\mu\in H^2(G,U(1))$, there is another SPT state, $\ket{\text{SPT}(\mu^{-1})}$, such that the state $\ket{\text{SPT}(\mu)}\otimes \ket{\text{SPT}(\mu^{-1})}$ is in the trivial phase~\cite{Schuch2011_Phases2}. Thus, one can transform $\ket{\text{Triv}}$ to $\ket{\text{SPT}(\mu)}$ by first transforming to $\ket{\text{SPT}(\mu)}\otimes \ket{\text{SPT}(\mu^{-1})}$ and then transforming the state $\ket{\text{SPT}(\mu^{-1})}$ to the trivial state by the procedure outlined above. Thus, normal MPS all belong to the same phase under abelian symmetries.

This construction does not work for non-normal MPS. Moreover, in the case of non-abelian symmetries, one cannot projectively measure onto a common symmetric eigenbasis of $U_g$ as no such basis exists. Thus, in the subsequent sections, we investigate the phases of MPS in these two directions. A summary of our results is provided in Table~\ref{tab:results}.

\subsection{The Role of the Global On-Site Symmetry} \label{sec:OnSiteSymmetryPhases}
In defining our operations, we have chosen to consider a fixed physical system with a fixed global on-site symmetry $U_g$. We do this because it fits more naturally when defining a symmetry-preserving form of circuits, auxiliary systems, and measurements. However, the classification of phases under symmetries in Refs~\cite{Schuch2011_Phases2,Chen2011_Phases1} relies only on group properties of $G$ and its subgroups (cohomology classes are a group property). Consequently, one might reasonably wonder how the on-site, physical symmetry plays a role in this classification. The resolution to this tension lies in the fact that, what determines the phase of an MPS is the action of the symmetry in the \textit{virtual space} of the tensor (after blocking). A more detailed discussion of this technicality is discussed in Appendix~\ref{sec:BlockingAndLocalSymmetry}.

\section{Finite Abelian Symmetries}\label{sec:FiniteAbelianSymmetries}

In this section, we show that the entire phase diagram under finite abelian groups trivializes under G-CMF. As discussed above, it is clear that SPT phases will trivialize under G-CMF for abelian groups. Here we show that the symmetry breaking phases, i.e., non-normal MPS, also trivialize. That is, we prove our first main result.
\begin{res} \label{res:AbelianPhaseDiagram}
Let $G$ be a finite abelian group. Then all MPS are in the same phase under G-CMF.
\end{res} 

To show this, we consider two states in different phases, $\ket{\psi(H_0,\mu_0)}$ and $\ket{\psi(H_1,\mu_1)}$. We will show that these states can be transformed into each other with G-CMF operations. In particular, for every phase, $(H,\mu)$, we construct a representative state, $\ket{\tilde{\psi}(H,\mu)}$, that is reachable from a product state, $\ket{\text{Triv}}$. The claim then follows as one can implement the sequence of symmetry-preserving transformations
\begin{multline}
    \ket{\psi(H_0,\mu_0)} \mapsto \ket{\text{Triv}}   \mapsto\ket{\tilde{\psi}(H_1,\mu_1)} \mapsto \ket{\psi(H_1,\mu_1)}.
\end{multline}
The first transformation is possible as one can always transform to the product state when $G$ is abelian (see Section \ref{subsec:initial}), and the last transformation is possible as $\ket{\tilde{\psi}(H_1,\mu_1)}$ and $\ket{\psi(H_1,\mu_1)}$ belong to the same phase. Thus, all that remains is to show that the representative state $\ket{\tilde{\psi}(H,\mu)}$ can be reached from a product state. 

\begin{figure}[t]
        \includegraphics[width=.7\linewidth]{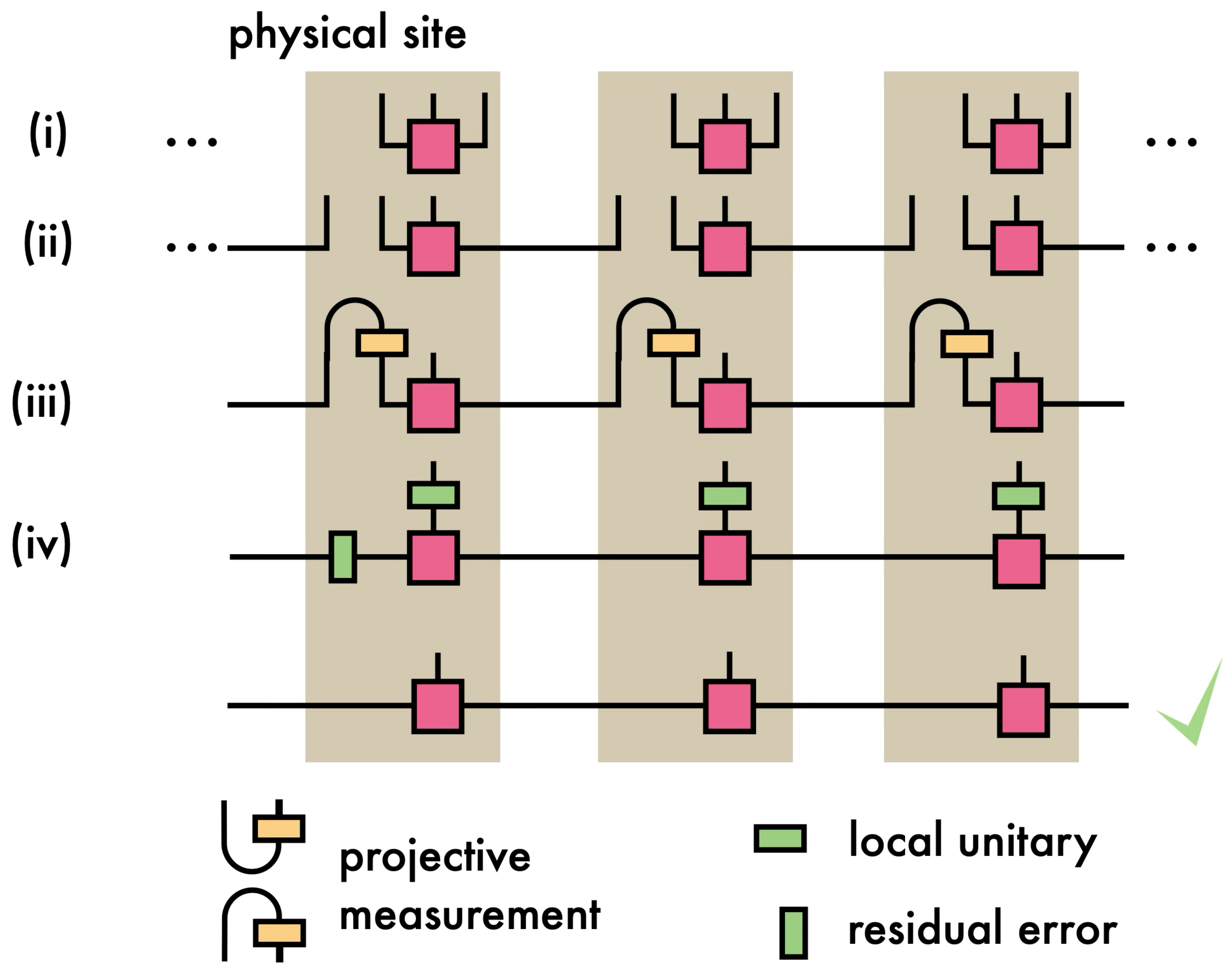}
        \caption{We consider protocols of the following form: (i) Starting from a target MPS, one locally prepares its fiducial state. (ii) Then one uses a shift circuit to move one qudit of the fiducial state to the next-nearest neighbor. (iii) Then, one perform a projective measurement on two of the qudits locally on each site. (iv) If the measurements are chosen properly, depending on the measurement results, one arrives at states that are each local unitary equivalent to the target MPS, up to some residual error. The rest of the paper is then concerned with ensuring all these steps can be performed in a symmetry-preserving way and to ensure that any residual error that cannot be corrected on the physical site occurs with zero probability.}
        \label{fig:Protocol}
\end{figure}

To show this, we use a protocol consisting of four steps, all depicted in Fig.~\ref{fig:Protocol}. Step (i): Given a tensor, $A$, of our target MPS, one locally prepares its fiducial state, defined in Eq.~\eqref{eq:fiducial}. Step (ii): We use a depth-2 circuit to move the right qudit of the fiducial state to the next neighbor. Step (iii): We locally perform a symmetric "generalized Bell measurement" [defined below in Eq.~\eqref{eq:projMeasurement}]. Depending on the measurement outcome, one obtains different post-measurement states. As we will see later, all these states are, up to on-site quasi-symmetric unitaries, equivalent to the target MPS. Importantly, we will also see that potential residual errors that cannot be corrected on the physical systems do not occur in our protocol. Step (iv): In a final step, we correct the post-measurement states to reach the desired final state.

In what follows, we will show that all the steps described above can be performed while respecting the symmetry. An example to illustrate our protocol is given in Appendix~\ref{app:Example} for the group $\mathds{Z}_4\times\mathds{Z}_2$. The detailed proof that the protocol outlined above is indeed symmetry-preserving is provided in Appendix~\ref{app:abelian}. We outline here the main ideas of why this construction works. 

We begin by choosing\footnote{Note, in Def.~\ref{def:G-CMF}, we fix a physical symmetry, whilst here we choose a physical symmetry. Indeed, we have abused notation a little bit. Strictly speaking, the RHS of Eq.~\eqref{eq:Ug} appears as a sub-representation of some finite tensor power of the original global on-site symmetry, i.e., $U_g^{\otimes m}$ for some $m\in\mathbbm{N}$. The fact that the target phase, $(H,\mu)$, is non-empty ensures such an $m\in\mathbbm{N}$ exists (see Appendix~\ref{sec:BlockingAndLocalSymmetry} for further details). As our operations are all $O(1)$ local, we are allowed to consider our state after blocking $m$ sites together. Thus we can "choose" the symmetry in Eq.~\eqref{eq:Ug}. Likewise, the state defined by the tensor in Eq.~\eqref{fig::PhaseRepresentativeTensor} should also be considered at length scale $m$. That is, we consider $\ket{\psi_{\tilde{n}}[A]}\in \mathcal{H}^{\otimes n=m\tilde{n}}$. Generally, we work at this length scale throughout the following derivation. As $m$ is finite, all symmetric operations at this length scale can be implemented in finite-depth by the symmetric operations at the original length scale (with $O(1)$ auxiliary systems locally).} the action of the global on-site symmetry, $U_g$, on the physical space as
\begin{equation}\label{eq:Ug}
     U_g = \left[\bigoplus_{\alpha\in K}  (\omega^\mu)_{h(g,\alpha)}^*\otimes (\omega^\mu)_{h(g,\alpha)}\right] \left(P_{k(g)} \otimes \id\otimes \id\right).
 \end{equation}
Here, we choose the permutation representation to be of the form $P_{k(g)}\equiv X^{k(g)}_{|K|}$, where $X$ are cyclic permutations labeled by a function, $k(g)$, which maps elements of $G$ to the quotient group $K=G/H$. By $\omega^\mu_h$ we denote an irreducible projective representation of $H$ corresponding to the cohomology class $\mu\in H^2(H,U(1))$. Graphically, we may write this as shown below.
\begin{equation}
\begin{aligned}
    \includegraphics[width=0.44\linewidth]{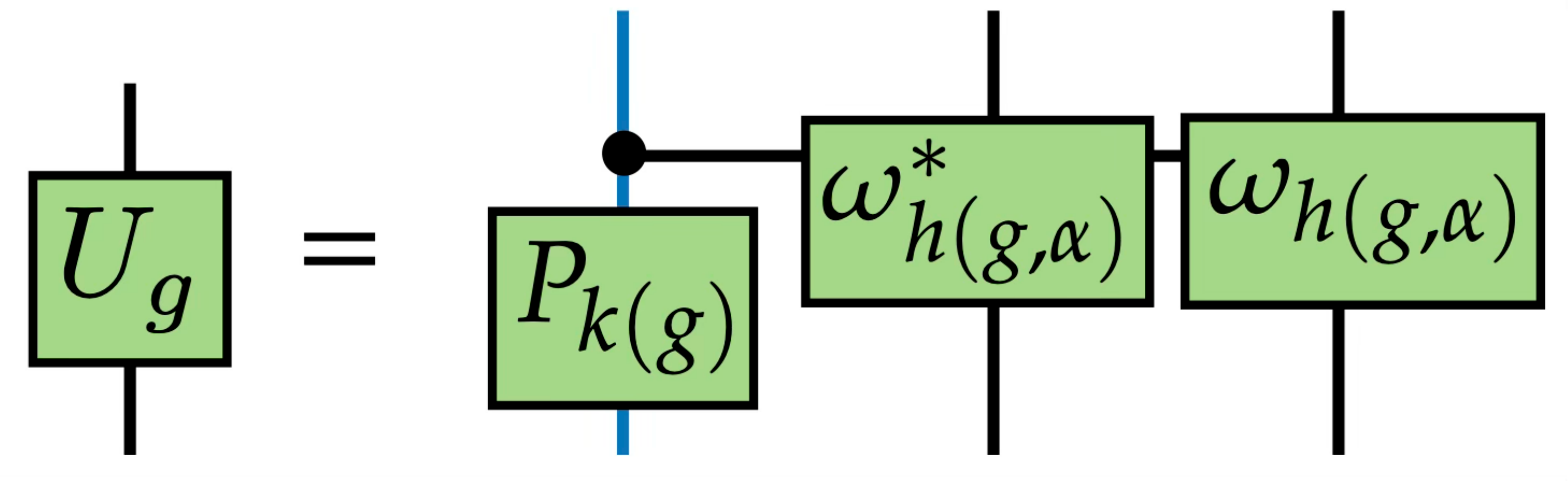}
\end{aligned}
\end{equation}
Having defined the symmetry in this way, it is easy to choose a representative state, $\ket{\tilde{\psi}(H,\mu)}$, for the phase $(H,\mu)$. To this end, we consider the following tensor.
\begin{equation}
\begin{aligned}\label{fig::PhaseRepresentativeTensor}
    \includegraphics[width=0.32\linewidth]{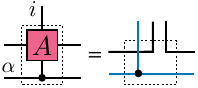}
\end{aligned}
\end{equation}
Note that in Eq.~\eqref{fig::PhaseRepresentativeTensor}, the blue lines are decoupled from the black lines. Thus, this tensor generates the MPS consisting of a tensor product of a GHZ state (generated by the blue lines) and a chain of bell pairs (generates by the black lines). By construction, this state belongs to the phase $(H,\mu)$. Having chosen our representative, let us now verify each step of the protocol is symmetric.

One can easily verify that the following (fiducial) state is symmetric with respect to $U_g^{\otimes 3}$. We note that it is almost the fiducial state of the representative tensor, but contains an additional Bell state to make it symmetric.
\begin{equation}\label{eq:FiducialRep}
\begin{aligned}
    \includegraphics[width=0.54\linewidth]{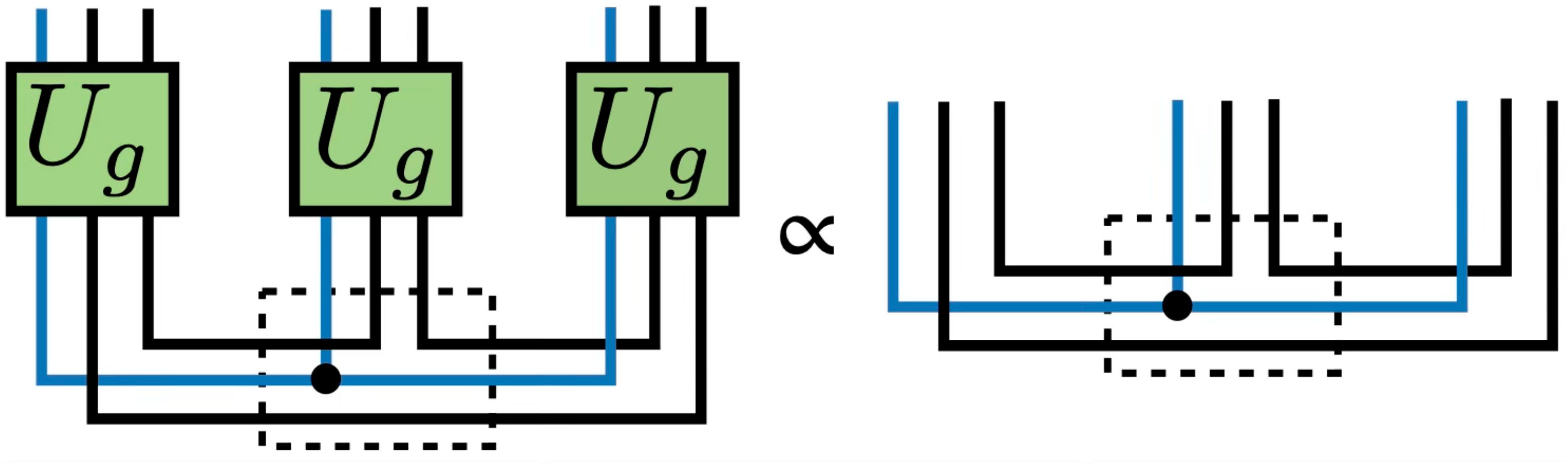}
\end{aligned}
\end{equation}
The depth-2 circuit that moves the right qudit to the nearest neighbor is symmetry-preserving as it simply swaps qudits between nearest neighbors. Such gates clearly commute with $U_g^{\otimes 2}$.

Next, let us consider the on-site measurement on the two qudits in step (iii) of the protocol, see Figure~\ref{fig:Protocol}. It corresponds to a generalized Bell measurement, with the measurement operators given by
\begin{align}\label{eq:projMeasurement}
     P_{r,q} &= (\id \otimes V_{r,q}) \ketbra*{\widetilde{\Phi}^+}(\id \otimes V_{r,q}^\dagger),
\end{align}
where
\begin{equation}
    \ket*{\widetilde{\Phi}^+}\propto\sum_{i\in [|K|]} \sum_{j,k\in [D_\mu]} \ket{i,j,k,i,k,j}
\end{equation}
is a maximally entangled state, $\abs{K}=\abs{G}/\abs{H}$, and $D_\mu$ is the dimension of the projective representation of $H$, $\omega^\mu_h$. Graphically, this may be represented as follows.
\begin{equation}\label{eq:measureMain}
\begin{aligned}
    \includegraphics[width=0.31\linewidth]{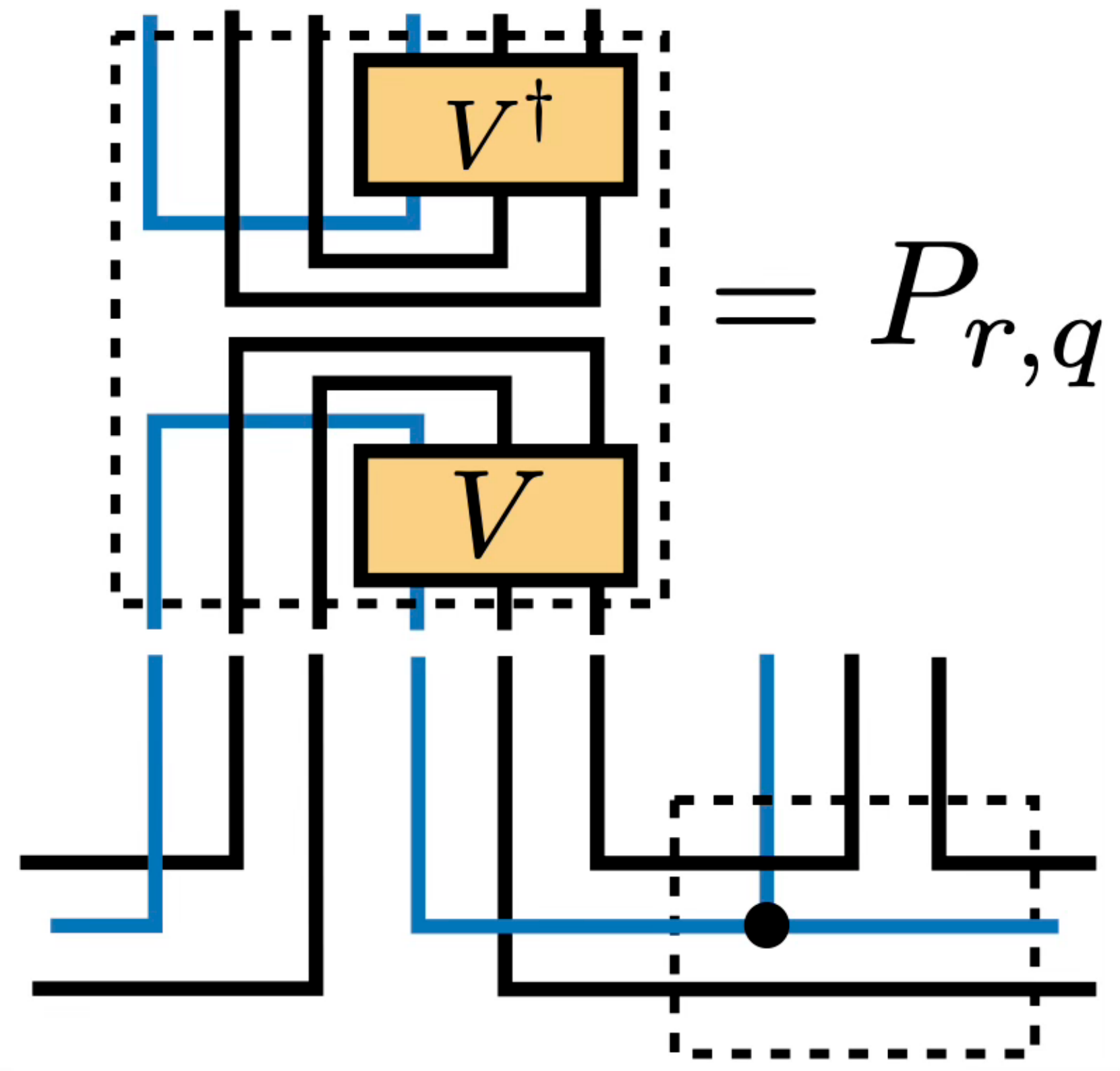}
\end{aligned}
\end{equation}

The unitaries, $V_{r,q}$, are given by
\begin{equation}\label{eq:Projectors}
    V_{r,q} = U_r \tilde{V}_q\equiv U_r\  [\tilde{Z}^q \otimes  \id \otimes (\omega_{h(q)}^\mu)],
\end{equation}
where $U_r$ is the physical symmetry, $\omega_h^\mu$ is the projective representation of $H$, and $\tilde{Z}^q$ is a diagonal unitary matrix, with $r,g\in S^\mu$, for a specific subset $S^\mu \subseteq G$ (see Appendix~\ref{app:abelian} for further details). 

Let us explain our construction, $P_{r,q}$, in more detail. We need to ensure that the measurement has the following properties: (i) The projectors, $P_{r,q}$, must commute with all the $U_g$, which is equivalent to $V_{r,q}$ quasi-commuting with all the $U_g$, i.e., $V_{r,q}U_g\propto U_gV_{r,q}$. (ii) The measurement must be complete. (iii) All post-measurement states must be correctable to the target state.

Clearly, the symmetries, $U_g$, themselves fulfill all these properties. However, there are not sufficiently many symmetries to complete the measurement, i.e., for $\tilde{V}_{\rm id}=\id$ in Eq.~\eqref{eq:Projectors}, the corresponding projectors $\{P_{r,{\rm \id}}\}_{r\in G}$ do not form a complete measurement. Thus, we have to choose additional unitaries. We choose these to be product operators so that they propagate nicely through the tensor. The first tensor factor, $\tilde{Z}^q$, of $\tilde{V}_{q}$ is a product of a generalized Z gate and an additional phase matrix. The generalized Z gate quasi-commutes with the permutation action, a generalized $X$ gate, of $U_g$ and, together with $X$, generates orthogonal operators. The third tensor factor carries a projective representation. It quasi-commutes with all the projective representations in all the blocks of $U_g$ individually, but generates potentially different phases in each of the blocks. The reason for the additional phase matrix on the first tensor factor is then precisely to cancel these phases and make the operator overall quasi-commute with the symmetry. To summarize, we have the following Lemma, which we prove and state in more detail in Appendix~\ref{app:LemmaBell}.
\begin{lem}[informal]
    \label{lem:bellmeasurements}
  Let $G$ be a finite abelian group and $(H,\mu)$ label a phase. Let $U_g$ be given as in Eq.~\eqref{eq:Ug}, and let $\{P_{r,q}\}_{r,q\in S}$ be given as in Eq.~\eqref{eq:projMeasurement}. Then, $\{P_{r,q}\}_{r,q\in S}$ is a complete, symmetric projective measurement.
\end{lem}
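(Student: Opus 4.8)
The plan is to check separately the two nontrivial requirements: that every $P_{r,q}$ commutes with the on-site symmetry $U_g\otimes U_g$ on the two qudits being measured (each of dimension $N:=\abs{K}D_\mu^2$, with $U_g$ as in Eq.~\eqref{eq:Ug}), and that $\sum_{r,q\in S^\mu}P_{r,q}=\id$. Each $P_{r,q}$ is automatically a rank-one orthogonal projector, being a unitary conjugate of the rank-one projector $\ketbra{\widetilde{\Phi}^+}$ onto a normalized maximally entangled vector. Writing $\ket{\widetilde{\Phi}^+}$ as $(\id\otimes\Pi)$ applied to the standard maximally entangled vector, with $\Pi$ the permutation implementing the $j\leftrightarrow k$ swap, one gets $\bra{\widetilde{\Phi}^+}(\id\otimes V_{r,q}^\dagger V_{r',q'})\ket{\widetilde{\Phi}^+}=\tfrac1N\mathrm{tr}(V_{r,q}^\dagger V_{r',q'})$; hence completeness is equivalent to the family $\{V_{r,q}\}_{(r,q)\in S^\mu\times S^\mu}$ being an orthonormal operator basis of $M_N(\bbC)$ in the normalized Hilbert--Schmidt inner product, which in particular forces $\abs{S^\mu}=N$.

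For the symmetry, I would first verify that the index pattern of $\ket{\widetilde{\Phi}^+}\propto\sum_i\sum_{j,k}\ket{i,j,k,i,k,j}$ makes it invariant under $U_g\otimes U_g$: the $\bbC^{\abs K}$-registers carry $\sum_i\ket{i}\ket{i}$, fixed by $P_{k(g)}\otimes P_{k(g)}$, while the deliberate $j\leftrightarrow k$ swap between the two qudits pairs the $\omega^\mu$-register of one qudit with the $(\omega^\mu)^*$-register of the other, making each of the remaining two entangled pairs a singlet. Consequently, conjugating $\ketbra{\widetilde{\Phi}^+}$ by $\id\otimes V_{r,q}$ still commutes with $U_g\otimes U_g$ exactly when $V_{r,q}$ quasi-commutes with $U_g$ --- the quasi-commutation phase cancelling between ket and bra, as in Sec.~\ref{sec:ancillas}. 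It then remains to show $V_{r,q}=U_r\tilde V_q$ quasi-commutes with every $U_g$; since $G$ is abelian, $U_r$ commutes with $U_g$, so it suffices to treat $\tilde V_q=\tilde Z^q\otimes\id\otimes\omega^\mu_{h(q)}$. Here I would track the commutation phase register by register: $\omega^\mu_{h(q)}$ quasi-commutes with each block $\omega^\mu_{h(g,\alpha)}$ of $U_g$ up to the gauge-invariant commutator pairing $\beta(a,b):=\gamma(a,b)/\gamma(b,a)$ of the abelian group $H$, evaluated at $(h(g,\alpha),h(q))$, a phase that depends on the block index $\alpha$; and the phase matrix $\tilde Z^q$ on the $\bbC^{\abs K}$-register is chosen so that, conjugated by the cyclic shift $P_{k(g)}$, it generates exactly the compensating $\alpha$-dependent phase. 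The two $\alpha$-dependencies cancel, leaving a single scalar $e^{i\phi_g(r,q)}$, automatically a 1D irrep (Sec.~\ref{sec:ancillas}); hence $V_{r,q}$ quasi-commutes and $P_{r,q}$ is symmetric.

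For completeness, I would compute $\mathrm{tr}(V_{r,q}^\dagger V_{r',q'})$ by propagating the product and block structure of $V_{r,q}=U_r\tilde V_q$ through the three registers. On $\bbC^{\abs K}$, the $U_r$ contribute cyclic shifts $X^{k(r)}$ and the $\tilde V_q$ contribute diagonal phase operators, so the trace vanishes unless $k(r)=k(r')$ (since $\mathrm{tr}(X^c\cdot\mathrm{diagonal})=0$ for $c\neq0$); on the two $\bbC^{D_\mu}$-registers one obtains products of the irreducible projective representations $(\omega^\mu)^*,\omega^\mu$ of $H$, for which the projective Schur relation $\mathrm{tr}((\omega^\mu_a)^\dagger\omega^\mu_b)=D_\mu\,\delta_{ab}$ holds for $a,b$ in a transversal of the radical of $\beta$ in $H$. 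Combining these reduces orthonormality to the labels read off from $(r,q)$ being pairwise distinct as $(r,q)$ ranges over $S^\mu\times S^\mu$; $S^\mu$ is defined precisely so that this is the case and the count comes out to $\abs{S^\mu}=N$. Then $\abs{S^\mu\times S^\mu}=N^2=\dim M_N(\bbC)$, so the $V_{r,q}$ span, and $\sum_{r,q}P_{r,q}=\id$.

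I expect the crux --- and the reason a proper subset $S^\mu\subsetneq G$ is needed in general --- to be the constraint that one \emph{single} first-register phase matrix $\tilde Z^q$ must, under conjugation by $P_{k(g)}$, cancel the block-dependent pairing phases $\beta(h(g,\alpha),h(q))$ for \emph{all} $g$ at once; this is a cohomological triviality condition (a $U(1)$-valued $1$-cocycle on $K=G/H$ being a coboundary) and is satisfiable only for $q$ in a restricted subset of $G$, which is how $S^\mu$ is delimited on the $q$-side, while the matching restriction of $r$ to $S^\mu$ is what keeps the $V_{r,q}$ orthonormal and of the right number $N^2$ (and, for the downstream argument but not for this lemma, ensures the post-measurement states are correctable by on-site quasi-symmetric unitaries). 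Reconciling these uses the structure of $G$ as an abelian extension $1\to H\to G\to K\to1$ and the radical of the pairing $\beta$ associated with $\mu$; by contrast, the orthogonality inputs are just the standard Weyl and projective-representation Schur relations and are routine.
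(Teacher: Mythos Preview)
Your approach is essentially the paper's: reduce symmetry of $P_{r,q}$ to quasi-commutation of $V_{r,q}$ with $U_g$ via the invariance of $\ket{\widetilde{\Phi}^+}$, verify quasi-commutation by tracking phases register by register (the $\tilde Z^q$ cancelling the $\alpha$-dependent commutator phases coming from $\omega^\mu_{h(q)}$ against the blocks of $U_g$), and reduce completeness to $\{V_{r,q}\}$ being a Hilbert--Schmidt orthonormal basis, checked by a register-by-register trace computation using the Weyl relations on $\bbC^{\abs K}$ and the projective Schur relation on $\bbC^{D_\mu}$.

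One correction to your final speculative paragraph: the restriction to $S^\mu\subsetneq G$ is \emph{not} because the phase-cancellation condition on $\tilde Z^q$ fails outside $S^\mu$. The paper's explicit computation shows $U_g\tilde V_q=\hat\chi^{k(q)}_{k(g)}\chi^{\phi_\mu(h(q))}_g\,\tilde V_q U_g$ for \emph{all} $q\in G$, so every $P_{r,q}$ with $r,q\in G$ is symmetric. The restriction to $S^\mu=\{g\in G:h(g)\in Q\}$, with $Q$ a transversal of $H/Z^\mu(H)$, is purely to eliminate redundancy in the operator family: for $h,h'$ in the same coset of the projective center $Z^\mu(H)$ one has $\omega^\mu_{h}\propto\omega^\mu_{h'}$, so the corresponding $V_{r,q}$ would be proportional rather than orthogonal. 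Thus $S^\mu$ is delimited on \emph{both} sides by the same orthonormality requirement (Corollary on $\{\omega^\mu_g\}_{g\in Q}$ being an ONB), and the count $\abs{S^\mu}=\abs K\,\abs Q=\abs K D_\mu^2=N$ falls out directly, not via any cocycle-triviality obstruction on $K$.
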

Finally, we must verify that we can indeed correct any post-measurement state to the target state that we wish to reach. To that end, one notices that the unitaries, $U_r$ and $\tilde{V}_q$, can be propagated though the post-measurement state according to the following rules (see Appendix~\ref{sec:AbelianSlideThrough} for a proof). 
\begin{equation}\label{eq:slidethrough}
\begin{aligned}
    \includegraphics[width=0.45\linewidth]{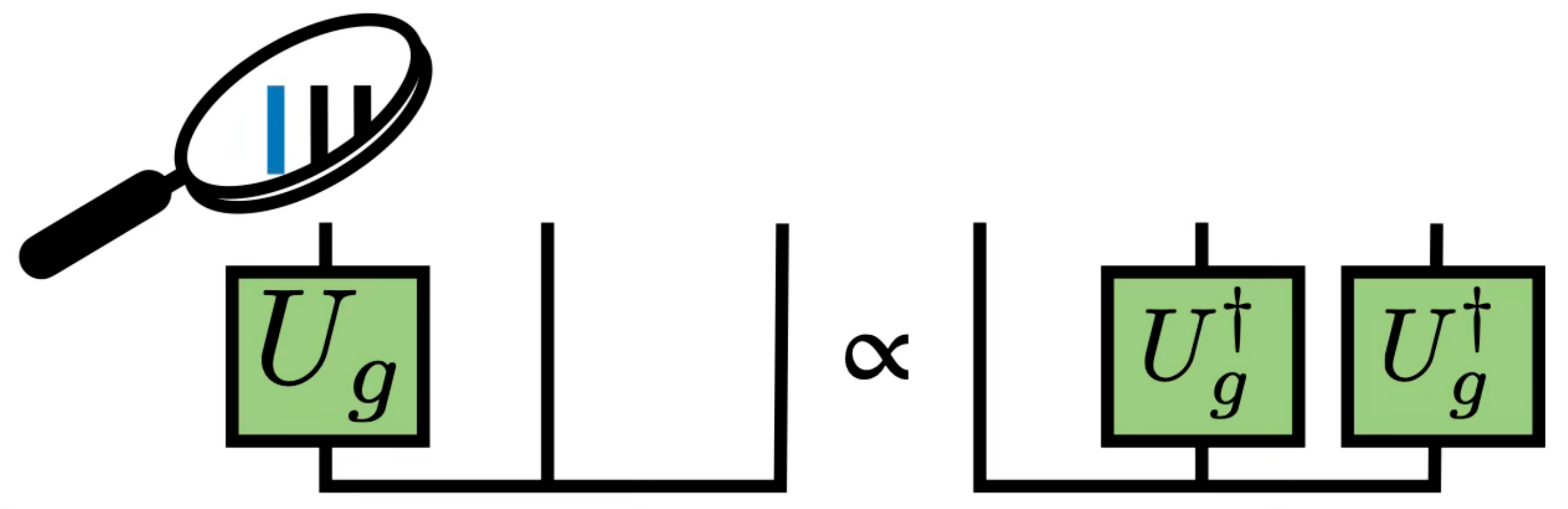}\hspace{0.06\linewidth}\includegraphics[width=0.45\linewidth]{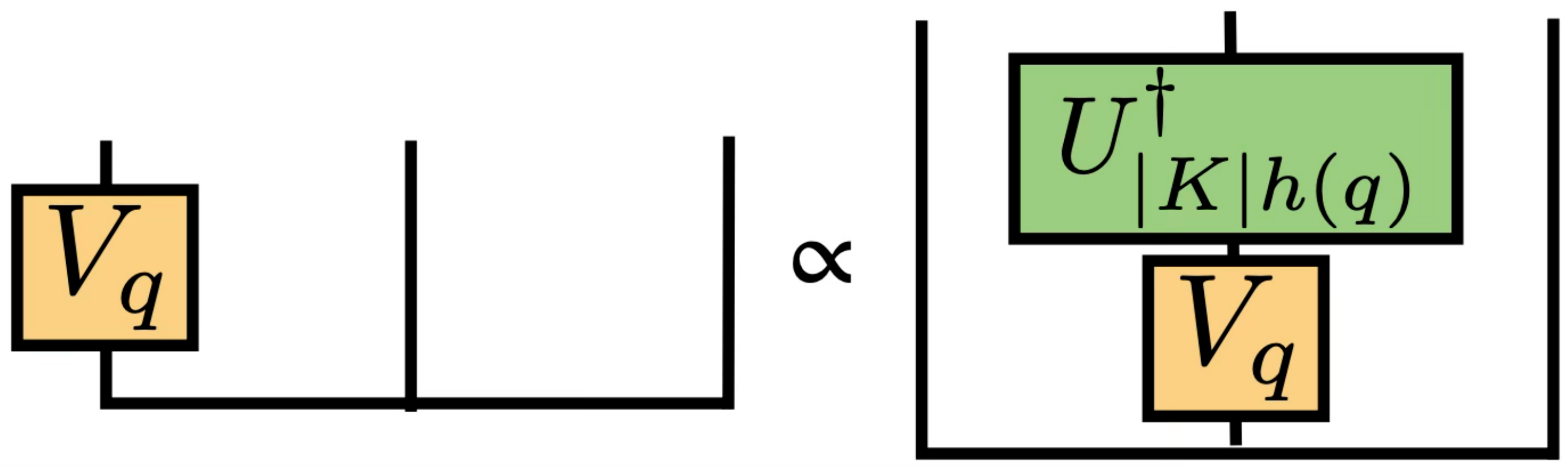}
\end{aligned}
\end{equation}
Thus, if all parties locally measure $\{P_{r,q}\}_{r,q\in S}$, yielding the outcome $(r_1,q_1,r_2,q_2,\dots,r_n,q_n)$, we can propagate the corresponding operators $V_{r,q}$ through the tensor according to the rules above and obtain the following post-measurement states.
\begin{equation*}
\begin{aligned}
\includegraphics[width=0.4\linewidth]{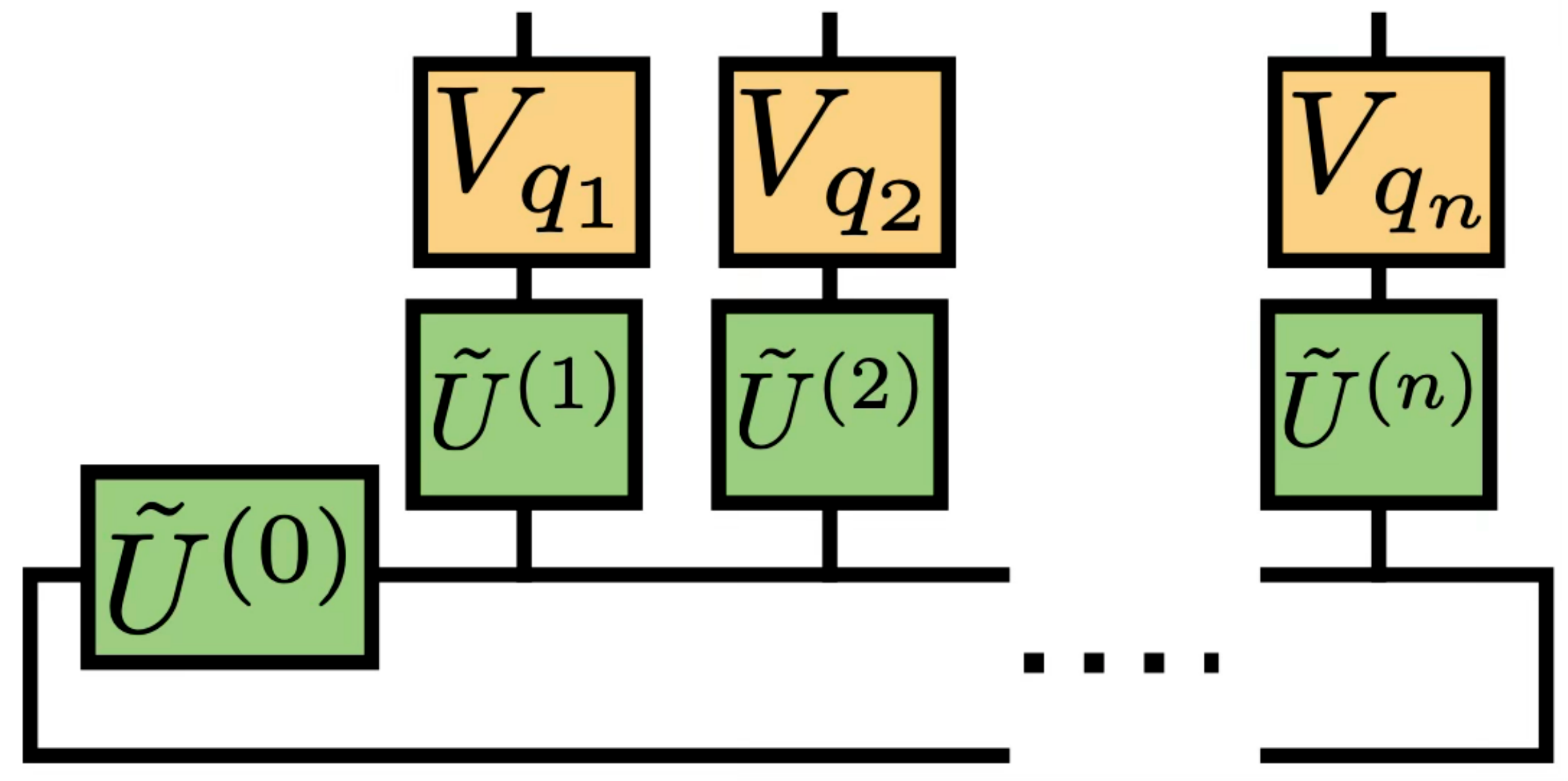} 
\end{aligned}
\end{equation*}
Here, $\tilde{U}^{(i)} = ( \prod_{j=1}^i U_{r_j} ^\dagger) U_{|K|h(q_i)}^\dagger \in \{U_g\}_{g\in G}$, and $\tilde{U}^{(0)} = \prod_{j=1}^n U_{r_j}$. If $\tilde{U}^{(0)} = \id $, then this output state is, up to local, quasi-commuting unitaries, equivalent to our desired state. Otherwise, we have the following Lemma, which we prove in Appendix~\ref{app:ZeroVectorOutcome}, ensuring that all other cases have vanishing probability.
\begin{lem}\label{lem:ZeroVectorOutcome}
    Let $\tilde{U}^{(0)} \ne \id$. Then the following equation holds.
    \begin{equation}\label{eq:ZeroVectorOutcome}
    \begin{aligned}
    \includegraphics[width=0.4\linewidth]{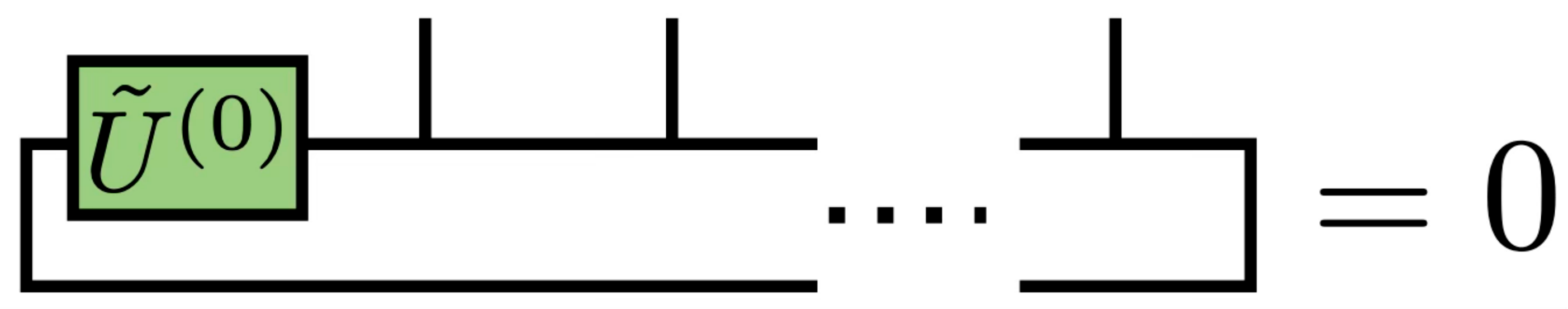}
    \end{aligned}
    \end{equation}
\end{lem}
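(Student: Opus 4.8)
The plan is to push the residual operator $\tilde U^{(0)}$ explicitly through the tensor network of the post-measurement state and reduce the claim to the statement that a nontrivial element of $\{U_g\}$ has vanishing ``loop trace.'' First I would apply the slide-through identities of Eq.~\eqref{eq:slidethrough} to every measurement-induced operator $V_{r_j,q_j}=U_{r_j}\tilde V_{q_j}$ from step~(iii). Each $\tilde V_{q_j}=\tilde Z^{q_j}\otimes\id\otimes\omega^\mu_{h(q_j)}$ is a product operator and, together with $U_{r_j}$, propagates through the fiducial tensors (including the extra symmetrizing Bell pair of Eq.~\eqref{eq:FiducialRep}). The bookkeeping should show that all these pieces either cancel against neighbouring ones or accumulate into the on-site corrections $\tilde U^{(i)}$ already listed, \emph{except} for one uncancelled copy of $\prod_{j=1}^n U_{r_j}$ that is pushed onto the virtual loop at the location where the ring of glued fiducial states closes. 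Since $\{U_g\}$ is a genuine unitary representation of the abelian group $G$, this residue is $\tilde U^{(0)}=U_R$ with $R=r_1r_2\cdots r_n\in G$, and the hypothesis of the lemma is precisely $U_R\neq\id$.

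Second, having isolated $U_R$ on the virtual loop, I would evaluate the resulting contraction. The representative state is a tensor product of a GHZ ring (blue legs) and a chain of Bell pairs (black legs), so by Eq.~\eqref{eq:Ug} the virtual action of $U_R$ factors into the cyclic block permutation $X_{|K|}^{k(R)}$ on the GHZ virtual index and, conjugation being trivial for abelian $G$, the part $\bar\omega^\mu_R\otimes\omega^\mu_R$ in every block on the Bell virtual indices. Closing the loop then produces, up to the inert contribution of the symmetrizing Bell pairs, a factor built from the GHZ ``trace'' $\mathrm{tr}[X_{|K|}^{k(R)}]$ and from projective-character factors $|\mathrm{tr}[\omega^\mu_R]|^2$.

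Third, I would show this factor vanishes exactly when $U_R\neq\id$, via a short case split. If $k(R)\neq 0$, then $X_{|K|}^{k(R)}$ is a fixed-point-free permutation of the $|K|$ GHZ sectors, so the GHZ ring $\sum_i\ket{i}^{\otimes n}$ with that permutation inserted at one bond is literally the zero vector, and hence so is the whole state. If $k(R)=0$, i.e.\ $R\in H$, then $U_R\neq\id$ forces $\omega^\mu_R$ to be non-scalar; since $H$ is abelian this means $R$ is a $\mu$-irregular element, so by projective-character orthogonality for abelian groups $\mathrm{tr}[\omega^\mu_R]=0$, which kills the loop factor. (Conversely, a $\mu$-regular $R$ gives $\omega^\mu_R\propto\id$ by Schur's lemma, hence $U_R=\id$ — precisely the correctable, non-vanishing branch of the protocol.) Either way Eq.~\eqref{eq:ZeroVectorOutcome} holds.

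The main obstacle is the first step: carrying out the slide-through bookkeeping carefully enough to be certain that the \emph{only} uncorrectable residue is a single copy of $U_R$ on one virtual bond — in particular that the $\tilde Z^{q_j}$ and $\omega^\mu_{h(q_j)}$ factors of the $\tilde V_{q_j}$, and the symmetrizing Bell pairs, genuinely telescope into on-site operators — and that closing the periodic loop reproduces exactly the trace/character quantity analysed above. Once the network is in that normal form, the vanishing is an elementary representation-theoretic argument. This is most cleanly done by fixing once and for all, as in Appendix~\ref{app:abelian}, the explicit representative tensor of Eq.~\eqref{fig::PhaseRepresentativeTensor} and verifying the propagation rules Eq.~\eqref{eq:slidethrough} on it directly.
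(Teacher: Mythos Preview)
Your steps 2--3 are correct and coincide with the paper's argument: both evaluate the MPS with a single $U_g$ inserted on the closed virtual bond of the representative tensor, observe that the resulting coefficient factors into a permutation piece $\langle i_1|X_{|K|}^{k(g)}|i_1\rangle$ and a projective-character piece $\mathrm{tr}[\omega^\mu_{h(g)}]$, and kill one or the other via the case split $k(g)\neq 0$ versus $k(g)=0$. The paper just does this as a direct matrix-element computation on $A^{i,j,k}=\sum_\gamma|i,j,\gamma\rangle\langle i,k,\gamma|$ rather than phrasing it in terms of GHZ traces and $\mu$-regularity, but the content is identical.

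Your step 1, however, is not part of the Lemma at all. The diagram in Eq.~\eqref{eq:ZeroVectorOutcome} is already the state \emph{after} the slide-through rules of Eq.~\eqref{eq:slidethrough} have been applied to every $V_{r_j,q_j}$; that telescoping is exactly how $\tilde U^{(0)}=\prod_j U_{r_j}$ was produced in the paragraph preceding the Lemma. So what you flag as the ``main obstacle'' is discharged before the Lemma is even stated, and the proof proper begins where your step 2 begins. One small imprecision worth fixing: you write ``$\bar\omega^\mu_R\otimes\omega^\mu_R$ in every block,'' but the block-$\alpha$ entry of $U_R$ carries $\omega^\mu_{h(R\oplus_G\alpha)}$, which reduces to $\omega^\mu_{h(R)}$ only once $k(R)=0$; this is harmless since your $k(R)\neq 0$ branch never touches the $\omega$ factors.
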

Thus, all post-measurement states are equivalent to the target state, up to on-site quasi-commuting unitaries, and can therefore be corrected (perhaps by using an additional auxiliary system as discussed above). In summary, we have successfully, deterministically transformed a product state to a state in the phase $(H,\mu)$, and thus, by the arguments above, shown that all phases trivialize for finite abelian groups.

\section{Non-Abelian Symmetries}\label{sec:NonAbelianSymmetries}
We now want to study the phase diagram under non-abelian symmetries. In the case of abelian symmetries, the transformation from any state to the trivial phase was straightforward. One could simply locally measure all physical sites in a common eigenbasis of the symmetry and thereby always deterministically reach a symmetric, pure product state. A direct consequence of this was that SPT phases trivialized. However, for non-abelian symmetries, such an eigenbasis does not need to exist, which is why we have to find a more sophisticated protocol to achieve the desired transformation. First, one notices the following.
\begin{res}\label{res:LOCC}
Let $G$ be a non-abelian group such that $\omega_g^\mu \otimes \omega^{*\mu}_g$ is non-commutative for all nontrivial projective $\mu$-irreps $\omega_g^\mu$~\footnote{As with linear representations, a projective representation is said to be irreducible if it has no nontrivial invariant subspaces. We refer to an irreducible
representation corresponding to the cohomology class $\mu$ as a $\mu$-irrep.}. Then, for any translationally invariant SPT state $\ket{\psi_n}$ associated with $\mu$, there is some $n_0$ such that $\forall n>n_0$ $\ket{\psi_n}$ cannot be deterministically converted via symmetric LOCC to a product state.
\end{res}

The proof is provided in Appendix~\ref{app:proofLOCC}. The idea is to consider the SPT state as a bipartite system, in which case one can see that the non-abelian symmetry ensures that, for any LOCC protocol, at least one outcome will remain entangled in the bipartition. The reason for that is the fact that the state, considered as a bipartite state, is locally supported on the \textit{{non-abelian} subspace}, which refers to the local subspace on which the non-abelian sub-representations in $U_g$ have support.

For instance, the dihedral group with eight elements, $D_8$, is non-abelian and has only one nontrivial projective irrep, $\omega_g$. This irrep has the property that $\omega_g \otimes \omega_g^*$ is non-abelian. Therefore, no nontrivial SPT state under $D_8$ can be deterministically transformed to a product state with symmetric LOCC.

Thus, local measurements with classical communication are generally not powerful enough to deterministically transform any non-abelian SPT state to the product state.
Nevertheless, one might hope to find measurements such that all post-measurement states belong to the trivial phase and can thus be transformed to the product state with a short-depth circuit. However, this too seems unreasonable as one would expect that at least some measurement outcomes yield (potentially long-range) entangled states (see the Result~\ref{res:NonAbelianSPT} below). Thus, instead of strictly deterministic transformations, we will consider asymptotically deterministic transformations in the following. 

\subsection{Join-And-Measure Protocols}
We will consider protocols that can be appropriately described as join-and-measure protocols. The general idea is as follows; first, each site performs a symmetric projective measurement yielding two qualitatively different types of outcome: either the local site is projected onto a one dimensional subspace, carrying a 1D irrep of $G$. The local site is then in a symmetric pure state that is disentangled from the rest of the state, which we will call a \emph{successful outcome}; or, the local site is projected onto the non-abelian subspace. We will call this type of outcome an \emph{unsuccessful outcome}. All local sites that obtain unsuccessful outcomes are in a (potentially long-range) entangled post-measurement state, which we refer to as the \emph{error state}.

To further disentangle the error states, one uses circuits to join neighboring sites of the error state together and then performs a projective symmetric measurement again, this time with respect to $U_g\otimes U_g$  (see Fig.~\ref{fig:non-abelian-Protocol}). The idea is that two copies of a non-abelian representation may contain abelian sub-representations. Therefore, while the error states are strictly-locally supported only on the non-abelian subspaces, they may have abelian support when we consider two sites together. In this case, a subsequent symmetric measurement on pairs of sites will, with some probability, yield again pure product states and a smaller error state. Repeating this procedure may eventually result in a state in the trivial phase.

However, there are several subtleties to this construction. As we are limited to $O(\text{polylog}(n))$-depth circuits, one needs to ensure that the distance between neighboring sites of error states does not become too large when considering the thermodynamic limit. Moreover, we need to ensure the measurements act on $O(1)$ sites. Finally, it is a priori not clear how tensor products of non-abelian irreps decompose and how large the local abelian support of the error states is after pairing sites. Nonetheless, for certain cases, one can show that these protocols terminate and that the probability of obtaining non-correctable error states vanishes in the thermodynamic limit. In these cases, join-and-measure protocols allow for asymptotically deterministic transformations to the trivial phase. In the following, we will delve into this by considering a specific example.

\subsection{Trivialization of SPT Phases under the Non-Abelian Group $D_8$}\label{sec:SPTtrivial}
Consider the non-abelian group $D_8$; the dihedral group of order eight describing the symmetries of a square. It is the smallest non-abelian group with nontrivial SPT phases. It has five linear irreps; four abelian irreps denoted by $\chi_g^{(i)\in \{0,\dots, 3\}}$, and one non-abelian irrep denoted by $U^{(4)}_g$. Later, it will become important, that $U_g^{(4)}\otimes U_g^{(4)}\cong\bigoplus_{i=0}^3 \chi_g^{(i)}$; that is, the tensor square of the non-abelian irrep completely decomposes into abelian irreps. Finally, we choose the following linear representation of $D_8$ 
\begin{equation}
    U_g=\omega_g\otimes\omega_g^*=[\chi^{(0)}_g]_{\Phi^+}\oplus [\chi^{(1)}_g]_{\Phi^-}\oplus [U^{(4)}_g]_{\Psi^\pm}.
\end{equation}
The subscript denotes the corresponding invariant subspaces of the Hilbert space. We can then prove the following result.
\begin{res} \label{res:NonAbelianSPT}
    Any SPT state under $D_8$ can be transformed asymptotically-deterministically to the trivial phase, and vice versa.
\end{res}

To show Result \ref{res:NonAbelianSPT}, consider the state
\begin{equation}
    \ket{{\rm SPT}_n}=\bigotimes_{i=1}^n\ket*{\Phi^+}_{R_i,L_{i+1}},
\end{equation}
where $R_n=L_1$. This state is symmetric with respect to $U_g$ and belongs to the SPT phase associated with $\omega_g$. We now wish to transform this state to the trivial state $\ket{\rm Triv}=\ket*{\Phi^+}^{\otimes n}$ via a join-and-measure protocol.

\begin{figure}[t]
        \includegraphics[width=1\linewidth]{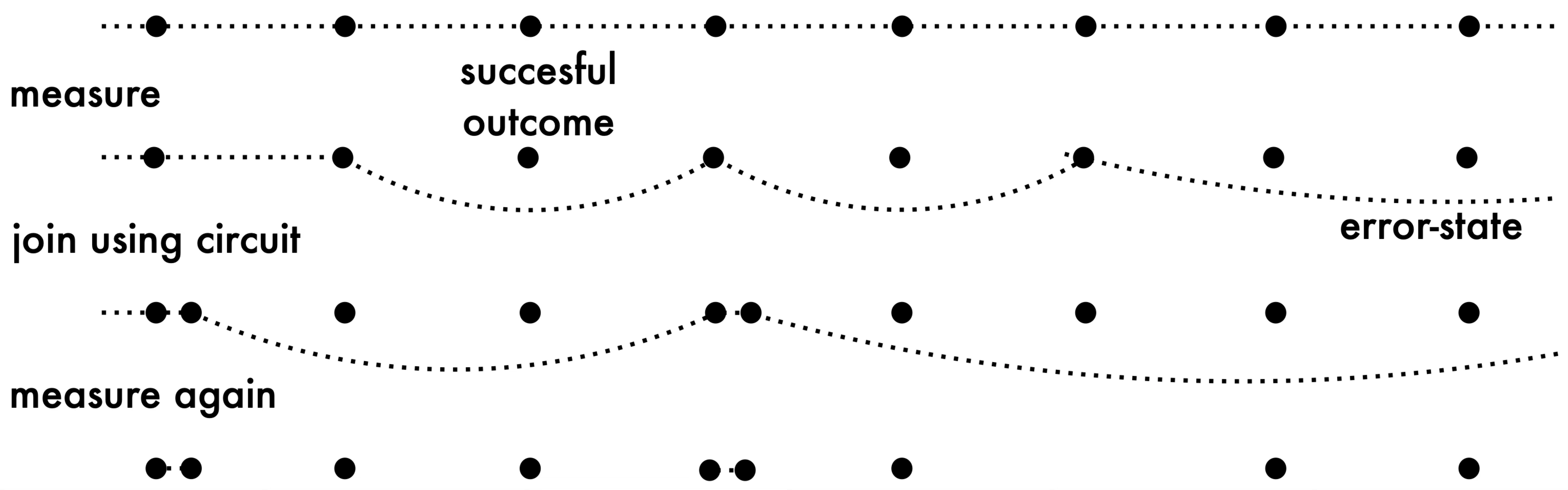}
        \caption{Join-and-measure protocols consist of the following steps: starting from an initial state, a local measurement is performed; a successful outcome results in a pure product state, and an unsuccessful measurement outcome leads to a long-range entangled state, the error-state. In the next step, a SWAP-circuit of a certain depth $l$ is used to bring nearest sites of the error state together. For instance, for $l=2$, the segment of the error-state in the second line is correctable, but the one in the third line is not.}
        \label{fig:non-abelian-Protocol}
\end{figure}

Let us define the following symmetric projective measurement $\{P_k\}_{k\in\{0,1,f\}}$, where
\begin{align}
    P_0 =& \ketbra*{\Phi^+}, \quad P_1 = \ketbra*{\Phi^-}, \label{eq:nonabelianSPTeqsStart}\\
    P_f =& \ketbra*{\Psi^+}+\ketbra*{\Psi^-},
\end{align}
where, clearly, $\qty[U_g,P_k]=0$ for all $k$. Moreover, the following relations hold;
\begin{align}
    \ket*{\Phi^-}&=\id\otimes Z\ket*{\Phi^+}, \\
    \id\otimes Z\ket*{\Phi^+}&=Z\otimes \id\ket*{\Phi^+}, \label{eq:trick}\\
    P_f \left(\id\otimes Z\right)&=-P_f \left(Z\otimes \id\right), \\
    P_f \left(\id\otimes Z \right)&= \left(\id\otimes Z \right) P_f,\\
    U_g \left(\id \otimes Z \right)& \propto \left(\id \otimes Z \right) U_g,\quad  \forall g\in G.  \label{eq:nonabelianSPTeqsEnd}
\end{align}
To begin our analysis, we need to know the probability distribution of the output strings. If each physical site performs the above projective measurement, one obtains an outcome $\vec{k}\in \qty{0,1,f}^{\times n}$. The outcomes $k=0,1$ are successful outcomes as locally the post-measurement state is a disentangled symmetric pure state. From Eqs.~\eqref{eq:nonabelianSPTeqsStart}-\eqref{eq:nonabelianSPTeqsEnd}, it is clear that it does not matter which of the successful outcomes any given site measures. All outcome strings with the same unsuccessful sites are (quasi-commuting) on-site unitary equivalent and therefore occur with the same probability and are equivalently correctable. Thus, it is sufficient to gather successful outcomes and consider the probability of output string $\vec{x}\in\{s,f\}$, where $x_i=s$ corresponds to $k_i\in\{0,1\}$.

Let us now consider the completely successful outcomes, i.e., $\vec{x}=s^{\times n}$ (equivalently, $\vec{k}\in\{0,1\}^{\times n}$). We have
\begin{align}
    p(s^{\times n})&=
     \sum_{\vec{k}\in \{0,1\}^N}\abs*{\bra*{\Phi^{(\pm)^{k_i}}}^{\otimes n}\bigotimes_i \id_{L_i}\otimes (Z^\dagger_{R_i})^{k_i}\ket*{{\rm SPT}}}^2 \\
    &=\frac{1}{4^{n-1}}\sum_{\vec{k}}\abs*{\bra*{\Phi^+}\prod_i\id_{L_1}\otimes (Z^\dagger_{R_1})^{k_i}\ket*{\Phi^+}}^2 \\
    &=\frac{1}{2^{n-1}}.
\end{align}
For the first equality, we consider the cumulative probability of the $2^n$ completely successful outputs. For the second equality, one repeatedly uses Eq.~\eqref{eq:trick}. For the last equality, one uses that the overlap in the sum is one if $\prod_i\id_{L_1}\otimes (Z^\dagger_{R_1})^{k_i}=\id$ and zero otherwise.

In the event of at least one unsuccessful outcome, Eq.\eqref{eq:nonabelianSPTeqsStart}-\eqref{eq:nonabelianSPTeqsEnd} ensure the probability $p(\vec{x})$ can at most depend on the number of unsuccessful outcomes but not on their location. A similar calculation shows that the probability of obtaining $\abs{f}$ unsuccessful outcomes is given by
\begin{equation}
    p(\vec{x})=2^{n-\abs{f}}\frac{1}{4^{n-\abs{f}}}\bra*{{\rm SPT}_\abs{f}} P_f^{\otimes \abs{f}} \ket*{{\rm SPT}_\abs{f}}.
\end{equation}
By evaluating the expectation value above, one can show that it evaluates to zero whenever $\abs{f}$ is odd and $1/2^{\abs{f}-1}$ otherwise. The proof can be found in Appendix~\ref{app:evenLemma}. It uses the fact that relations similar to the following hold;
\begin{equation}\label{eq:P3network}
\begin{aligned}
    \includegraphics[width=0.48\linewidth]{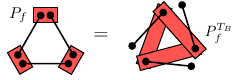}
\end{aligned}.
\end{equation}
Here, $T_B$ denotes the partial transpose on the second system. To summarize, one finds that
\begin{align}\label{eq:probDistr1}
    p(\vec{x})=\begin{cases}
        \frac{1}{2^{n-1}} &\abs{f}\text{  even},\\
        0 & \abs{f}\text{  odd}.
    \end{cases}
\end{align}

Now equipped with the probability distribution on the outputs, we consider which outcomes with at least one unsuccessful outcome can be corrected. Any post-measurement states can always be brought to the form
\begin{equation}
    \ket*{\psi(\vec{x})}\propto \ket*{\Phi^+}^{\otimes n-\abs{f}}\otimes P_f^{\otimes \abs{f}} \ket*{{\rm SPT}_\abs{f}}.
\end{equation}
To transform the error-state, $\ket{\rm err}=P_f^{\otimes \abs{f}} \ket*{{\rm SPT}_\abs{f}}$, to a product state, one notices that it can be written as
\begin{equation}\label{eq:errorstate}
    \ket{\rm err}\propto \ket{0110}^{\abs{f}/2}+\ket{1001}^{\abs{f}/2},
\end{equation}
which is a GHZ-type state, and where we have paired up neighboring sites. Physically, this may be achieved by applying a quantum circuit, with a depth depending on the maximum distance between sites of the error-state. In order to break the GHZ entanglement, one notices that there exist states $\eta_0\propto \ket{0110}+\ket{1001}$, and $\eta_1\propto \ket{0110}-\ket{1001}$ that are symmetric with respect to $U_g^{\otimes 2}$, and $\ket{\rm err}\propto \ket{\eta_0+\eta_1}^{\abs{f}/2}+\ket{\eta_0-\eta_1}^{\abs{f}/2}$. Thus, we can measure the error state on pairs of sites in the basis $\{\ketbra{\eta_0},\ketbra{\eta_1},P_{\eta_1\eta_2}^\perp\}$ to obtain a symmetric pure product state, which can be locally corrected to the state $\ket{\Phi^+}$. Thus, for this example, the protocol will terminate after two rounds, using measurements on $O(1)$ systems. Note, this is a consequence of the fact that the non-abelian linear irreps of $D_8$ become abelian under tensor powers. 

Finally, we need to ensure that the circuits we use to pair up the sites have $O(\text{polylog}(n))$-depth. That is, it may happen that the distances between sites of the error state become too large to be correctable with short-depth circuits. However, in Appendix~\ref{app:probError}, we show that, in the limit $n\rightarrow\infty$, the probability of obtaining error-states that require more than log$(n)$-depth circuits to join nearest neighbors tends to zero. We can thus transform any SPT state under $D_8$ asymptotically-deterministically to the trivial phase.

Note, that this also implies the reverse transformation is possible as one can prepare two copies of the SPT state (which belongs to the trivial phase and can therefore be reached via short-depth circuits from a product state) and then transforming one copy to a product state.

Finally, we wish to emphasize that the protocol above relied on the fact that tensor-squares of all non-abelian irreps of $D_8$ are completely decomposable into abelian irreps. This feature is not unique to the group $D_8$.

\subsection{Transforming GHZ States to the Trivial State under $D_8$}\label{sec:GHZproduct}
Finally, let us consider non-normal states and show that there is an asymptotically deterministic transformation to the trivial phase. Specifically, we show that a similar protocol as above can be used to transform the GHZ state to the product state under the symmetry $D_8$. As a representation, we choose the regular representation of $D_8$, i.e., the eight dimensional representation given by
\begin{multline}
    U_g=[\chi^{(0)}]_{\varphi_0}\oplus [\chi^{(1)}]_{\varphi_1}\oplus [\chi^{(2)}]_{\varphi_2}\oplus [\chi^{(3)}]_{\varphi_3}\\ \oplus [U^{(4)} \oplus U^{(4)}]_{P_f},
\end{multline}
where $P_f$ denotes the projector onto the non-abelian subspace and
\begin{equation}
    \ket*{\varphi_i}=\frac{1}{2\sqrt{2}}\sum_{g\in D_8} e^{-i\varphi_g^i}\ket{g}=\tilde{Z}^i\ket*{\varphi_0},
\end{equation}
with $\ket*{\varphi_0}=1/\sqrt{|G|} \sum_{g\in G} \ket{g}$, $\{e^{-i\varphi_g^i}\}$ are 1D irreps of $D_8$, and
\begin{equation}
    \tilde{Z}^i={\rm diag}(\{e^{-i\varphi_g^i}\}).
\end{equation}
As an initial state, we consider the GHZ state on $n$ particles
\begin{align}
    \ket*{{\rm GHZ}_n^8} =& \frac{1}{2\sqrt{2}}(\ket{00\dots 0}+\dots +\ket{77\dots 7}).
\end{align}
It is easily verified that this state belongs to the symmetry breaking phase $(H,\mu)=(1,0)$ (see Appendix~\ref{app:noninjectivesym}). 

A local symmetric projective measurement is then given by $\{P_k\}_{k\in\{0,\dots,3,f\}}=\{\tilde{Z}^i\ketbra*{\varphi_0}(\tilde{Z}^\dagger)^i\}_{i=0,\dots, 3}\cup \{P_f\}$. The first four outcomes are successful outcomes and the fifth outcome is the only unsuccessful outcome. 
For the state and the measurement defined above, the following relations hold;
\begin{align}
    (\tilde{Z}^i_k)^\dagger \ket*{{\rm GHZ}_n^8} &=(\tilde{Z}^i_j)^\dagger\ket*{{\rm GHZ}_{n}^8},\quad \forall\, j,k\in [n], \label{eq:relation1} \\
    \braket*{\varphi_0}{{\rm GHZ}_{n}^8}&=\frac{1}{2\sqrt{2}}\ket*{{\rm GHZ}_{n-1}^8}, \label{eq:relation2} \\
    [P_{f},(\tilde{Z}^i)^\dagger]&=0. \label{eq:relation3}
\end{align}
As before, these relations ensure that it is sufficient to coarse-grain the outcomes and only consider whether outcomes are successful or unsuccessful. That is, consider the probability distribution for $\vec{x}\in\{s,f\}^{\times n}$.

From the relations above one can deduce that, after the first round of measurements, the probability distribution of outcomes is again flat on even strings; namely, one finds
\begin{align}\label{eq:probDistr}
    p(\vec{x})=\begin{cases}
        \frac{1}{2^{n-1}} &\abs{f}\text{  even},\\
        0 & \abs{f}\text{  odd},
    \end{cases}
\end{align}
(see Appendix~\ref{app:GHZ} for a proof). Moreover, one can show that all post-measurement states are, up to quasi-commuting local unitary corrections, of the form
\begin{equation}
    \ket{\psi(\vec{x})} \propto \ket{\varphi_0}^{\otimes n-\abs{f}} \otimes P_f^{\otimes \abs{f}}\ket*{{\rm GHZ}_{\abs{f}}^8}.
\end{equation}
To transform the error-state to a product state, one again uses the fact that the subspace spanned by $P_f\otimes P_f$ can be completely decomposed into 1D irreps. Therefore, after bringing neighboring sites together by using a log$(n)$-depth circuit, one can perform a measurement that with unit probability disentangles the error state. Once again, one needs to ensure that the probability of obtaining non-correctable error configurations vanishes in the thermodynamic limit; however, the same reasoning as in the case of SPT states also applies here.

\section{Discussion and Conclusions}
We have studied how symmetry-preserving measurements with feedforward alter the phase classification of MPS in the presence of global on-site symmetries. Let us stress here again that this classification in natural in the context of quantum simulation. Moreover, this work is a natural extension of the results in Ref.~\cite{Piroli2021, malz2023preparation} to include symmetry constraints. 

We completely solve this classification for finite abelian groups, showing that the phase diagram trivializes. Therefore, in contrast to Ref.~\cite{Schuch2011_Phases2,Chen2011_Phases1}, adding symmetries does not alter the phase classification of Ref.~\cite{Piroli2021}. In particular, for any pair of phases, we construct a protocol that uses finite-depth symmetric circuits, a constant number of auxiliary systems, and two rounds of symmetric measurements to transform any state in the initial phase to a state in the target phase. The non-abelian case is more challenging as one cannot deterministically transform any state on-site to a product state. Nonetheless, by providing explicit examples, we have shown that with $\log(n)$-depth circuits and two round of symmetric measurements, non-abelian SPT phases can trivialize. 

Looking forward, it would be interesting to investigate to what extent join-and-measure protocols can be used to asymptotically-deterministically transform between nontrivial phases of groups that have a more intricate structure than the examples we have considered. Moreover, it would be interesting to ascertain lower-bounds on the cumulative circuit depth required to asymptotically trivialize non-abelian phases. Finally, the phase classification of matter with symmetries has been extended to higher dimensions (see, e.g., Ref.~\cite{Chen2013_HigherDimSPT}), and so it would be interesting to also extend this work to higher dimensions.

\ \\

\section{Acknowledgments}
We thank J. Ignacio Cirac, Norbert Schuch, and Alex Turzillo for fruitful discussions. DG, TK and BK acknowledge financial support from the Austrian Science Fund (FWF) through the grants SFB BeyondC (Grant No. F7107-N38), and P 32273-N27 (Stand-Alone Project). Furthermore, we acknowledge the BMW endowment fund. This publication has received funding under Horizon Europe programme HORIZON-CL4-2022-QUANTUM-02-SGA via the project 101113690 (PASQuanS2.1). GS is supported by the Alexander von Humboldt Foundation.

\bibliography{bibliography}

\begin{thebibliography}{60}%
\makeatletter
\providecommand \@ifxundefined [1]{%
 \@ifx{#1\undefined}
}%
\providecommand \@ifnum [1]{%
 \ifnum #1\expandafter \@firstoftwo
 \else \expandafter \@secondoftwo
 \fi
}%
\providecommand \@ifx [1]{%
 \ifx #1\expandafter \@firstoftwo
 \else \expandafter \@secondoftwo
 \fi
}%
\providecommand \natexlab [1]{#1}%
\providecommand \enquote  [1]{``#1''}%
\providecommand \bibnamefont  [1]{#1}%
\providecommand \bibfnamefont [1]{#1}%
\providecommand \citenamefont [1]{#1}%
\providecommand \href@noop [0]{\@secondoftwo}%
\providecommand \href [0]{\begingroup \@sanitize@url \@href}%
\providecommand \@href[1]{\@@startlink{#1}\@@href}%
\providecommand \@@href[1]{\endgroup#1\@@endlink}%
\providecommand \@sanitize@url [0]{\catcode `\\12\catcode `\$12\catcode
  `\&12\catcode `\#12\catcode `\^12\catcode `\_12\catcode `\%12\relax}%
\providecommand \@@startlink[1]{}%
\providecommand \@@endlink[0]{}%
\providecommand \url  [0]{\begingroup\@sanitize@url \@url }%
\providecommand \@url [1]{\endgroup\@href {#1}{\urlprefix }}%
\providecommand \urlprefix  [0]{URL }%
\providecommand \Eprint [0]{\href }%
\providecommand \doibase [0]{https://doi.org/}%
\providecommand \selectlanguage [0]{\@gobble}%
\providecommand \bibinfo  [0]{\@secondoftwo}%
\providecommand \bibfield  [0]{\@secondoftwo}%
\providecommand \translation [1]{[#1]}%
\providecommand \BibitemOpen [0]{}%
\providecommand \bibitemStop [0]{}%
\providecommand \bibitemNoStop [0]{.\EOS\space}%
\providecommand \EOS [0]{\spacefactor3000\relax}%
\providecommand \BibitemShut  [1]{\csname bibitem#1\endcsname}%
\let\auto@bib@innerbib\@empty
\bibitem [{\citenamefont {Sachdev}(2011)}]{sachdev2011quantum}%
  \BibitemOpen
  \bibfield  {author} {\bibinfo {author} {\bibfnamefont {S.}~\bibnamefont
  {Sachdev}},\ }\href {https://doi.org/10.1017/CBO9780511973765} {\emph
  {\bibinfo {title} {Quantum Phase Transitions}}},\ \bibinfo {edition} {2nd}\
  ed.\ (\bibinfo  {publisher} {Cambridge University Press},\ \bibinfo {year}
  {2011})\BibitemShut {NoStop}%
\bibitem [{\citenamefont {Wen}(2017)}]{wen2017colloquium}%
  \BibitemOpen
  \bibfield  {author} {\bibinfo {author} {\bibfnamefont {X.-G.}\ \bibnamefont
  {Wen}},\ }\bibfield  {title} {\bibinfo {title} {Colloquium: {Z}oo of
  quantum-topological phases of matter},\ }\href
  {https://doi.org/10.1103/RevModPhys.89.041004} {\bibfield  {journal}
  {\bibinfo  {journal} {Rev. Mod. Phys.}\ }\textbf {\bibinfo {volume} {89}},\
  \bibinfo {pages} {041004} (\bibinfo {year} {2017})}\BibitemShut {NoStop}%
\bibitem [{\citenamefont {Fannes}\ \emph {et~al.}(1992)\citenamefont {Fannes},
  \citenamefont {Nachtergaele},\ and\ \citenamefont {Werner}}]{Fannes1992}%
  \BibitemOpen
  \bibfield  {author} {\bibinfo {author} {\bibfnamefont {M.}~\bibnamefont
  {Fannes}}, \bibinfo {author} {\bibfnamefont {B.}~\bibnamefont
  {Nachtergaele}},\ and\ \bibinfo {author} {\bibfnamefont {R.~F.}\ \bibnamefont
  {Werner}},\ }\bibfield  {title} {\bibinfo {title} {{Finitely correlated
  states on quantum spin chains}},\ }\href
  {https://doi.org/https://doi.org/10.1007/BF02099178} {\bibfield  {journal}
  {\bibinfo  {journal} {Commun. Math. Phys.}\ }\textbf {\bibinfo {volume}
  {144}},\ \bibinfo {pages} {443} (\bibinfo {year} {1992})}\BibitemShut
  {NoStop}%
\bibitem [{\citenamefont {Hastings}(2007)}]{hastings2007area}%
  \BibitemOpen
  \bibfield  {author} {\bibinfo {author} {\bibfnamefont {M.~B.}\ \bibnamefont
  {Hastings}},\ }\bibfield  {title} {\bibinfo {title} {An area law for
  one-dimensional quantum systems},\ }\href
  {https://doi.org/10.1088/1742-5468/2007/08/P08024} {\bibfield  {journal}
  {\bibinfo  {journal} {J. Stat. Mech.: Theory Exp.}\ }\textbf {\bibinfo
  {volume} {2007}}\bibinfo  {number} { (08)},\ \bibinfo {pages}
  {P08024}}\BibitemShut {NoStop}%
\bibitem [{\citenamefont {Hastings}(2006)}]{hastings2006solving}%
  \BibitemOpen
\bibfield  {number} {  }\bibfield  {author} {\bibinfo {author} {\bibfnamefont
  {M.~B.}\ \bibnamefont {Hastings}},\ }\bibfield  {title} {\bibinfo {title}
  {Solving gapped hamiltonians locally},\ }\href
  {https://doi.org/10.1103/PhysRevB.73.085115} {\bibfield  {journal} {\bibinfo
  {journal} {Phys. Rev. B}\ }\textbf {\bibinfo {volume} {73}},\ \bibinfo
  {pages} {085115} (\bibinfo {year} {2006})}\BibitemShut {NoStop}%
\bibitem [{\citenamefont {Molnar}\ \emph {et~al.}(2015)\citenamefont {Molnar},
  \citenamefont {Schuch}, \citenamefont {Verstraete},\ and\ \citenamefont
  {Cirac}}]{molnar2015approximating}%
  \BibitemOpen
  \bibfield  {author} {\bibinfo {author} {\bibfnamefont {A.}~\bibnamefont
  {Molnar}}, \bibinfo {author} {\bibfnamefont {N.}~\bibnamefont {Schuch}},
  \bibinfo {author} {\bibfnamefont {F.}~\bibnamefont {Verstraete}},\ and\
  \bibinfo {author} {\bibfnamefont {J.~I.}\ \bibnamefont {Cirac}},\ }\bibfield
  {title} {\bibinfo {title} {Approximating {G}ibbs states of local hamiltonians
  efficiently with projected entangled pair states},\ }\href
  {https://doi.org/10.1103/PhysRevB.91.045138} {\bibfield  {journal} {\bibinfo
  {journal} {Phys. Rev. B}\ }\textbf {\bibinfo {volume} {91}},\ \bibinfo
  {pages} {045138} (\bibinfo {year} {2015})}\BibitemShut {NoStop}%
\bibitem [{\citenamefont {Verstraete}\ and\ \citenamefont
  {Cirac}(2004)}]{verstraete2004valencebond}%
  \BibitemOpen
  \bibfield  {author} {\bibinfo {author} {\bibfnamefont {F.}~\bibnamefont
  {Verstraete}}\ and\ \bibinfo {author} {\bibfnamefont {J.~I.}\ \bibnamefont
  {Cirac}},\ }\bibfield  {title} {\bibinfo {title} {Valence-bond states for
  quantum computation},\ }\href {https://doi.org/10.1103/PhysRevA.70.060302}
  {\bibfield  {journal} {\bibinfo  {journal} {Phys. Rev. A}\ }\textbf {\bibinfo
  {volume} {70}},\ \bibinfo {pages} {060302} (\bibinfo {year}
  {2004})}\BibitemShut {NoStop}%
\bibitem [{\citenamefont {Buerschaper}\ \emph {et~al.}(2009)\citenamefont
  {Buerschaper}, \citenamefont {Aguado},\ and\ \citenamefont
  {Vidal}}]{buerschaper2009explicit}%
  \BibitemOpen
  \bibfield  {author} {\bibinfo {author} {\bibfnamefont {O.}~\bibnamefont
  {Buerschaper}}, \bibinfo {author} {\bibfnamefont {M.}~\bibnamefont
  {Aguado}},\ and\ \bibinfo {author} {\bibfnamefont {G.}~\bibnamefont
  {Vidal}},\ }\bibfield  {title} {\bibinfo {title} {Explicit tensor network
  representation for the ground states of string-net models},\ }\href
  {https://doi.org/PhysRevB.79.085119} {\bibfield  {journal} {\bibinfo
  {journal} {Phys. Rev. B}\ }\textbf {\bibinfo {volume} {79}},\ \bibinfo
  {pages} {085119} (\bibinfo {year} {2009})}\BibitemShut {NoStop}%
\bibitem [{\citenamefont {Cirac}\ \emph {et~al.}(2021)\citenamefont {Cirac},
  \citenamefont {Perez-Garcia}, \citenamefont {Schuch},\ and\ \citenamefont
  {Verstraete}}]{cirac2021matrix}%
  \BibitemOpen
  \bibfield  {author} {\bibinfo {author} {\bibfnamefont {J.~I.}\ \bibnamefont
  {Cirac}}, \bibinfo {author} {\bibfnamefont {D.}~\bibnamefont {Perez-Garcia}},
  \bibinfo {author} {\bibfnamefont {N.}~\bibnamefont {Schuch}},\ and\ \bibinfo
  {author} {\bibfnamefont {F.}~\bibnamefont {Verstraete}},\ }\bibfield  {title}
  {\bibinfo {title} {Matrix product states and projected entangled pair states:
  Concepts, symmetries, theorems},\ }\href
  {https://doi.org/10.1103/RevModPhys.93.045003} {\bibfield  {journal}
  {\bibinfo  {journal} {Rev. Mod. Phys.}\ }\textbf {\bibinfo {volume} {93}},\
  \bibinfo {pages} {045003} (\bibinfo {year} {2021})}\BibitemShut {NoStop}%
\bibitem [{\citenamefont {Chen}\ \emph {et~al.}(2011)\citenamefont {Chen},
  \citenamefont {Gu},\ and\ \citenamefont {Wen}}]{Chen2011_Phases1}%
  \BibitemOpen
  \bibfield  {author} {\bibinfo {author} {\bibfnamefont {X.}~\bibnamefont
  {Chen}}, \bibinfo {author} {\bibfnamefont {Z.-C.}\ \bibnamefont {Gu}},\ and\
  \bibinfo {author} {\bibfnamefont {X.-G.}\ \bibnamefont {Wen}},\ }\bibfield
  {title} {\bibinfo {title} {Classification of gapped symmetric phases in
  one-dimensional spin systems},\ }\href
  {https://doi.org/10.1103/PhysRevB.83.035107} {\bibfield  {journal} {\bibinfo
  {journal} {Phys. Rev. B}\ }\textbf {\bibinfo {volume} {83}},\ \bibinfo
  {pages} {035107} (\bibinfo {year} {2011})}\BibitemShut {NoStop}%
\bibitem [{\citenamefont {Schuch}\ \emph {et~al.}(2011)\citenamefont {Schuch},
  \citenamefont {P\'erez-Garc\'{\i}a},\ and\ \citenamefont
  {Cirac}}]{Schuch2011_Phases2}%
  \BibitemOpen
  \bibfield  {author} {\bibinfo {author} {\bibfnamefont {N.}~\bibnamefont
  {Schuch}}, \bibinfo {author} {\bibfnamefont {D.}~\bibnamefont
  {P\'erez-Garc\'{\i}a}},\ and\ \bibinfo {author} {\bibfnamefont
  {I.}~\bibnamefont {Cirac}},\ }\bibfield  {title} {\bibinfo {title}
  {Classifying quantum phases using matrix product states and projected
  entangled pair states},\ }\href {https://doi.org/10.1103/PhysRevB.84.165139}
  {\bibfield  {journal} {\bibinfo  {journal} {Phys. Rev. B}\ }\textbf {\bibinfo
  {volume} {84}},\ \bibinfo {pages} {165139} (\bibinfo {year}
  {2011})}\BibitemShut {NoStop}%
\bibitem [{\citenamefont {Affleck}\ \emph {et~al.}(1987)\citenamefont
  {Affleck}, \citenamefont {Kennedy}, \citenamefont {Lieb},\ and\ \citenamefont
  {Tasaki}}]{affleck1987rigorous}%
  \BibitemOpen
  \bibfield  {author} {\bibinfo {author} {\bibfnamefont {I.}~\bibnamefont
  {Affleck}}, \bibinfo {author} {\bibfnamefont {T.}~\bibnamefont {Kennedy}},
  \bibinfo {author} {\bibfnamefont {E.~H.}\ \bibnamefont {Lieb}},\ and\
  \bibinfo {author} {\bibfnamefont {H.}~\bibnamefont {Tasaki}},\ }\bibfield
  {title} {\bibinfo {title} {Rigorous results on valence-bond ground states in
  antiferromagnets},\ }\href {https://doi.org/10.1103/PhysRevLett.59.799}
  {\bibfield  {journal} {\bibinfo  {journal} {Phys. Rev. Lett.}\ }\textbf
  {\bibinfo {volume} {59}},\ \bibinfo {pages} {799} (\bibinfo {year}
  {1987})}\BibitemShut {NoStop}%
\bibitem [{\citenamefont {Li}\ and\ \citenamefont
  {Haldane}(2008)}]{li2008entanglement}%
  \BibitemOpen
  \bibfield  {author} {\bibinfo {author} {\bibfnamefont {H.}~\bibnamefont
  {Li}}\ and\ \bibinfo {author} {\bibfnamefont {F.~D.~M.}\ \bibnamefont
  {Haldane}},\ }\bibfield  {title} {\bibinfo {title} {Entanglement spectrum as
  a generalization of entanglement entropy: Identification of topological order
  in non-abelian fractional quantum hall effect states},\ }\href
  {https://doi.org/10.1103/PhysRevLett.101.010504} {\bibfield  {journal}
  {\bibinfo  {journal} {Phys. Rev. Lett.}\ }\textbf {\bibinfo {volume} {101}},\
  \bibinfo {pages} {010504} (\bibinfo {year} {2008})}\BibitemShut {NoStop}%
\bibitem [{\citenamefont {Pollmann}\ \emph {et~al.}(2010)\citenamefont
  {Pollmann}, \citenamefont {Turner}, \citenamefont {Berg},\ and\ \citenamefont
  {Oshikawa}}]{pollmann2010entanglement}%
  \BibitemOpen
  \bibfield  {author} {\bibinfo {author} {\bibfnamefont {F.}~\bibnamefont
  {Pollmann}}, \bibinfo {author} {\bibfnamefont {A.~M.}\ \bibnamefont
  {Turner}}, \bibinfo {author} {\bibfnamefont {E.}~\bibnamefont {Berg}},\ and\
  \bibinfo {author} {\bibfnamefont {M.}~\bibnamefont {Oshikawa}},\ }\bibfield
  {title} {\bibinfo {title} {Entanglement spectrum of a topological phase in
  one dimension},\ }\href {https://doi.org/10.1103/PhysRevB.81.064439}
  {\bibfield  {journal} {\bibinfo  {journal} {Phys. Rev. B}\ }\textbf {\bibinfo
  {volume} {81}},\ \bibinfo {pages} {064439} (\bibinfo {year}
  {2010})}\BibitemShut {NoStop}%
\bibitem [{\citenamefont {Chen}\ \emph {et~al.}(2010)\citenamefont {Chen},
  \citenamefont {Gu},\ and\ \citenamefont {Wen}}]{chen2010local}%
  \BibitemOpen
  \bibfield  {author} {\bibinfo {author} {\bibfnamefont {X.}~\bibnamefont
  {Chen}}, \bibinfo {author} {\bibfnamefont {Z.-C.}\ \bibnamefont {Gu}},\ and\
  \bibinfo {author} {\bibfnamefont {X.-G.}\ \bibnamefont {Wen}},\ }\bibfield
  {title} {\bibinfo {title} {Local unitary transformation, long-range quantum
  entanglement, wave function renormalization, and topological order},\ }\href
  {https://doi.org/10.1103/PhysRevB.82.155138} {\bibfield  {journal} {\bibinfo
  {journal} {Phys. Rev. B}\ }\textbf {\bibinfo {volume} {82}},\ \bibinfo
  {pages} {155138} (\bibinfo {year} {2010})}\BibitemShut {NoStop}%
\bibitem [{\citenamefont {Huang}\ and\ \citenamefont
  {Chen}(2015)}]{Huang2015_symandlindepthcircuits}%
  \BibitemOpen
  \bibfield  {author} {\bibinfo {author} {\bibfnamefont {Y.}~\bibnamefont
  {Huang}}\ and\ \bibinfo {author} {\bibfnamefont {X.}~\bibnamefont {Chen}},\
  }\bibfield  {title} {\bibinfo {title} {Quantum circuit complexity of
  one-dimensional topological phases},\ }\href
  {https://doi.org/10.1103/PhysRevB.91.195143} {\bibfield  {journal} {\bibinfo
  {journal} {Phys. Rev. B}\ }\textbf {\bibinfo {volume} {91}},\ \bibinfo
  {pages} {195143} (\bibinfo {year} {2015})}\BibitemShut {NoStop}%
\bibitem [{\citenamefont {Nielsen}\ and\ \citenamefont
  {Chuang}(2010)}]{NielsenAndChuang}%
  \BibitemOpen
  \bibfield  {author} {\bibinfo {author} {\bibfnamefont {M.}~\bibnamefont
  {Nielsen}}\ and\ \bibinfo {author} {\bibfnamefont {I.}~\bibnamefont
  {Chuang}},\ }\href@noop {} {\emph {\bibinfo {title} {Thm 8.2, Quantum
  computation and quantum information}}}\ (\bibinfo  {publisher} {Cambridge
  University Press},\ \bibinfo {address} {Cambridge New York},\ \bibinfo {year}
  {2010})\BibitemShut {NoStop}%
\bibitem [{\citenamefont {Raussendorf}\ and\ \citenamefont
  {Briegel}(2001)}]{raussendorf2001oneway}%
  \BibitemOpen
  \bibfield  {author} {\bibinfo {author} {\bibfnamefont {R.}~\bibnamefont
  {Raussendorf}}\ and\ \bibinfo {author} {\bibfnamefont {H.~J.}\ \bibnamefont
  {Briegel}},\ }\bibfield  {title} {\bibinfo {title} {A one-way quantum
  computer},\ }\href {https://doi.org/10.1103/PhysRevLett.86.5188} {\bibfield
  {journal} {\bibinfo  {journal} {Phys. Rev. Lett.}\ }\textbf {\bibinfo
  {volume} {86}},\ \bibinfo {pages} {5188} (\bibinfo {year}
  {2001})}\BibitemShut {NoStop}%
\bibitem [{\citenamefont {Stephen}\ \emph
  {et~al.}(2017{\natexlab{a}})\citenamefont {Stephen}, \citenamefont {Wang},
  \citenamefont {Prakash}, \citenamefont {Wei},\ and\ \citenamefont
  {Raussendorf}}]{stephen2017computational}%
  \BibitemOpen
  \bibfield  {author} {\bibinfo {author} {\bibfnamefont {D.~T.}\ \bibnamefont
  {Stephen}}, \bibinfo {author} {\bibfnamefont {D.-S.}\ \bibnamefont {Wang}},
  \bibinfo {author} {\bibfnamefont {A.}~\bibnamefont {Prakash}}, \bibinfo
  {author} {\bibfnamefont {T.-C.}\ \bibnamefont {Wei}},\ and\ \bibinfo {author}
  {\bibfnamefont {R.}~\bibnamefont {Raussendorf}},\ }\bibfield  {title}
  {\bibinfo {title} {Computational power of symmetry-protected topological
  phases},\ }\href {https://doi.org/10.1103/PhysRevLett.119.010504} {\bibfield
  {journal} {\bibinfo  {journal} {Phys. Rev. Lett.}\ }\textbf {\bibinfo
  {volume} {119}},\ \bibinfo {pages} {010504} (\bibinfo {year}
  {2017}{\natexlab{a}})}\BibitemShut {NoStop}%
\bibitem [{\citenamefont {Stephen}\ \emph {et~al.}(2019)\citenamefont
  {Stephen}, \citenamefont {Nautrup}, \citenamefont {Bermejo-Vega},
  \citenamefont {Eisert},\ and\ \citenamefont
  {Raussendorf}}]{Stephen2019_MBQC3}%
  \BibitemOpen
  \bibfield  {author} {\bibinfo {author} {\bibfnamefont {D.~T.}\ \bibnamefont
  {Stephen}}, \bibinfo {author} {\bibfnamefont {H.~P.}\ \bibnamefont
  {Nautrup}}, \bibinfo {author} {\bibfnamefont {J.}~\bibnamefont
  {Bermejo-Vega}}, \bibinfo {author} {\bibfnamefont {J.}~\bibnamefont
  {Eisert}},\ and\ \bibinfo {author} {\bibfnamefont {R.}~\bibnamefont
  {Raussendorf}},\ }\bibfield  {title} {\bibinfo {title} {Subsystem symmetries,
  quantum cellular automata, and computational phases of quantum matter},\
  }\href {https://doi.org/10.22331/q-2019-05-20-142} {\bibfield  {journal}
  {\bibinfo  {journal} {Quantum}\ }\textbf {\bibinfo {volume} {3}},\ \bibinfo
  {pages} {142} (\bibinfo {year} {2019})}\BibitemShut {NoStop}%
\bibitem [{\citenamefont {Satzinger}\ \emph {et~al.}(2021)\citenamefont
  {Satzinger}, \citenamefont {Liu}, \citenamefont {Smith}, \citenamefont
  {Knapp}, \citenamefont {Newman}, \citenamefont {Jones}, \citenamefont {Chen},
  \citenamefont {Quintana}, \citenamefont {Mi}, \citenamefont {Dunsworth} \emph
  {et~al.}}]{satzinger2021realizing}%
  \BibitemOpen
  \bibfield  {author} {\bibinfo {author} {\bibfnamefont {K.}~\bibnamefont
  {Satzinger}}, \bibinfo {author} {\bibfnamefont {Y.-J.}\ \bibnamefont {Liu}},
  \bibinfo {author} {\bibfnamefont {A.}~\bibnamefont {Smith}}, \bibinfo
  {author} {\bibfnamefont {C.}~\bibnamefont {Knapp}}, \bibinfo {author}
  {\bibfnamefont {M.}~\bibnamefont {Newman}}, \bibinfo {author} {\bibfnamefont
  {C.}~\bibnamefont {Jones}}, \bibinfo {author} {\bibfnamefont
  {Z.}~\bibnamefont {Chen}}, \bibinfo {author} {\bibfnamefont {C.}~\bibnamefont
  {Quintana}}, \bibinfo {author} {\bibfnamefont {X.}~\bibnamefont {Mi}},
  \bibinfo {author} {\bibfnamefont {A.}~\bibnamefont {Dunsworth}}, \emph
  {et~al.},\ }\bibfield  {title} {\bibinfo {title} {Realizing topologically
  ordered states on a quantum processor},\ }\href
  {https://doi.org/10.1126/science.abi8378} {\bibfield  {journal} {\bibinfo
  {journal} {Science}\ }\textbf {\bibinfo {volume} {374}},\ \bibinfo {pages}
  {1237} (\bibinfo {year} {2021})}\BibitemShut {NoStop}%
\bibitem [{\citenamefont {Semeghini}\ \emph {et~al.}(2021)\citenamefont
  {Semeghini}, \citenamefont {Levine}, \citenamefont {Keesling}, \citenamefont
  {Ebadi}, \citenamefont {Wang}, \citenamefont {Bluvstein}, \citenamefont
  {Verresen}, \citenamefont {Pichler}, \citenamefont {Kalinowski},
  \citenamefont {Samajdar} \emph {et~al.}}]{semeghini2021probing}%
  \BibitemOpen
  \bibfield  {author} {\bibinfo {author} {\bibfnamefont {G.}~\bibnamefont
  {Semeghini}}, \bibinfo {author} {\bibfnamefont {H.}~\bibnamefont {Levine}},
  \bibinfo {author} {\bibfnamefont {A.}~\bibnamefont {Keesling}}, \bibinfo
  {author} {\bibfnamefont {S.}~\bibnamefont {Ebadi}}, \bibinfo {author}
  {\bibfnamefont {T.~T.}\ \bibnamefont {Wang}}, \bibinfo {author}
  {\bibfnamefont {D.}~\bibnamefont {Bluvstein}}, \bibinfo {author}
  {\bibfnamefont {R.}~\bibnamefont {Verresen}}, \bibinfo {author}
  {\bibfnamefont {H.}~\bibnamefont {Pichler}}, \bibinfo {author} {\bibfnamefont
  {M.}~\bibnamefont {Kalinowski}}, \bibinfo {author} {\bibfnamefont
  {R.}~\bibnamefont {Samajdar}}, \emph {et~al.},\ }\bibfield  {title} {\bibinfo
  {title} {Probing topological spin liquids on a programmable quantum
  simulator},\ }\href {https://doi.org/10.1126/science.abi8794} {\bibfield
  {journal} {\bibinfo  {journal} {Science}\ }\textbf {\bibinfo {volume}
  {374}},\ \bibinfo {pages} {1242} (\bibinfo {year} {2021})}\BibitemShut
  {NoStop}%
\bibitem [{\citenamefont {Iqbal}\ \emph {et~al.}(2023)\citenamefont {Iqbal},
  \citenamefont {Tantivasadakarn}, \citenamefont {Gatterman}, \citenamefont
  {Gerber}, \citenamefont {Gilmore}, \citenamefont {Gresh}, \citenamefont
  {Hankin}, \citenamefont {Hewitt}, \citenamefont {Horst}, \citenamefont
  {Matheny}, \citenamefont {Mengle}, \citenamefont {Neyenhuis}, \citenamefont
  {Vishwanath}, \citenamefont {Foss-Feig}, \citenamefont {Verresen},\ and\
  \citenamefont {Dreyer}}]{iqbal2023topological}%
  \BibitemOpen
  \bibfield  {author} {\bibinfo {author} {\bibfnamefont {M.}~\bibnamefont
  {Iqbal}}, \bibinfo {author} {\bibfnamefont {N.}~\bibnamefont
  {Tantivasadakarn}}, \bibinfo {author} {\bibfnamefont {T.~M.}\ \bibnamefont
  {Gatterman}}, \bibinfo {author} {\bibfnamefont {J.~A.}\ \bibnamefont
  {Gerber}}, \bibinfo {author} {\bibfnamefont {K.}~\bibnamefont {Gilmore}},
  \bibinfo {author} {\bibfnamefont {D.}~\bibnamefont {Gresh}}, \bibinfo
  {author} {\bibfnamefont {A.}~\bibnamefont {Hankin}}, \bibinfo {author}
  {\bibfnamefont {N.}~\bibnamefont {Hewitt}}, \bibinfo {author} {\bibfnamefont
  {C.~V.}\ \bibnamefont {Horst}}, \bibinfo {author} {\bibfnamefont
  {M.}~\bibnamefont {Matheny}}, \bibinfo {author} {\bibfnamefont
  {T.}~\bibnamefont {Mengle}}, \bibinfo {author} {\bibfnamefont
  {B.}~\bibnamefont {Neyenhuis}}, \bibinfo {author} {\bibfnamefont
  {A.}~\bibnamefont {Vishwanath}}, \bibinfo {author} {\bibfnamefont
  {M.}~\bibnamefont {Foss-Feig}}, \bibinfo {author} {\bibfnamefont
  {R.}~\bibnamefont {Verresen}},\ and\ \bibinfo {author} {\bibfnamefont
  {H.}~\bibnamefont {Dreyer}},\ }\href@noop {} {\bibinfo {title} {Topological
  order from measurements and feed-forward on a trapped ion quantum computer}}
  (\bibinfo {year} {2023}),\ \Eprint {https://arxiv.org/abs/2302.01917}
  {arXiv:2302.01917 [quant-ph]} \BibitemShut {NoStop}%
\bibitem [{\citenamefont {Preskill}(2018)}]{preskill2018quantum}%
  \BibitemOpen
  \bibfield  {author} {\bibinfo {author} {\bibfnamefont {J.}~\bibnamefont
  {Preskill}},\ }\bibfield  {title} {\bibinfo {title} {Quantum computing in the
  {NISQ} era and beyond},\ }\href {https://doi.org/10.22331/q-2018-08-06-79}
  {\bibfield  {journal} {\bibinfo  {journal} {Quantum}\ }\textbf {\bibinfo
  {volume} {2}},\ \bibinfo {pages} {79} (\bibinfo {year} {2018})}\BibitemShut
  {NoStop}%
\bibitem [{\citenamefont {Raussendorf}\ \emph {et~al.}(2005)\citenamefont
  {Raussendorf}, \citenamefont {Bravyi},\ and\ \citenamefont
  {Harrington}}]{raussendorf2005longrange}%
  \BibitemOpen
  \bibfield  {author} {\bibinfo {author} {\bibfnamefont {R.}~\bibnamefont
  {Raussendorf}}, \bibinfo {author} {\bibfnamefont {S.}~\bibnamefont
  {Bravyi}},\ and\ \bibinfo {author} {\bibfnamefont {J.}~\bibnamefont
  {Harrington}},\ }\bibfield  {title} {\bibinfo {title} {Long-range quantum
  entanglement in noisy cluster states},\ }\href
  {https://doi.org/10.1103/PhysRevA.71.062313} {\bibfield  {journal} {\bibinfo
  {journal} {Phys. Rev. A}\ }\textbf {\bibinfo {volume} {71}},\ \bibinfo
  {pages} {062313} (\bibinfo {year} {2005})}\BibitemShut {NoStop}%
\bibitem [{\citenamefont {Aguado}\ \emph {et~al.}(2008)\citenamefont {Aguado},
  \citenamefont {Brennen}, \citenamefont {Verstraete},\ and\ \citenamefont
  {Cirac}}]{aguado2008creation}%
  \BibitemOpen
  \bibfield  {author} {\bibinfo {author} {\bibfnamefont {M.}~\bibnamefont
  {Aguado}}, \bibinfo {author} {\bibfnamefont {G.~K.}\ \bibnamefont {Brennen}},
  \bibinfo {author} {\bibfnamefont {F.}~\bibnamefont {Verstraete}},\ and\
  \bibinfo {author} {\bibfnamefont {J.~I.}\ \bibnamefont {Cirac}},\ }\bibfield
  {title} {\bibinfo {title} {Creation, manipulation, and detection of abelian
  and non-abelian anyons in optical lattices},\ }\href
  {https://doi.org/10.1103/PhysRevLett.101.260501} {\bibfield  {journal}
  {\bibinfo  {journal} {Phys. Rev. Lett.}\ }\textbf {\bibinfo {volume} {101}},\
  \bibinfo {pages} {260501} (\bibinfo {year} {2008})}\BibitemShut {NoStop}%
\bibitem [{\citenamefont {Bolt}\ \emph {et~al.}(2016)\citenamefont {Bolt},
  \citenamefont {Duclos-Cianci}, \citenamefont {Poulin},\ and\ \citenamefont
  {Stace}}]{bolt2016foliated}%
  \BibitemOpen
  \bibfield  {author} {\bibinfo {author} {\bibfnamefont {A.}~\bibnamefont
  {Bolt}}, \bibinfo {author} {\bibfnamefont {G.}~\bibnamefont {Duclos-Cianci}},
  \bibinfo {author} {\bibfnamefont {D.}~\bibnamefont {Poulin}},\ and\ \bibinfo
  {author} {\bibfnamefont {T.~M.}\ \bibnamefont {Stace}},\ }\bibfield  {title}
  {\bibinfo {title} {Foliated quantum error-correcting codes},\ }\href
  {https://doi.org/10.1103/PhysRevLett.117.070501} {\bibfield  {journal}
  {\bibinfo  {journal} {Phys. Rev. Lett.}\ }\textbf {\bibinfo {volume} {117}},\
  \bibinfo {pages} {070501} (\bibinfo {year} {2016})}\BibitemShut {NoStop}%
\bibitem [{\citenamefont {Watts}\ \emph {et~al.}(2019)\citenamefont {Watts},
  \citenamefont {Kothari}, \citenamefont {Schaeffer},\ and\ \citenamefont
  {Tal}}]{watts2019exponential}%
  \BibitemOpen
  \bibfield  {author} {\bibinfo {author} {\bibfnamefont {A.~B.}\ \bibnamefont
  {Watts}}, \bibinfo {author} {\bibfnamefont {R.}~\bibnamefont {Kothari}},
  \bibinfo {author} {\bibfnamefont {L.}~\bibnamefont {Schaeffer}},\ and\
  \bibinfo {author} {\bibfnamefont {A.}~\bibnamefont {Tal}},\ }\bibfield
  {title} {\bibinfo {title} {Exponential separation between shallow quantum
  circuits and unbounded fan-in shallow classical circuits},\ }in\ \href
  {https://doi.org/10.1145/3313276.3316404} {\emph {\bibinfo {booktitle}
  {Proceedings of the 51st Annual ACM SIGACT Symposium on Theory of
  Computing}}}\ (\bibinfo {year} {2019})\ pp.\ \bibinfo {pages}
  {515--526}\BibitemShut {NoStop}%
\bibitem [{\citenamefont {Piroli}\ \emph {et~al.}(2021)\citenamefont {Piroli},
  \citenamefont {Styliaris},\ and\ \citenamefont {Cirac}}]{Piroli2021}%
  \BibitemOpen
  \bibfield  {author} {\bibinfo {author} {\bibfnamefont {L.}~\bibnamefont
  {Piroli}}, \bibinfo {author} {\bibfnamefont {G.}~\bibnamefont {Styliaris}},\
  and\ \bibinfo {author} {\bibfnamefont {J.~I.}\ \bibnamefont {Cirac}},\
  }\bibfield  {title} {\bibinfo {title} {Quantum circuits assisted by local
  operations and classical communication: Transformations and phases of
  matter},\ }\href {https://doi.org/10.1103/physrevlett.127.220503} {\bibfield
  {journal} {\bibinfo  {journal} {Phys. Rev. Lett.}\ }\textbf {\bibinfo
  {volume} {127}},\ \bibinfo {pages} {220503} (\bibinfo {year}
  {2021})}\BibitemShut {NoStop}%
\bibitem [{\citenamefont {Tantivasadakarn}\ \emph {et~al.}(2022)\citenamefont
  {Tantivasadakarn}, \citenamefont {Thorngren}, \citenamefont {Vishwanath},\
  and\ \citenamefont {Verresen}}]{tantivasadakarn2022longrange}%
  \BibitemOpen
  \bibfield  {author} {\bibinfo {author} {\bibfnamefont {N.}~\bibnamefont
  {Tantivasadakarn}}, \bibinfo {author} {\bibfnamefont {R.}~\bibnamefont
  {Thorngren}}, \bibinfo {author} {\bibfnamefont {A.}~\bibnamefont
  {Vishwanath}},\ and\ \bibinfo {author} {\bibfnamefont {R.}~\bibnamefont
  {Verresen}},\ }\href@noop {} {\bibinfo {title} {Long-range entanglement from
  measuring symmetry-protected topological phases}} (\bibinfo {year} {2022}),\
  \Eprint {https://arxiv.org/abs/2112.01519} {arXiv:2112.01519
  [cond-mat.str-el]} \BibitemShut {NoStop}%
\bibitem [{\citenamefont {Tantivasadakarn}\ \emph {et~al.}(2023)\citenamefont
  {Tantivasadakarn}, \citenamefont {Vishwanath},\ and\ \citenamefont
  {Verresen}}]{tantivasadakarn2023hierarchy}%
  \BibitemOpen
  \bibfield  {author} {\bibinfo {author} {\bibfnamefont {N.}~\bibnamefont
  {Tantivasadakarn}}, \bibinfo {author} {\bibfnamefont {A.}~\bibnamefont
  {Vishwanath}},\ and\ \bibinfo {author} {\bibfnamefont {R.}~\bibnamefont
  {Verresen}},\ }\bibfield  {title} {\bibinfo {title} {Hierarchy of topological
  order from finite-depth unitaries, measurement, and feedforward},\ }\href
  {https://doi.org/10.1103/PRXQuantum.4.020339} {\bibfield  {journal} {\bibinfo
   {journal} {PRX Quantum}\ }\textbf {\bibinfo {volume} {4}},\ \bibinfo {pages}
  {020339} (\bibinfo {year} {2023})}\BibitemShut {NoStop}%
\bibitem [{\citenamefont {Bravyi}\ \emph {et~al.}(2022)\citenamefont {Bravyi},
  \citenamefont {Kim}, \citenamefont {Kliesch},\ and\ \citenamefont
  {Koenig}}]{bravyi2022adaptive}%
  \BibitemOpen
  \bibfield  {author} {\bibinfo {author} {\bibfnamefont {S.}~\bibnamefont
  {Bravyi}}, \bibinfo {author} {\bibfnamefont {I.}~\bibnamefont {Kim}},
  \bibinfo {author} {\bibfnamefont {A.}~\bibnamefont {Kliesch}},\ and\ \bibinfo
  {author} {\bibfnamefont {R.}~\bibnamefont {Koenig}},\ }\href@noop {}
  {\bibinfo {title} {Adaptive constant-depth circuits for manipulating
  non-abelian anyons}} (\bibinfo {year} {2022}),\ \Eprint
  {https://arxiv.org/abs/2205.01933} {arXiv:2205.01933 [quant-ph]} \BibitemShut
  {NoStop}%
\bibitem [{\citenamefont {Lu}\ \emph {et~al.}(2022)\citenamefont {Lu},
  \citenamefont {Lessa}, \citenamefont {Kim},\ and\ \citenamefont
  {Hsieh}}]{lu2022measurement}%
  \BibitemOpen
  \bibfield  {author} {\bibinfo {author} {\bibfnamefont {T.-C.}\ \bibnamefont
  {Lu}}, \bibinfo {author} {\bibfnamefont {L.~A.}\ \bibnamefont {Lessa}},
  \bibinfo {author} {\bibfnamefont {I.~H.}\ \bibnamefont {Kim}},\ and\ \bibinfo
  {author} {\bibfnamefont {T.~H.}\ \bibnamefont {Hsieh}},\ }\bibfield  {title}
  {\bibinfo {title} {Measurement as a shortcut to long-range entangled quantum
  matter},\ }\href {https://doi.org/10.1103/PRXQuantum.3.040337} {\bibfield
  {journal} {\bibinfo  {journal} {PRX Quantum}\ }\textbf {\bibinfo {volume}
  {3}},\ \bibinfo {pages} {040337} (\bibinfo {year} {2022})}\BibitemShut
  {NoStop}%
\bibitem [{\citenamefont {Buhrman}\ \emph {et~al.}(2023)\citenamefont
  {Buhrman}, \citenamefont {Folkertsma}, \citenamefont {Loff},\ and\
  \citenamefont {Neumann}}]{buhrman2023state}%
  \BibitemOpen
  \bibfield  {author} {\bibinfo {author} {\bibfnamefont {H.}~\bibnamefont
  {Buhrman}}, \bibinfo {author} {\bibfnamefont {M.}~\bibnamefont {Folkertsma}},
  \bibinfo {author} {\bibfnamefont {B.}~\bibnamefont {Loff}},\ and\ \bibinfo
  {author} {\bibfnamefont {N.~M.~P.}\ \bibnamefont {Neumann}},\ }\href@noop {}
  {\bibinfo {title} {State preparation by shallow circuits using feed forward}}
  (\bibinfo {year} {2023}),\ \Eprint {https://arxiv.org/abs/2307.14840}
  {arXiv:2307.14840 [quant-ph]} \BibitemShut {NoStop}%
\bibitem [{\citenamefont {Malz}\ \emph {et~al.}(2023)\citenamefont {Malz},
  \citenamefont {Styliaris}, \citenamefont {Wei},\ and\ \citenamefont
  {Cirac}}]{malz2023preparation}%
  \BibitemOpen
  \bibfield  {author} {\bibinfo {author} {\bibfnamefont {D.}~\bibnamefont
  {Malz}}, \bibinfo {author} {\bibfnamefont {G.}~\bibnamefont {Styliaris}},
  \bibinfo {author} {\bibfnamefont {Z.-Y.}\ \bibnamefont {Wei}},\ and\ \bibinfo
  {author} {\bibfnamefont {J.~I.}\ \bibnamefont {Cirac}},\ }\href@noop {}
  {\bibinfo {title} {Preparation of matrix product states with log-depth
  quantum circuits}} (\bibinfo {year} {2023}),\ \Eprint
  {https://arxiv.org/abs/2307.01696} {arXiv:2307.01696 [quant-ph]} \BibitemShut
  {NoStop}%
\bibitem [{\citenamefont {Kitaev}(2006)}]{kitaev2006anyons}%
  \BibitemOpen
  \bibfield  {author} {\bibinfo {author} {\bibfnamefont {A.}~\bibnamefont
  {Kitaev}},\ }\bibfield  {title} {\bibinfo {title} {Anyons in an exactly
  solved model and beyond},\ }\href {https://doi.org/10.1016/j.aop.2005.10.005}
  {\bibfield  {journal} {\bibinfo  {journal} {Ann. Phys.}\ }\textbf {\bibinfo
  {volume} {321}},\ \bibinfo {pages} {2} (\bibinfo {year} {2006})}\BibitemShut
  {NoStop}%
\bibitem [{\citenamefont {Cirac}\ \emph {et~al.}(2017)\citenamefont {Cirac},
  \citenamefont {P{\'{e}}rez-Garc{\'{\i}}a}, \citenamefont {Schuch},\ and\
  \citenamefont {Verstraete}}]{Cirac2017_FundThm1}%
  \BibitemOpen
  \bibfield  {author} {\bibinfo {author} {\bibfnamefont {J.~I.}\ \bibnamefont
  {Cirac}}, \bibinfo {author} {\bibfnamefont {D.}~\bibnamefont
  {P{\'{e}}rez-Garc{\'{\i}}a}}, \bibinfo {author} {\bibfnamefont
  {N.}~\bibnamefont {Schuch}},\ and\ \bibinfo {author} {\bibfnamefont
  {F.}~\bibnamefont {Verstraete}},\ }\bibfield  {title} {\bibinfo {title}
  {Matrix product density operators: Renormalization fixed points and boundary
  theories},\ }\href {https://doi.org/10.1016/j.aop.2016.12.030} {\bibfield
  {journal} {\bibinfo  {journal} {Ann. Phys.}\ }\textbf {\bibinfo {volume}
  {378}},\ \bibinfo {pages} {100} (\bibinfo {year} {2017})}\BibitemShut
  {NoStop}%
\bibitem [{\citenamefont {Sanz}\ \emph {et~al.}(2009)\citenamefont {Sanz},
  \citenamefont {Wolf}, \citenamefont {P\'erez-Garc\'{\i}a},\ and\
  \citenamefont {Cirac}}]{SanzEtAl2009_MPSHamiltonians}%
  \BibitemOpen
  \bibfield  {author} {\bibinfo {author} {\bibfnamefont {M.}~\bibnamefont
  {Sanz}}, \bibinfo {author} {\bibfnamefont {M.~M.}\ \bibnamefont {Wolf}},
  \bibinfo {author} {\bibfnamefont {D.}~\bibnamefont {P\'erez-Garc\'{\i}a}},\
  and\ \bibinfo {author} {\bibfnamefont {J.~I.}\ \bibnamefont {Cirac}},\
  }\bibfield  {title} {\bibinfo {title} {Matrix product states: Symmetries and
  two-body hamiltonians},\ }\href {https://doi.org/10.1103/PhysRevA.79.042308}
  {\bibfield  {journal} {\bibinfo  {journal} {Phys. Rev. A}\ }\textbf {\bibinfo
  {volume} {79}},\ \bibinfo {pages} {042308} (\bibinfo {year}
  {2009})}\BibitemShut {NoStop}%
\bibitem [{\citenamefont {Perez-Garcia}\ \emph {et~al.}(2007)\citenamefont
  {Perez-Garcia}, \citenamefont {Verstraete}, \citenamefont {Wolf},\ and\
  \citenamefont {Cirac}}]{PerezGarciaEtAl2007_MPSrepresentations}%
  \BibitemOpen
  \bibfield  {author} {\bibinfo {author} {\bibfnamefont {D.}~\bibnamefont
  {Perez-Garcia}}, \bibinfo {author} {\bibfnamefont {F.}~\bibnamefont
  {Verstraete}}, \bibinfo {author} {\bibfnamefont {M.~M.}\ \bibnamefont
  {Wolf}},\ and\ \bibinfo {author} {\bibfnamefont {J.~I.}\ \bibnamefont
  {Cirac}},\ }\bibfield  {title} {\bibinfo {title} {Matrix product state
  representations},\ }\href {https://doi.org/10.48550/arXiv.quant-ph/0608197}
  {\bibfield  {journal} {\bibinfo  {journal} {Quantum Info. Comput.}\ }\textbf
  {\bibinfo {volume} {7}},\ \bibinfo {pages} {401} (\bibinfo {year}
  {2007})}\BibitemShut {NoStop}%
\bibitem [{\citenamefont {Hastings}\ and\ \citenamefont
  {Wen}(2005)}]{hastings2005quasi}%
  \BibitemOpen
  \bibfield  {author} {\bibinfo {author} {\bibfnamefont {M.~B.}\ \bibnamefont
  {Hastings}}\ and\ \bibinfo {author} {\bibfnamefont {X.-G.}\ \bibnamefont
  {Wen}},\ }\bibfield  {title} {\bibinfo {title} {Quasiadiabatic continuation
  of quantum states: The stability of topological ground-state degeneracy and
  emergent gauge invariance},\ }\href
  {https://doi.org/10.1103/PhysRevB.72.045141} {\bibfield  {journal} {\bibinfo
  {journal} {Phys. Rev. B}\ }\textbf {\bibinfo {volume} {72}},\ \bibinfo
  {pages} {045141} (\bibinfo {year} {2005})}\BibitemShut {NoStop}%
\bibitem [{\citenamefont {Osborne}(2006)}]{osborne2006efficient}%
  \BibitemOpen
  \bibfield  {author} {\bibinfo {author} {\bibfnamefont {T.~J.}\ \bibnamefont
  {Osborne}},\ }\bibfield  {title} {\bibinfo {title} {Efficient approximation
  of the dynamics of one-dimensional quantum spin systems},\ }\href
  {https://doi.org/10.1103/PhysRevLett.97.157202} {\bibfield  {journal}
  {\bibinfo  {journal} {Phys. Rev. Lett.}\ }\textbf {\bibinfo {volume} {97}},\
  \bibinfo {pages} {157202} (\bibinfo {year} {2006})}\BibitemShut {NoStop}%
\bibitem [{\citenamefont {Coser}\ and\ \citenamefont
  {P{\'{e}}rez-Garc{\'{\i}}a}(2019)}]{CoserandGarcia2019_PhasesMixedStatesDissipativeEvo}%
  \BibitemOpen
  \bibfield  {author} {\bibinfo {author} {\bibfnamefont {A.}~\bibnamefont
  {Coser}}\ and\ \bibinfo {author} {\bibfnamefont {D.}~\bibnamefont
  {P{\'{e}}rez-Garc{\'{\i}}a}},\ }\bibfield  {title} {\bibinfo {title}
  {Classification of phases for mixed states via fast dissipative evolution},\
  }\href {https://doi.org/10.22331/q-2019-08-12-174} {\bibfield  {journal}
  {\bibinfo  {journal} {Quantum}\ }\textbf {\bibinfo {volume} {3}},\ \bibinfo
  {pages} {174} (\bibinfo {year} {2019})}\BibitemShut {NoStop}%
\bibitem [{\citenamefont {Haah}\ \emph {et~al.}(2018)\citenamefont {Haah},
  \citenamefont {Hastings}, \citenamefont {Kothari},\ and\ \citenamefont
  {Low}}]{haah2021quantum}%
  \BibitemOpen
  \bibfield  {author} {\bibinfo {author} {\bibfnamefont {J.}~\bibnamefont
  {Haah}}, \bibinfo {author} {\bibfnamefont {M.~B.}\ \bibnamefont {Hastings}},
  \bibinfo {author} {\bibfnamefont {R.}~\bibnamefont {Kothari}},\ and\ \bibinfo
  {author} {\bibfnamefont {G.~H.}\ \bibnamefont {Low}},\ }\bibfield  {title}
  {\bibinfo {title} {Quantum algorithm for simulating real time evolution of
  lattice hamiltonians},\ }\href {https://doi.org/10.1137/18M1231511}
  {\bibfield  {journal} {\bibinfo  {journal} {SIAM J. Comp.}\ ,\ \bibinfo
  {pages} {FOCS18}} (\bibinfo {year} {2018})}\BibitemShut {NoStop}%
\bibitem [{\citenamefont {de~Groot}\ \emph {et~al.}(2022)\citenamefont
  {de~Groot}, \citenamefont {Turzillo},\ and\ \citenamefont
  {Schuch}}]{deGroot2021_SPTInOpenQSystems}%
  \BibitemOpen
  \bibfield  {author} {\bibinfo {author} {\bibfnamefont {C.}~\bibnamefont
  {de~Groot}}, \bibinfo {author} {\bibfnamefont {A.}~\bibnamefont {Turzillo}},\
  and\ \bibinfo {author} {\bibfnamefont {N.}~\bibnamefont {Schuch}},\
  }\bibfield  {title} {\bibinfo {title} {Symmetry {P}rotected {T}opological
  {O}rder in {O}pen {Q}uantum {S}ystems},\ }\href
  {https://doi.org/10.22331/q-2022-11-10-856} {\bibfield  {journal} {\bibinfo
  {journal} {{Quantum}}\ }\textbf {\bibinfo {volume} {6}},\ \bibinfo {pages}
  {856} (\bibinfo {year} {2022})}\BibitemShut {NoStop}%
\bibitem [{\citenamefont {Chitambar}\ \emph {et~al.}(2014)\citenamefont
  {Chitambar}, \citenamefont {Leung}, \citenamefont {Man{\v{c}}inska},
  \citenamefont {Ozols},\ and\ \citenamefont
  {Winter}}]{Chitambar2014_EverythingLOCC}%
  \BibitemOpen
  \bibfield  {author} {\bibinfo {author} {\bibfnamefont {E.}~\bibnamefont
  {Chitambar}}, \bibinfo {author} {\bibfnamefont {D.}~\bibnamefont {Leung}},
  \bibinfo {author} {\bibfnamefont {L.}~\bibnamefont {Man{\v{c}}inska}},
  \bibinfo {author} {\bibfnamefont {M.}~\bibnamefont {Ozols}},\ and\ \bibinfo
  {author} {\bibfnamefont {A.}~\bibnamefont {Winter}},\ }\bibfield  {title}
  {\bibinfo {title} {Everything you always wanted to know about {LOCC} (but
  were afraid to ask)},\ }\href {https://doi.org/10.1007/s00220-014-1953-9}
  {\bibfield  {journal} {\bibinfo  {journal} {Commun. Math. Phys.}\ }\textbf
  {\bibinfo {volume} {328}},\ \bibinfo {pages} {303} (\bibinfo {year}
  {2014})}\BibitemShut {NoStop}%
\bibitem [{\citenamefont {Schuch}\ \emph
  {et~al.}(2004{\natexlab{a}})\citenamefont {Schuch}, \citenamefont
  {Verstraete},\ and\ \citenamefont
  {Cirac}}]{SchuchEtAl2004_SuperselectionLOCC2}%
  \BibitemOpen
  \bibfield  {author} {\bibinfo {author} {\bibfnamefont {N.}~\bibnamefont
  {Schuch}}, \bibinfo {author} {\bibfnamefont {F.}~\bibnamefont {Verstraete}},\
  and\ \bibinfo {author} {\bibfnamefont {J.~I.}\ \bibnamefont {Cirac}},\
  }\bibfield  {title} {\bibinfo {title} {Nonlocal resources in the presence of
  superselection rules},\ }\href
  {https://doi.org/10.1103/PhysRevLett.92.087904} {\bibfield  {journal}
  {\bibinfo  {journal} {Phys. Rev. Lett.}\ }\textbf {\bibinfo {volume} {92}},\
  \bibinfo {pages} {087904} (\bibinfo {year} {2004}{\natexlab{a}})}\BibitemShut
  {NoStop}%
\bibitem [{\citenamefont {de~Groot}\ \emph {et~al.}(2020)\citenamefont
  {de~Groot}, \citenamefont {Stephen}, \citenamefont {Molnar},\ and\
  \citenamefont {Schuch}}]{deGroot2020_InaccessEnt}%
  \BibitemOpen
  \bibfield  {author} {\bibinfo {author} {\bibfnamefont {C.}~\bibnamefont
  {de~Groot}}, \bibinfo {author} {\bibfnamefont {D.~T.}\ \bibnamefont
  {Stephen}}, \bibinfo {author} {\bibfnamefont {A.}~\bibnamefont {Molnar}},\
  and\ \bibinfo {author} {\bibfnamefont {N.}~\bibnamefont {Schuch}},\
  }\bibfield  {title} {\bibinfo {title} {Inaccessible entanglement in symmetry
  protected topological phases},\ }\href
  {https://doi.org/10.1088/1751-8121/ab98c7} {\bibfield  {journal} {\bibinfo
  {journal} {J. Phys. A}\ }\textbf {\bibinfo {volume} {53}},\ \bibinfo {pages}
  {335302} (\bibinfo {year} {2020})}\BibitemShut {NoStop}%
\bibitem [{\citenamefont {Isaacs}(2003)}]{Isaacs2003-no}%
  \BibitemOpen
  \bibfield  {author} {\bibinfo {author} {\bibfnamefont {I.~M.}\ \bibnamefont
  {Isaacs}},\ }\href@noop {} {\emph {\bibinfo {title} {Character theory of
  finite groups}}},\ Dover Books on Mathematics\ (\bibinfo  {publisher} {Dover
  Publications},\ \bibinfo {address} {Mineola, NY},\ \bibinfo {year}
  {2003})\BibitemShut {NoStop}%
\bibitem [{\citenamefont {Chen}\ \emph {et~al.}(2013)\citenamefont {Chen},
  \citenamefont {Gu}, \citenamefont {Liu},\ and\ \citenamefont
  {Wen}}]{Chen2013_HigherDimSPT}%
  \BibitemOpen
  \bibfield  {author} {\bibinfo {author} {\bibfnamefont {X.}~\bibnamefont
  {Chen}}, \bibinfo {author} {\bibfnamefont {Z.-C.}\ \bibnamefont {Gu}},
  \bibinfo {author} {\bibfnamefont {Z.-X.}\ \bibnamefont {Liu}},\ and\ \bibinfo
  {author} {\bibfnamefont {X.-G.}\ \bibnamefont {Wen}},\ }\bibfield  {title}
  {\bibinfo {title} {Symmetry protected topological orders and the group
  cohomology of their symmetry group},\ }\href
  {https://doi.org/10.1103/PhysRevB.87.155114} {\bibfield  {journal} {\bibinfo
  {journal} {Phys. Rev. B}\ }\textbf {\bibinfo {volume} {87}},\ \bibinfo
  {pages} {155114} (\bibinfo {year} {2013})}\BibitemShut {NoStop}%
\bibitem [{\citenamefont
  {Karpilovsky}(1994)}]{karpilovsky1994_ProjTraceCondition}%
  \BibitemOpen
  \bibfield  {author} {\bibinfo {author} {\bibfnamefont {G.}~\bibnamefont
  {Karpilovsky}},\ }\href {https://books.google.at/books?id=nPRQI-n6PwgC}
  {\emph {\bibinfo {title} {Group Representations}}},\ \bibinfo {series} {Group
  Representations}\ No.\ \bibinfo {number} {Bd. 3}\ (\bibinfo  {publisher}
  {Elsevier Science},\ \bibinfo {year} {1994})\BibitemShut {NoStop}%
\bibitem [{\citenamefont {Backhouse}\ and\ \citenamefont
  {Bradley}(1972)}]{Backhouse1972}%
  \BibitemOpen
  \bibfield  {author} {\bibinfo {author} {\bibfnamefont {N.~B.}\ \bibnamefont
  {Backhouse}}\ and\ \bibinfo {author} {\bibfnamefont {C.~J.}\ \bibnamefont
  {Bradley}},\ }\bibfield  {title} {\bibinfo {title} {Projective
  representations of abelian groups},\ }\href
  {https://doi.org/10.1090/s0002-9939-1972-0308329-5} {\bibfield  {journal}
  {\bibinfo  {journal} {Proc. Amer. Math. Soc.}\ }\textbf {\bibinfo {volume}
  {36}},\ \bibinfo {pages} {260} (\bibinfo {year} {1972})}\BibitemShut
  {NoStop}%
\bibitem [{\citenamefont {Bravyi}\ \emph {et~al.}(2010)\citenamefont {Bravyi},
  \citenamefont {Hastings},\ and\ \citenamefont
  {Michalakis}}]{BravyiEtAl2010_StabilityofDegenerateGroundStates}%
  \BibitemOpen
  \bibfield  {author} {\bibinfo {author} {\bibfnamefont {S.}~\bibnamefont
  {Bravyi}}, \bibinfo {author} {\bibfnamefont {M.~B.}\ \bibnamefont
  {Hastings}},\ and\ \bibinfo {author} {\bibfnamefont {S.}~\bibnamefont
  {Michalakis}},\ }\bibfield  {title} {\bibinfo {title} {Topological quantum
  order: Stability under local perturbations},\ }\href
  {https://doi.org/10.1063/1.3490195} {\bibfield  {journal} {\bibinfo
  {journal} {J. Math. Phys.}\ }\textbf {\bibinfo {volume} {51}},\ \bibinfo
  {pages} {093512} (\bibinfo {year} {2010})}\BibitemShut {NoStop}%
\bibitem [{\citenamefont {Zeng}\ and\ \citenamefont
  {Wen}(2015)}]{ZengWen2015_GappedQuantumLiquidsAndSLCircuits}%
  \BibitemOpen
  \bibfield  {author} {\bibinfo {author} {\bibfnamefont {B.}~\bibnamefont
  {Zeng}}\ and\ \bibinfo {author} {\bibfnamefont {X.-G.}\ \bibnamefont {Wen}},\
  }\bibfield  {title} {\bibinfo {title} {Gapped quantum liquids and topological
  order, stochastic local transformations and emergence of unitarity},\ }\href
  {https://doi.org/10.1103/PhysRevB.91.125121} {\bibfield  {journal} {\bibinfo
  {journal} {Phys. Rev. B}\ }\textbf {\bibinfo {volume} {91}},\ \bibinfo
  {pages} {125121} (\bibinfo {year} {2015})}\BibitemShut {NoStop}%
\bibitem [{sub(2023)}]{subwikiSchurMultiplier}%
  \BibitemOpen
  \href@noop {} {\bibinfo {title} {{S}chur multiplier -- {G}roupprops}},\
  \bibinfo {howpublished}
  {\url{https://groupprops.subwiki.org/wiki/Schur_multiplier}} (\bibinfo {year}
  {Accessed 29-09-2023})\BibitemShut {NoStop}%
\bibitem [{\citenamefont {Kleppner}(1965)}]{Kleppner1965}%
  \BibitemOpen
  \bibfield  {author} {\bibinfo {author} {\bibfnamefont {A.}~\bibnamefont
  {Kleppner}},\ }\bibfield  {title} {\bibinfo {title} {Multipliers on abelian
  groups},\ }\href {https://doi.org/10.1007/bf01370393} {\bibfield  {journal}
  {\bibinfo  {journal} {Math. Ann.}\ }\textbf {\bibinfo {volume} {158}},\
  \bibinfo {pages} {11} (\bibinfo {year} {1965})}\BibitemShut {NoStop}%
\bibitem [{\citenamefont {Else}\ \emph {et~al.}(2012)\citenamefont {Else},
  \citenamefont {Schwarz}, \citenamefont {Bartlett},\ and\ \citenamefont
  {Doherty}}]{ElseEtAl2012_MBQC1}%
  \BibitemOpen
  \bibfield  {author} {\bibinfo {author} {\bibfnamefont {D.~V.}\ \bibnamefont
  {Else}}, \bibinfo {author} {\bibfnamefont {I.}~\bibnamefont {Schwarz}},
  \bibinfo {author} {\bibfnamefont {S.~D.}\ \bibnamefont {Bartlett}},\ and\
  \bibinfo {author} {\bibfnamefont {A.~C.}\ \bibnamefont {Doherty}},\
  }\bibfield  {title} {\bibinfo {title} {Symmetry-protected phases for
  measurement-based quantum computation},\ }\href
  {https://doi.org/10.1103/PhysRevLett.108.240505} {\bibfield  {journal}
  {\bibinfo  {journal} {Phys. Rev. Lett.}\ }\textbf {\bibinfo {volume} {108}},\
  \bibinfo {pages} {240505} (\bibinfo {year} {2012})}\BibitemShut {NoStop}%
\bibitem [{\citenamefont {Stephen}\ \emph
  {et~al.}(2017{\natexlab{b}})\citenamefont {Stephen}, \citenamefont {Wang},
  \citenamefont {Prakash}, \citenamefont {Wei},\ and\ \citenamefont
  {Raussendorf}}]{StephenEtAl2017_MBQC2}%
  \BibitemOpen
  \bibfield  {author} {\bibinfo {author} {\bibfnamefont {D.~T.}\ \bibnamefont
  {Stephen}}, \bibinfo {author} {\bibfnamefont {D.-S.}\ \bibnamefont {Wang}},
  \bibinfo {author} {\bibfnamefont {A.}~\bibnamefont {Prakash}}, \bibinfo
  {author} {\bibfnamefont {T.-C.}\ \bibnamefont {Wei}},\ and\ \bibinfo {author}
  {\bibfnamefont {R.}~\bibnamefont {Raussendorf}},\ }\bibfield  {title}
  {\bibinfo {title} {Computational power of symmetry-protected topological
  phases},\ }\href {https://doi.org/10.1103/PhysRevLett.119.010504} {\bibfield
  {journal} {\bibinfo  {journal} {Phys. Rev. Lett.}\ }\textbf {\bibinfo
  {volume} {119}},\ \bibinfo {pages} {010504} (\bibinfo {year}
  {2017}{\natexlab{b}})}\BibitemShut {NoStop}%
\bibitem [{\citenamefont {Klappenecker}\ and\ \citenamefont
  {Rotteler}(2002)}]{Klappenecker2002}%
  \BibitemOpen
  \bibfield  {author} {\bibinfo {author} {\bibfnamefont {A.}~\bibnamefont
  {Klappenecker}}\ and\ \bibinfo {author} {\bibfnamefont {M.}~\bibnamefont
  {Rotteler}},\ }\bibfield  {title} {\bibinfo {title} {Beyond stabilizer codes
  .i. nice error bases},\ }\href {https://doi.org/10.1109/TIT.2002.800471}
  {\bibfield  {journal} {\bibinfo  {journal} {IEEE Trans. Inf. Theory}\
  }\textbf {\bibinfo {volume} {48}},\ \bibinfo {pages} {2392} (\bibinfo {year}
  {2002})}\BibitemShut {NoStop}%
\bibitem [{\citenamefont {Lo}\ and\ \citenamefont
  {Popescu}(2001)}]{LoPopescu2001_BipartiteLOCC}%
  \BibitemOpen
  \bibfield  {author} {\bibinfo {author} {\bibfnamefont {H.-K.}\ \bibnamefont
  {Lo}}\ and\ \bibinfo {author} {\bibfnamefont {S.}~\bibnamefont {Popescu}},\
  }\bibfield  {title} {\bibinfo {title} {Concentrating entanglement by local
  actions: Beyond mean values},\ }\href
  {https://doi.org/10.1103/PhysRevA.63.022301} {\bibfield  {journal} {\bibinfo
  {journal} {Phys. Rev. A}\ }\textbf {\bibinfo {volume} {63}},\ \bibinfo
  {pages} {022301} (\bibinfo {year} {2001})}\BibitemShut {NoStop}%
\bibitem [{\citenamefont {Schuch}\ \emph
  {et~al.}(2004{\natexlab{b}})\citenamefont {Schuch}, \citenamefont
  {Verstraete},\ and\ \citenamefont
  {Cirac}}]{SchuchEtAl2004_SuperselectionLOCC1}%
  \BibitemOpen
  \bibfield  {author} {\bibinfo {author} {\bibfnamefont {N.}~\bibnamefont
  {Schuch}}, \bibinfo {author} {\bibfnamefont {F.}~\bibnamefont {Verstraete}},\
  and\ \bibinfo {author} {\bibfnamefont {J.~I.}\ \bibnamefont {Cirac}},\
  }\bibfield  {title} {\bibinfo {title} {Quantum entanglement theory in the
  presence of superselection rules},\ }\href
  {https://doi.org/10.1103/PhysRevA.70.042310} {\bibfield  {journal} {\bibinfo
  {journal} {Phys. Rev. A}\ }\textbf {\bibinfo {volume} {70}},\ \bibinfo
  {pages} {042310} (\bibinfo {year} {2004}{\natexlab{b}})}\BibitemShut
  {NoStop}%
\end{thebibliography}

\appendix
\section{Symmetries and Phases of MPS}\label{app:noninjectivesym}
In this Appendix, we review details on how tensors of MPS transform under global on-site symmetries and how this then characterizes the phases of matter under symmetries (without measurements). We include this appendix for clarity and so that this work is self-contained. However, these details can be found in Refs~\cite{Cirac2017_FundThm1, Schuch2011_Phases2,PerezGarciaEtAl2007_MPSrepresentations}. In Section~\ref{app:AppendixAInjectiveMPS}, we give a mathematical background to projective representations. In Section~\ref{app:AppendixANonInjectiveMPS}, we explain in more detail the action of symmetries on non-injective MPS. In Section~\ref{sec:BlockingAndLocalSymmetry}, we discuss the relationship between the original on-site symmetry and the effective physical symmetry. In Section~\ref{sec:AppendixAPhasesHamiltonians}, we discuss the classification of phases in Ref.~\cite{Schuch2011_Phases2} via parent Hamiltonians. Finally, in Section~\ref{sec:AppendixAPhasesCircuits}, we provide more detail on how the phases in the Hamiltonian picture relate to phases in the circuit picture.

\subsection{Projective Representations} \label{app:AppendixAInjectiveMPS}

In this section, we provide the relevant background for projective representations, appearing, for instance, in the action of global on-site symmetries of MPS in Eqs~\eqref{eq:injectivesym} and \eqref{eq:noninjectivesym} in the main text.

To begin, projective representations differ from linear representations in that they must only be closed up to a phase, i.e.,
 \begin{equation}
     \omega_g \omega_h = \gamma(g,h) \omega_{gh},
 \end{equation}
 for all $g, h\in G$, where $\gamma: G\times G \rightarrow U(1)$ is referred to as a cocycle. Two projective representations are said to be \emph{projectively equivalent} if there is a set of phases $\{\nu(g)\}_{g\in G}$ and a unitary $V$ such that
\begin{equation}
    \tilde{\omega}_g = \nu(g) V^\dagger\omega_g V.
    \label{eq:ProjectiveEquivalence}
\end{equation}
This induces an equivalence relation on the cocycles; namely 
\begin{equation}
    \gamma (g,h)\sim \frac{\nu(gh)}{\nu(g) \nu(h)} \gamma (g,h)
\end{equation} for any $\nu: G\rightarrow U(1)$. The induced equivalence classes of the $\gamma (g,h)$ are isomorphic to the second cohomology group of $G$ over $U(1)$, $H^2(G,U(1))$, where the group operation in $H^2(G,U(1))$ is induced by tensor products of the projective representation, i.e.,
\begin{equation}
    \omega_g^{\mu_1}\otimes \tilde{\omega}_g^{\mu_2}\cong \hat{\omega}_g^{\mu_1 \oplus \mu_2}.
    \label{eq:TensorProductProjReps}
\end{equation}
Thus the $\mu$ are typically referred to as cohomology classes. Correspondingly, we refer to a projective representation with a cocycle belonging to $\mu$ as a $\mu$-projective representation, and write it as $\omega_g^\mu$.

As with linear representations, a projective representation is said to be \emph{irreducible} if it has no nontrivial invariant subspaces. We refer to an irreducible $\mu$-projective representation as a $\mu$-irrep. Moreover, any $\mu$-projective representation of a finite group is completely reducible to a direct sum of $\mu$-irreps \cite{karpilovsky1994_ProjTraceCondition}. For finite groups, there is a finite number of $\mu$-irreps (up to projective equivalence) for each cohomology class. Moreover, for abelian groups, there is a unique $\mu$-irrep per class \cite{Backhouse1972}.

Finally, we note that the cohomology class associated with an MPS and a symmetry is not stable under tensor products. Namely, consider an injective MPS, $\ket{\psi_n[A]}$, such that the corresponding action of $U_g$ in the virtual space is given by $\omega_g^\mu$. Then, taking two copies, it is easy to see that the corresponding action of $U_g^{\otimes 2}$ in the virtual space of $\ket{\psi_n[A]}^{\otimes 2}$ is $(\omega_g^{\mu})^{\otimes 2}$ which, by Eq.~\eqref{eq:TensorProductProjReps}, belongs to the cohomology class $\mu\oplus\mu\in  H^2(G,U(1))$. In particular, as $H^2(G,U(1))$ is a group, there is an inverse element, $\mu^{-1}\in H^2(G,U(1))$, such that if $\tilde{U}_g$ acting on $\ket{\psi_n[\tilde{A}]}$ induces $\omega_g^{\mu^{-1}}$ in the virtual space, then the action of $U_g\otimes \tilde{U}_g$ on $\ket{\psi_n[A]}\otimes\ket{\psi_n[\tilde{A}]}$ corresponds to the trivial cohomology class (a linear representation).

\subsection{Transformation of Non-Normal MPS under Global On-Site Symmetries}
\label{app:AppendixANonInjectiveMPS}
In this section, we discuss the action of symmetries on non-injective MPS, as derived in Ref.~\cite{Schuch2011_Phases2}, see also Eq.~\eqref{eq:noninjectivesym} in the main text. 
More precisely, we will discuss how the fact that $U_g$ is a linear representation gives $P_g$, $e^{i\phi_g^\alpha}$ and $\omega_{g,\alpha}$ their further structure.

To begin, $P_g$ forms a permutation representation of the group $G$ by permuting the blocks of $A$. A permutation representation $\pi : G \rightarrow \text{Sym}([m])$ is a homomorphism from the group $G$ to the symmetric group on the set $[m]$. This allows one to write
\begin{align}
    P_g &= \sum_{\alpha\in[m]} \ketbra*{\pi_g(\alpha)}{\alpha},
    \label{eq::PermutationAction}
\end{align}
where $m$ is the number of blocks in the canonical form. Now consider the orbits of each block $\alpha\in [m]$ under the permutation action, $G~\cdot~\alpha  = \{ \pi_g(\alpha) : g\in G\}$. It may be the case that some nontrivial subset of $[m]$ is left invariant by $P_g$; that is, it could be that there is an $S \subsetneq [m]$ such that $G \cdot S = S$. In these cases, this block-structure is not "protected" by the symmetry and is referred to as "accidental degeneracy". In this paper, we only consider MPS with symmetries such that there are no such nontrivial invariant subsets.

So from now on let us consider the case in which $[m]$ has no nontrivial subsets under the action of $G$, i.e., $G~\cdot~\alpha =[m]$ for all $\alpha\in[m]$. 
Let us consider the orbit of $0\in [m]$ (i.e., the first block in $A$). Whilst $0$ cannot be invariant under all $g\in G$ (unless $[m]=\{0\}$), it may be invariant under a subgroup. That is, we can consider the stabilizer subgroup of $G$ with respect to $0$, $H_{0}$, given by
\begin{equation}
    H_{0} =\{ g\in G : \pi_g(0) = 0\} \le G.
\end{equation}
The orbit stabilizer theorem states that
\begin{equation}
    |G~\cdot~0| = \frac{|G|}{|H_{0}|}.
\end{equation}
Note, as $ G~\cdot~0 = [m]$, we therefore have that the number of blocks in the tensor is given by $|G|/|H_{0}|$.

Next, one can consider the left cosets induced by this subgroup
\begin{equation}
    [G:H_{0}] = \{k_0 H_{0},k_1 H_{0},\dots, k_{|G/H_0|-1} H_{0}\}.
\end{equation}
By construction $P_{g_1}=P_{g_2}$ if and only if $g_1,g_2$ belong to the same coset. Moreover, as $G\cdot 0 = [m]$, for each coset one may choose a representative, $k_\alpha \in G$, such that $k_\alpha$ maps the block $0$ to the block $\alpha$, i.e.,
\begin{equation}
    \pi_{k_\alpha}(0) = \alpha.
    \label{eq:AppendixDefOfK}
\end{equation}
Having chosen the representations, one can uniquely define maps $h:G\times [m] \rightarrow H_0$ and a maps $\xi:G\times[m]\rightarrow [m]$ by the relation
\begin{equation}
    g k_\alpha = k_{\xi(g,\alpha)} h(g,\alpha).
    \label{eq:AppendixDefofHandXi}
\end{equation}
That is, for all $g\in G $ and $ \alpha\in [m]$, the group element $g k_\alpha \in G$ [with $k_\alpha$ defined by Eq.~\eqref{eq:AppendixDefOfK}] belongs to the $\xi(g,\alpha)$ coset.

Observe, that Eqs~\eqref{eq:AppendixDefOfK}~and~\eqref{eq:AppendixDefofHandXi} in-fact completely specify the entire permutation action as
\begin{align}
    \pi_g(\alpha) &= \pi_{g} [\pi_{k_\alpha}(0)]\\
    &= \pi_{g k_\alpha} (0)\\
    &= \pi_{k_{\xi(g,\alpha)} h(g,\alpha) } (0)\\
    &= \pi_{k_{\xi(g,\alpha)}} (0)\\
    &= \xi(g,\alpha).
\end{align}
Thus, one could have equivalently characterized the permutation action by specifying $H_0$. However, the choice of $0$ was completely arbitrary. In fact, as all $a\in[m]$ yield the same orbit (as there is only one invariant subspace), the stabilizers $H_{\alpha\in[m]}$ are all isomorphic. Thus, the permutation action of $U_g$ is, up to relabeling, completely specified by a subgroup $H\le G$.

Before moving on, we make two final comments regarding the permutation representation. Firstly, the subgroup, $H$, need not be normal. However, if it is (as, for example, in the case of abelian groups), then $[G:H]$ can be given a well-defined group multiplication, yielding the quotient group $G/H$, with a homomorphism $k:G\rightarrow G/H$. In such cases, $P_g=P_{k(g)}$ is a representation of $G/H$. This will become important later in the construction of protocols showing the trivialization of the phases of non-injective MPS under abelian symmetries, cf. Appendices~\ref{app:Example} and~\ref{app:abelian}. Secondly, as we have assumed that there is only one invariant subspace under the action of the permutation group, if $H=G$, then $[m]=\{0\}$, i.e., our tensor has only one block. Thus, the MPS is normal.

\subsection{Comment on Blocking and Locality of the Physical Symmetry} \label{sec:BlockingAndLocalSymmetry}
In this section, we discuss the mathematical details of how the effective symmetry on the blocked MPS in canonical form relates to the global on-site symmetry on the MPS before blocking to canonical form. We chose to define our operations with respect to some pre-determined global on-site symmetry as this fits more naturally with the circuit picture of phases. However, in the previous section, the symmetry action was defined on the canonical form. This canonical form is defined after blocking. This means that the symmetry considered in, for example, Eq.~\eqref{eq:noninjectivesym} in the main text is not necessarily the original symmetry of the system. Here we explicitly provide the mathematical background regarding how this tension is resolved.

To make this concrete, consider an MPS $\ket{\psi_n[A]}\in\mathcal{H}_d^{\otimes n}$ with a global on-site symmetry $U_g$. Consider this to be the original system. Let us also say we must block this tensor at least $l$ times in order to bring the tensor to canonical form (that is, $l$ is the so-called \textit{block-injectivity length} of $A$). We write $(A^{(l)})^{i_1,\dots,i_m}=\prod_{j=1}^m A^{i_j}$. The canonical form, $\tilde{A}^i=\oplus_\alpha \tilde{A}_\alpha^i$, then has the property that
\begin{equation}
    \ket*{\psi_{m}[\tilde{A}]} =|\psi_{m}[A^{(l)}]\rangle = \ket*{\psi_{n=ml}[A]} \in \mathds{C}_d^{\otimes n}.
\end{equation}
Note, we have deliberately expressed this state as an element of the original Hilbert space. Consequently, it is only defined for $n$ equal to integer multiples of $l$. However, as we are ultimately interested in the thermodynamic limit, this is not a problem.

Now, if we consider the reduced density matrix of this state on $l$ sites, we find that the rank is equal to  $\sum_\alpha D_\alpha^2$, where $D_\alpha$ is the dimension of the blocks $A_\alpha^i$~\cite{PerezGarciaEtAl2007_MPSrepresentations}~\footnote{Note that $D=\sum_\alpha D_\alpha$. Thus $\sum_\alpha D_\alpha^2 \le D^2$.}. In particular, it may be less than $d^l$. In this case, one can verify that the structure of Eq.~\eqref{eq:noninjectivesym} in the main text, with $U_g := U_g^{\otimes l}$, means that $U_g^{\otimes l}$ (in an appropriate physical basis) can be written as
\begin{equation}
    U_g^{\otimes l} \cong_{LU} \begin{pmatrix}
  W_g  & 0 \\
 0  & U_g^{\rm extra} 
 \end{pmatrix},
\label{eq:UgBlockStructure}
\end{equation}
where  $U_g^{\rm extra}$ acts locally on the space where $\ket{\psi_{m}[A]} \in \mathds{C}_{d^l}^{\otimes m}$ has no support, and
\begin{equation}
    \label{eq:AppSymmetryPhysicalSubspace}
    W_g = \left[\bigoplus_{\alpha\in K}  e^{i\phi_g^\alpha} (\omega^\mu)_{h(g,\alpha)}^* \otimes (\omega^\mu)_{h(g,\alpha)}\right] \left(P_g \otimes \id^{\otimes 2}\right).
\end{equation}
That is, the nontrivial action of the symmetry on $\ket{\psi_n[A]}$ on $l$ sites is equivalent to the action of the symmetry in the virtual space (after blocking and bringing it to CF).

Furthermore, it is clear from Eq.~\eqref{eq:noninjectivesym} in the main text, that blocking past the block-injectivity length will at most change the phases $e^{i\phi_g^\alpha}$. Indeed, in the case of finite groups, it is clear that after blocking a finite number of times these phases can be eliminated. Thus, for any $l$ greater than or equal to the  block-injectivity length, Eq.~\eqref{eq:UgBlockStructure} holds.

\subsection{Phases of Matter under Symmetries using Parent Hamiltonians} \label{sec:AppendixAPhasesHamiltonians}

We are now in a position to review the classification of phases of MPS.

\begin{thm}  \label{thm:phasesnoninjective}\cite{Chen2011_Phases1, Schuch2011_Phases2}
    Let $G$ be a finite group and $U_g$ be a global on-site symmetry. Two symmetry-protected MPS belong to the same phase under $G$ if and only if the action of the symmetry on the tensor in canonical form corresponds to the same subgroup, $H\le G$, and the same cohomology class, $\mu\in H^2(H,U(1))$. \label{thm:NonInjectivePhases}
\end{thm}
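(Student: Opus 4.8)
The plan is to prove the two directions of the equivalence separately: first, that the pair $(H,\mu)$ — understood up to simultaneous $G$-conjugation of the subgroup and its cohomology class — is an invariant of the phase; second, that any two symmetric MPS sharing this invariant can be joined by a symmetric gapped path (equivalently, a symmetric finite-depth circuit in the sense of Section~\ref{subsec:topophasescircuits}).

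\emph{Necessity.} Suppose $\ket{\psi_n[A]}$ and $\ket{\psi_n[B]}$ lie in the same phase, so their parent Hamiltonians are joined by a symmetric gapped path, or equivalently they are related by a symmetric finite-depth circuit $U_n$ with vanishing error. I would first extract $H$: the ground-space degeneracy equals the number $m=|G|/|H|$ of blocks in the canonical form, which is constant along a gapped path, so $|H_A|=|H_B|$; more strongly, the (finite-dimensional) representation of $G$ carried by the ground space varies continuously along the path, hence is constant up to isomorphism, and its permutation part determines $H$ up to conjugacy via the orbit-stabilizer correspondence of Appendix~\ref{app:AppendixANonInjectiveMPS}. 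Then I would show $\mu$ is invariant using the edge-mode argument: cutting the chain (or a large region), the Schmidt/virtual space carries a projective $G$-action whose cocycle class on the stabilizer of each block is exactly the $\mu$ of Eq.~\eqref{eq:noninjectivesym}. A symmetric finite-depth circuit acts near the cut as an honest (linear) representation of $G$ together with a gauge transformation on the bond; by the fundamental theorem, Eq.~\eqref{eq:FundThmInj}, this sends the virtual action $\omega_h \mapsto e^{i\phi}\, W\, \omega_h\, \Omega$ with $W$ linear and $\Omega$ a gauge, which only multiplies the cocycle by a coboundary. Hence $\mu$ is unchanged, and $(H_A,\mu_A)=(H_B,\mu_B)$ up to conjugacy.

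\emph{Sufficiency.} Conversely, assume $(H_A,\mu_A)=(H_B,\mu_B)=(H,\mu)$. The strategy is to connect each MPS, by a symmetric gapped path, to a common renormalization fixed-point representative of the phase — concretely a tensor product of a ``coset-GHZ'' component encoding the $|G|/|H|$ blocks permuted by $G$ with a chain of maximally entangled pairs carrying $\omega^\mu_h\otimes(\omega^\mu_h)^*$ on the bond, as in Eq.~\eqref{fig::PhaseRepresentativeTensor}. For a normal MPS ($H=G$), one blocks to canonical form and runs the standard symmetric MPS coarse-graining flow, which drives the tensor to a fixed point carrying the same virtual $\mu$-irrep; one then deforms the parent Hamiltonian along this flow, keeping the $G$-symmetry exact, and verifies that the gap never closes (e.g.\ via the martingale/Knabe-type bound and the explicit interpolating path of Ref.~\cite{Schuch2011_Phases2}). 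Since $A$ and $B$ induce the \emph{same} cohomology class, their fixed-point virtual representations are projectively equivalent, Eq.~\eqref{eq:ProjectiveEquivalence}, so a symmetric bond gauge identifies them. For a non-normal MPS one treats the permutation sector and the projective sector in tandem: $P_g$ is pinned (up to conjugacy) to $H$, so the symmetric coarse-graining reaches the coset-GHZ $\otimes$ (projective Bell chain) fixed point, and gluing the two halves of the argument produces the connecting path.

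\emph{Main obstacle.} The hard part is sufficiency — producing a symmetric gapped interpolation whose gap stays open \emph{uniformly in $n$} while the $G$-symmetry is preserved \emph{exactly} at every point of the path, and doing so simultaneously for the GHZ/permutation sector and the projective edge-mode sector of a non-normal MPS. One must check that the coarse-graining fixed point is reached without ever breaking the stabilizer structure that defines $H$, that the residual bond gauge can be fixed symmetrically precisely because $\mu_A=\mu_B$, and that blocking past the block-injectivity length (which at most reshuffles the $1$D phases $e^{i\phi_g^\alpha}$, cf.\ Appendix~\ref{sec:BlockingAndLocalSymmetry}) leaves the invariant untouched. The necessity direction is comparatively routine once one accepts that symmetric finite-depth circuits act near a cut as honest representations modulo bond gauge.
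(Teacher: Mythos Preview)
The paper does not prove this theorem. It is stated in Appendix~\ref{sec:AppendixAPhasesHamiltonians} purely as a review result, with the proof attributed to Refs.~\cite{Chen2011_Phases1,Schuch2011_Phases2}; the surrounding text explicitly says ``We are now in a position to \emph{review} the classification of phases of MPS,'' and the theorem itself carries the citation in its header. There is therefore no in-paper proof to compare your proposal against.

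That said, your sketch is broadly in the spirit of Ref.~\cite{Schuch2011_Phases2}: invariance of $(H,\mu)$ from the permutation action on the ground space plus the projective action on the virtual/edge space, and sufficiency via interpolation to a fixed-point representative along a symmetric gapped path. A few cautions if you intend this as an actual proof rather than a plan. First, the necessity argument as you phrase it leans on the circuit picture (``a symmetric finite-depth circuit acts near the cut as an honest representation modulo bond gauge''), but the theorem as stated and proved in Ref.~\cite{Schuch2011_Phases2} is in the Hamiltonian picture; the circuit equivalence is a separate (and, for the symmetric non-normal case, not fully rigorous) statement --- see the discussion in Appendix~\ref{sec:AppendixAPhasesCircuits}. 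Second, for non-normal MPS the actual construction in Ref.~\cite{Schuch2011_Phases2} does not run an RG flow in the usual sense; it builds an explicit one-parameter family of parent Hamiltonians and controls the gap directly, and the treatment of the permutation and cohomology sectors is more delicate than ``in tandem.'' Your identification of the main obstacle (keeping the gap open uniformly while preserving $G$ exactly, especially in the symmetry-breaking sector) is accurate, but the sketch does not yet contain the mechanism that closes it.
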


An example of this classification is provided in Appendix~\ref{app:Example}.

Some comments are in order. Firstly, the fact that the phase depends only on $(H,\mu)$, both of which correspond to group properties, reveals a small technicality regarding the phase diagram associated with a given $U_g$; it may be the case that not every global on-site symmetry is capable of supporting every phase of matter associated with $G$. That is, for some $U_g$, there may be certain phases for which no MPS exists that belongs to this phase. In order for a global on-site symmetry, $U_g$, to be able to support every phase, $(H,\mu)$, of $G$, there must be an $m\in\mathds{N}$ such that $U_g^{\otimes m}\cong W_g(H,\mu) \oplus U_g^{\rm extra}$ [with $W_g(H,\mu)$ as in Eq.~\eqref{eq:AppSymmetryPhysicalSubspace}]. This may not be the case, even when $U_g$ is a faithful representation of a finite group. As a concrete example, consider the finite, abelian group, $K_4\cong \mathds{Z}_2\times \mathds{Z}_2$, and the representation $U_{(i,j)\in \mathds{Z}_2\times \mathds{Z}_2} = (-1)^i \oplus (-1)^j$. It is easy to verify that this representation is indeed faithful. Now consider the phase $(H=K_4,\mu=1)$. One can verify that the associated symmetry action is given by $W_{(i,j)}=1\oplus (-1)^i \oplus (-1)^j\oplus(-1)^{ij}$. However, $W_{(i,j)}$ is not found as a sub-representation of $U_g^{\otimes m}$ for any $m\in\mathds{N}$.

In general, it requires some computation to say which phases of $G$ are supported by a given global on-site symmetry, $U_g$. Nonetheless, this is not a problem for our results as, for example in Result~\ref{res:AbelianPhaseDiagram}, we show that whatever the phase diagram is for a given $U_g$, it trivialises through the inclusion of symmetric measurements and feedforward. This follows as whenever a phase is non-empty, the analysis from the previous section guarantees that there is a $m\in\mathds{N}$ such that $U_g^{\otimes m}$ contains our construction in Eq.~\eqref{eq:Ug} as a sub-representation.

Secondly, we say a state is in the trivial phase if it is in the same phase as a product state. All symmetric product states are in the same phase and correspond to normal, symmetric MPS. Normal symmetric MPS that are not in the trivial phase are referred to as symmetry protected topological (SPT) phases. With the above notation, these correspond to phases labeled by $(H=G, \mu\ne 0)$ and which depend only on the cohomology class of the corresponding projective representation. However, as discussed in Appendix~\ref{app:AppendixAInjectiveMPS}, the cohomology class of projective representations is not stable under tensor products. Consequently, SPT phases are also not stable under tensor products. In particular, for any SPT, $\ket{\psi_n[A(\mu)]}$, there is another state $\ket{\psi_n[A(\mu^{-1})]}$ such that $\ket{\psi_n[A(\mu)]}\otimes \ket{\psi_n[A(\mu^{-1})]} \in (\mathcal{H}^{\otimes 2})^{\otimes n}$ is in the trivial phase.

Thirdly, phases of matter with label $(H\lneq G, \mu)$ correspond to non-normal symmetric MPS and are referred to as symmetry breaking phases. As mentioned before, the parent Hamiltonians of non-injective MPS have degenerate ground state subspaces. In the absence of symmetries, this degeneracy is not stable to perturbations, and thus it is often said that there is no topological order in 1D~\cite{Chen2011_Phases1} (see also discussion in Ref.~\cite{BravyiEtAl2010_StabilityofDegenerateGroundStates, ZengWen2015_GappedQuantumLiquidsAndSLCircuits}). However, in the presence of symmetries, this ground state degeneracy is believed to be stable to symmetric perturbations (see discussion in Ref.~\cite{Schuch2011_Phases2}). In this paper, we make no claims about the stability of phases as classified by Theorem~\ref{thm:NonInjectivePhases}.

\subsection{Correspondence between the Classification of Phases for Hamiltonians and Circuits} \label{sec:AppendixAPhasesCircuits}
The classification from Ref.~\cite{Schuch2011_Phases2} is derived from considering continuous paths of gapped, parent Hamiltonians. It is widely believed that the classification is the same with respect to local symmetric circuits of $O(1)$ gate-size and $O(\polylog n)$ depth, as defined in the main text \cite{Huang2015_symandlindepthcircuits, CoserandGarcia2019_PhasesMixedStatesDissipativeEvo}. We now make some remarks regarding this correspondence.

Note that both definitions invoke the thermodynamic limit. The Hamiltonian classification uses the notion of a spectral gap, while the circuit version is formulated over sequences of states. Moreover, parent Hamiltonians might have a degenerate ground state subspace. In that case, the corresponding state transformed by the quantum circuit is taken to be the (non-injective) MPS formed by the direct sum of the normal tensors spanning the ground state subspace (e.g., a GHZ state, see Ref.~\cite{Schuch2011_Phases2}).

The origin of the polylogarithmic scaling in the circuit can be traced back to the tails occurring when a local Hamiltonian is evolved for finite time. Such unitaries arise in the quasi-adiabatic continuation~\cite{hastings2005quasi} (see the discussion in Ref.~\cite{CoserandGarcia2019_PhasesMixedStatesDissipativeEvo}). Note that a strictly constant depth is not sufficient to reproduce every short-range correlated state in the trivial phase. In particular, no local quantum circuit with depth $l = O(\log N)$ can faithfully approximate a translation-invariant injective MPS with a nonzero correlation length in the thermodynamic limit~\cite{malz2023preparation}.

In the absence of symmetries, the equivalence of phases of MPS can be made exact by constraining the quantum circuit to $O(\polylog n)$ depth with $O(1)$ size gates~\cite{CoserandGarcia2019_PhasesMixedStatesDissipativeEvo, haah2021quantum, Schuch2011_Phases2, malz2023preparation, Huang2015_symandlindepthcircuits}. Moreover, as all of our constructions in this paper only use finitely-many, finite-sized operations, our results hold for any other reasonable classification of phases with ``shallow'' symmetric circuits. In particular, one could also consider finite-depth circuits with polylogarithmic-size gates, for which it is known that state transformations within the same phase are possible even in the presence of symmetries~\cite{CoserandGarcia2019_PhasesMixedStatesDissipativeEvo}.

\section{An Example of Phases under an Abelian Symmetry}
\label{app:Example}
In this appendix, we provide an example to accompany Result~\ref{res:AbelianPhaseDiagram}. To this end, we will consider the finite abelian group $\mathds{Z}_4\times \mathds{Z}_2$, and discuss the associated symmetry protected and symmetry breaking phases of matter. Moreover, we will introduce some notation that will be quite useful in the proof of Result~\ref{res:AbelianPhaseDiagram} in Appendix~\ref{app:abelian}.

\subsection{Phases of Matter under $\mathds{Z}_4\times \mathds{Z}_2$}
The finite abelian group $G= \mathds{Z}_4\times \mathds{Z}_2$ consists of the elements
\begin{equation}
\{ (0,0), (0,1), (1,0), (1,1), (2,0), (2,1), (3,0), (3,1) \}    
\end{equation}
with the group operation given by 
\begin{equation}
    (i,j)\oplus_G (k,l) = (i\oplus_4 k, j \oplus_2 l)
    \label{eq:ExampleGroupAddition}
\end{equation}
where $\oplus_4$ and $\oplus_2$ indicate modulo 4 and 2 addition respectively. In the following, we will refer to the group elements either by writing $g\in G$, or more explicitly by writing $(i,j)\in G$, whatever is more convenient.

As a linear unitary representation of $G$, let us consider $U_{g=(j,k)}=1\oplus (i)^j \oplus (-1)^k$ \footnote{It is easily verified that this is a faithful representation of $G$ that generates all representations under tensor powers, and therefore $U_g$ supports all phases of $G$ (see discussion in Appendix~\ref{sec:BlockingAndLocalSymmetry}). Indeed, it can be easily verified that any smaller representation does not have this property. Note, there are, however, smaller faithful representations.}. Correspondingly, as a local system, let us consider qutrits, $\mathcal{H}=\mathds{C}_3$. As explained in Appendix~\ref{app:noninjectivesym}, we may read off the associated phases of matter solely from the group properties of $G$ and its subgroups. $G$ has six subgroups:
\begin{enumerate}
    \item $\mathds{Z}_4\times \mathds{Z}_2 = G  $,
    \item $2\mathds{Z}_2\times \mathds{Z}_2=\{ (0,0), (0,1), (2,0), (2,1)  \}$\footnote{Note, we use the notation $2\mathds{Z}_2$ to indicate multiplying the elements of $\mathds{Z}_2$ by 2. This is because all these subgroups are groups with respect to the group operation defined in Eq.~\eqref{eq:ExampleGroupAddition}.},
    \item $ \id \times \mathds{Z}_2 = \{ (0,0), (0,1) \}$,
    \item $ \mathds{Z}_4 \times \id = \{ (0,0), (1,0), (2,0), (3,0) \}$,
    \item $ 2 \mathds{Z}_2 \times \id = \{ (0,0), (2,0) \}$,
    \item $ \id \times \id = \{(0,0)\}$.
\end{enumerate}
Of these subgroups, only the first two have a nontrivial cohomology group, both of which are isomorphic to $\mathds{Z}_2$ \cite{subwikiSchurMultiplier}. Therefore, there are eight phases associated with $\mathds{Z}_4\times \mathds{Z}_2$, as shown in Figure~\ref{fig:phasesunderZ4xZ2}.
\begin{figure}[t]
        \includegraphics[width=.35\linewidth]{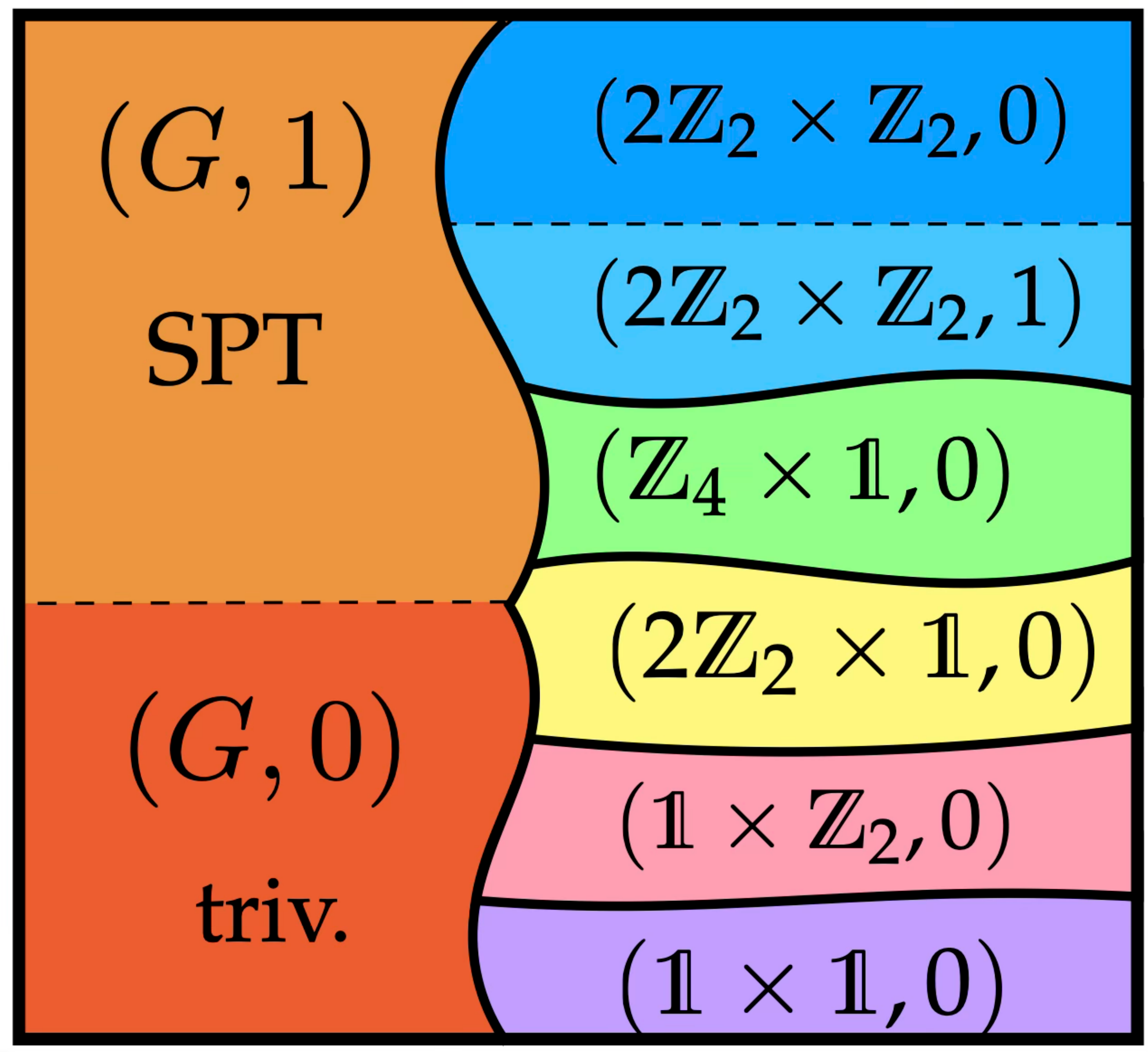}
        \caption{Phases under $\mathds{Z}_4\times \mathds{Z}_2$: The phase $\tilde{H}=\mathds{Z}_4\times \mathds{Z}_2$ with the trivial cohomology class ($\mu=0$) corresponds to the trivial phase; the phase $\tilde{H}=\mathds{Z}_4\times \mathds{Z}_2$ with the nontrivial cohomology class ($\mu=1$) corresponds to the only SPT phase. All other phases are symmetry breaking, and therefore are associated with non-normal MPS.}
        \label{fig:phasesunderZ4xZ2}
\end{figure}

We will now pick a nontrivial phase, $(\tilde{H},\mu)$ \footnote{Note, we use $\tilde{H}$ instead of $H$ to denote the subgroup so to have a more convenient notation later}, and provide the details of the symmetries and the measurements that allow one to transform a product state to this phase. In particular, we choose the subgroup
\begin{align}
    \tilde{H}= 2\mathds{Z}_2\times \mathds{Z}_2
\end{align}
as it contains all the details one needs to fully generalize the protocol. This subgroup is isomorphic to the Klein 4 group, and it is well known that the second cohomology group of the Klein 4 group is isomorphic to $\mathds{Z}_2$, i.e., $H^2[\tilde{H},U(1)]\cong \mathds{Z}_2$. Moreover, we choose the nontrivial cohomology class $\mu=1$. To summarize, we will demonstrate that under $\mathds{Z}_4\times \mathds{Z}_2$, the phase
\begin{equation}
   (\tilde{H},\mu)=(2\mathds{Z}_2\times \mathds{Z}_2, 1)
\end{equation} 
is reachable from a product state.

As explained above, this example will serve as a template from which we can completely generalize our construction to reach not only all phases under $G$, but all phases under any finite abelian group. To this end, we begin with the group properties of $G$ and $H$, including their linear and projective representations, using notation that will allow us to generalize to the full proof in Appendix~\ref{app:abelian}. We will then explicitly define $U_g$ in Eq.~\eqref{eq:Ug} and $\{P_{q,r}\}_{q,r\in G}$ in Eq.~\eqref{eq:Projectors}.

\subsection{Group Properties of $\mathds{Z}_4\times \mathds{Z}_2$ and $\mathds{Z}_2\times \mathds{Z}_2$}
As aforementioned, we begin by discussing the group properties of $G$ and $H$, including their linear and projective representations. 

\subsubsection{Subgroup Properties of $\mathds{Z}_4\times \mathds{Z}_2$}
\label{sec:ExampleGroupProperties}
In this section, we discuss how the subgroup, $\tilde{H}$ fits into the structure of $G$. In doing so, we will introduce a detailed notation that will allow us to fully generalize the following discussion in Appendix~\ref{sec:AbelianGroupProperties}.
Recall, we chose to consider the subgroup
\begin{align}
    \tilde{H} = 2 \mathds{Z}_2 \times \mathds{Z}_2 =  \{ (0,0), (0,1), (2,0), (2,1)  \} \le G.
    \label{eq:ExampleSubgroup}
\end{align}
It will be useful to also define
\begin{align}
    \label{eq:ExampleSubSet}
    H = \mathds{Z}_2 \times \mathds{Z}_2 =  \{ (0,0), (0,1), (1,0), (1,1)  \} \subseteq G.
\end{align}
Note, $H$ is only a subset of $G$. However, defining the group operation $(i,j)\oplus_H (k,l) = (i\oplus_2 j, k\oplus_2 l)$, for all $(i,j),(k,l)\in H$, we have $(\tilde{H},\oplus_G) \cong (H, \oplus_H)$.

As discussed in Appendix~\ref{app:noninjectivesym}, our choice of subgroup, $\tilde{H}$, implicitly defines a permutation action of $U_g$. In particular, the permutation action is constant on the cosets of $\tilde{H}$. In the case of abelian groups, all subgroups are normal~\footnote{A subgroup is normal if it is invariant under conjugation by all members of the group. It is a special property of abelian groups that all subgroups are normal.}. Consequently, the set of cosets can be promoted to the quotient group, and the permutation action of $U_g$ becomes a representation of this quotient group. Explicitly, we write the quotient group as 
\begin{equation}
    G/\tilde{H}=\{(0,0) \tilde{H}, (1,0) \tilde{H} \}
\end{equation}
which is isomorphic to 
\begin{align}
    K = \mathds{Z}_2 \times \mathds{1} = \{(0,0),(1,0)\} \subseteq G
\end{align}
equipped with $(i,0)\oplus_K (k,0) = (i\oplus_2 k, 0)$, for all $(i,0),(k,0)\in K$.

It will be useful to note that there is a natural bijection between $G$ and $H\times K$, $g\mapsto (h(g),k(g))$, corresponding to element-wise Euclidean division. Namely, let us define
\begin{align}
    h(g) &= \left( \left\lfloor \frac{i}{2} \right\rfloor , j \right) \label{eq:ExampleBijection1}\\
    k(g) &= ( \text{Mod}(i,2),0)  \label{eq:ExampleBijection2}
\end{align}
for all $g=(i,j)\in G$. 

The reverse map of this bijection is given by\footnote{Note that the image of $h$ can be viewed as elements of either the group $H$ or the group $G$ (likewise for $K$). Hence this expression is well-defined.}
\begin{align}
    g=(i,j) &= |K|\ h\left( g \right) \oplus_G k\left( g \right) \\
    &= (2,1) \left( \left\lfloor \frac{i}{2} \right\rfloor,\ j \right) \oplus_G  ( \text{Mod}(i,2),\ 0)\\
    &= \left(2 \left\lfloor \frac{i}{2} \right\rfloor \oplus_4 \text{Mod}(i,2),\ j\right)   \label{eq:ExampleEuclidianDivision}
\end{align}
where we have defined $ |K| = (2,1)$. Thus, we see that the bijection corresponds to element wise Euclidean division with respect to $|K|$, where $h(g)$ is the tuple of quotients and $k(g)$ the tuple of remainders. As a result of this structure, $h(g)$ and $k(g)$ obey a number of identities [see Eqs.~\eqref{eq:AbelianGroupIdenitity1}~to~\eqref{eq:AbelianGroupIdenitity6} in the next appendix].

Finally, the structure of finite abelian groups means that the quotient group is in fact isomorphic to another subgroup of $G$. Namely, $(K,\oplus_K)$ is isomorphic to $(\tilde{K}, \oplus_G)$ with 
\begin{equation}
    \tilde{K}=|H| K = \{(0,0),\ (2,0)\} = 2\mathds{Z}_2 \times 1 \le G,
\end{equation}
where $ |H| = (2,2)$. As with $\tilde{H}$, one can then take the quotient of $G$ by $\tilde{K}$
\begin{equation}
    G/\tilde{K} = \{(0,0) \tilde{K},\ (0,1) \tilde{K},\ (1,0) \tilde{K},\ (1,1) \tilde{K} \}.
\end{equation}
This yields a second natural bijection between $G$ and $(H,K)$
\begin{align}
    \hat{h}\left( (i,j) \right) &= \left( \text{Mod}(i,2), j  \right) \label{eq:ExampleBijection3}\\
    \hat{k}\left( (i,j) \right) &= \left( \left\lfloor \frac{i}{2}\right\rfloor , 0 \right).\label{eq:ExampleBijection4}
\end{align}
Note, under this bijection, elements of $H$ now correspond to the remainders and elements of $K$ correspond to the quotients. 

\subsubsection{Linear Representation Properties of $\mathds{Z}_4\times \mathds{Z}_2$}
\label{sec:ExampleAbelianGroupLinearRep}
Next, we discuss the linear representations of $G$ and $H$. As $G$ is abelian, the irreps of $G$ are 1D and can be labeled by elements of $G$. Specifically, we may write 
\begin{align}
    \chi^{q\in G}(g) = \chi^{q=(c,d)\in G}_{g=(a,b)\in G} 
    &=  ((i)^c)^a \times ((-1)^d)^b,
    \label{eq:ExampleCharacterofG}
\end{align}
where $q\in G$ labels the irrep and $\chi^q_g$ is therefore a phase. These can be understood as powers of roots of unity, in which $q\in G$ specifies the root of unity, and $g$ specifies to which power this root should be raised.

One can also consider the irreps of the subgroup $\tilde{H}\cong H$
\begin{align}
    \tilde{\chi}^{p=(c,d)\in H}_{h=(a,b)\in H} &= ((-1)^c)^a \times ((-1)^d)^b
\end{align}
Clearly, the irreps of $H$ are contained in the irreps of $G$. 

Explicitly, we may write
\begin{align}
    \tilde{\chi}^{(c,d)\in H}_{(a,b)\in H} &= \chi^{(2c, d)\in G}_{(a,b)\in G}.  \label{eq:Exampleirrepsubgroup}
\end{align}
But we can also reference irreps of $G$ using elements of $H$, e.g.,
\begin{equation}
    \chi^{(1,0)\in H\subseteq G}_{g=(a,b)} = i^a. \label{eq:ExampleCharacterUsingSubset}
\end{equation}
Here we are considering $(1,0)\in H$ as an element of $G$. We will find this notation useful later when defining $U_g$ and $P_{r,q}$ (see also Appendix~\ref{sec:AbelianGroupLinearRep}).

\subsubsection{Projective Representations of $\mathds{Z}_2\times \mathds{Z}_2$}
\label{sec:ExampleProjectiveRepresentations}

We now consider projective representations of the subgroup $\tilde{H}\cong H=\mathds{Z}_2\times \mathds{Z}_2$. As mentioned, it is well known that $\mathds{Z}_2\times \mathds{Z}_2$ has a second cohomology group isomorphic to $\mathds{Z}_2$. Moreover, finite abelian groups have the property that, up to projective equivalence, there is only one irreducible projective representation per cohomology class. The trivial projective representation, $\mu=0$, can be chosen as $\omega_{g}^{0}=1$ for all $g\in H$. For the nontrivial class, $\mu=1$, it is well-known that the irreducible projective representation can be chosen to be the Pauli representation:
\begin{align}
    \omega_{(0,0)\in H}^{1}&= \id, & \omega_{(1,0)\in H}^{1}&= \sigma_z, \nonumber\\
    \omega_{(0,1)\in H}^{1}&= \sigma_x, & \omega_{(1,1)\in H}^{1}&=\sigma_y. \label{eq:pauliprepresentations}
\end{align}
In the following, we consider the nontrivial cohomology class, $\mu=1$, and therefore drop the $\mu$ from now on. It is easily verified by considering commutation relations that
\begin{equation}
     \omega_g \omega_h = \tilde{\chi}^{\phi(h)}_g \omega_h \omega_g,\ \forall g,h\in H,
     \label{eq:ExamplePauliQuasiCom}
\end{equation}
where $\tilde{\chi}^{\phi(h)}$ is an irrep of $H$ labeled by $\phi: H \rightarrow H$, which is given by 
\begin{equation}
    \phi ((i,j))  = (j,i). \label{eq:ExampleHomomorphism}
\end{equation}
Later, we will use the fact that $\phi: H \rightarrow H$  is in fact a homomorphism. One might also note that $\{\omega_{g}\}$ forms an orthonormal basis for $2\times2$ matrices. This also is not a coincidence as we will see later in Appendix~\ref{sec:AppendixProofProjectiveRepresentations}.

\subsection{Example of Symmetry and Measurements}
Let us finish this appendix by explicitly providing the symmetry in Eq.~\eqref{eq:Ug} and the measurements in Eq.~\eqref{eq:Projectors} using the notation introduced above.

To begin, the symmetry we chose in Eq.~\eqref{eq:Ug} is given by
    \begin{align}
    U_{g=(i,j)} &=\left(\bigoplus_{\alpha\in K}  (\omega_{h(g,\alpha)}^\mu)^*\otimes (\omega_{h(g,\alpha)}^\mu)\right) \left( X_{|K|}^{k(g)} \otimes \id^{\otimes 2}\right) \nonumber \\
    & = \left(\bigoplus_{a \in \{0,1\}} \left(\sigma_x^{j} \sigma_z^{\lfloor \frac{i\oplus_4 a}{2} \rfloor} \right)^{\otimes 2}  \right) \left(\sigma_x^{\text{Mod}(i,2)} \otimes \id^{\otimes 2}\right)
\end{align} 
Note, as $|K|=2$, the cyclic permutation is simply a Pauli-x. Additionally, note that this is not the same as the original symmetry\footnote{Strictly speaking, using the notation of Eq.~\eqref{eq:UgBlockStructure}, the above equation is for $W_g$}. Nonetheless, it is easy to verify that this symmetry appears as a sub-representation of $U_g^{\otimes4}$. Thus, following the discussion in Appendix \ref{sec:BlockingAndLocalSymmetry}, we consider all actions after blocking four sites together.

Next, we have the measurements. As explained, we project onto $(\id\otimes V_{r,q})\ket{\tilde{\Phi}^+}$, where now
\begin{align}
    \ket*{\widetilde{\Phi}^+}&\propto \sum_{i\in\{0,1\}} \sum_{j,k\in\{0,1\}} \ket{i,j,k,i,k,j}
\end{align}
and
\begin{equation}
    V_{r,q} = U_r \tilde{V}_q\equiv U_r\  [\tilde{Z}^q \otimes  \id \otimes (\omega_{h(q)}^\mu)],
\end{equation}
where
\begin{align}
 \tilde{Z}^{q=(c,d)} &= Z_{|K|}^{k(q)} \left( \bigoplus_{\alpha\in K} \chi^{\phi_\mu (h(q)) }_\alpha   \right) \label{eq:ExampleAbstract}\\
            &= \sigma_z^{\text{Mod}(c,2)} 
                \begin{pmatrix}
                    1 & 0 \\
                    0 & (i)^d   
                \end{pmatrix} \label{eq:ExampleZTilde}
\end{align}
In Eq.~\eqref{eq:ExampleAbstract}, we decompose the phase matrix into a generalized Pauli-z (which, as $|K|=2$, is in fact just Pauli-z) and a phase matrix. This second phase matrix ensures that $\tilde{V}_q$ quasi-commutes with $U_g$ when $h(g)\ne e$. Note in Eq.~\eqref{eq:ExampleAbstract}, we are referencing an irrep of $G$ using an element of $H\subseteq G$, as discussed in Appendix \ref{sec:ExampleAbelianGroupLinearRep}.

Given this definition of $U_g$ and $V_{r,q}$, it is easy to verify that $V_{r,q}$ quasi-commutes with $U_g$ and forms an orthonormal unitary matrix basis. Moreover, it is easy to verify that Eq.~\eqref{eq:slidethrough} and Eq.~\eqref{eq:ZeroVectorOutcome} hold. Consequently, the protocol as described in the main text is symmetry preserving and the phase $(2\mathbb{Z}_2\times \mathbb{Z}_2, 1)$ is reachable from the trivial phase. In the next appendix, we generalize these constructions for all phases of all finite abelian groups and show the same properties hold.

\section{Proof of Result~\ref{res:AbelianPhaseDiagram}}\label{app:abelian}
In this section, we prove Lemmata~\ref{lem:bellmeasurements}, and~\ref{lem:ZeroVectorOutcome}, and consequently Result~\ref{res:AbelianPhaseDiagram} in the main text. As some of the notation can be dense, we remind the reader that one can always consult the detailed example given in Appendix~\ref{app:Example}.

\subsection{Properties of Finite Abelian Groups}
As before, we begin by recapping some general properties of finite abelian groups and their subgroups, including linear and projective representations. 

\subsubsection{Group Properties}
\label{sec:AbelianGroupProperties}
To begin, by the Fundamental Theorem of Finite Abelian Groups, any finite abelian group is isomorphic to a product of cyclic groups of prime power order, i.e., 
\begin{equation}\label{eq:GroupStructure}
    G \cong \times_{m\in[M]} \mathds{Z}_{p_m^{r_m}} \equiv \times_{m\in[M]} G_m
\end{equation}
with $p_m,\ r_m,\ M \in \mathds{N}$ and $p_m$ are prime. We will sometimes write the corresponding group elements of $G$ as $g=(g_m)_{m\in[M]}$. We use $\oplus_G$ to denote element-wise modulo addition, i.e, $g_1\oplus_G g_2 = ((g_1)_m\oplus_{|G_m|}(g_2)_m)_{m\in[M]}$ [see Eq.~\eqref{eq:ExampleGroupAddition}].

For abelian groups, any subgroup, $\tilde{H}\le G$, is also isomorphic to a product of cyclic groups of the-same-prime power order, i.e., $\tilde{H} \cong \times_{m\in[M]} \mathds{Z}_{p_m^{\tilde{r}_m}}\equiv H$ with $\tilde{r}_m\le r_m$. Moreover, every subgroup is normal. Therefore, given a subgroup $\tilde{H}\le G$, the quotient group $G/\tilde{H}$ is well-defined. For $\tilde{H}\le G$ as above, we define
\begin{equation}
    G/\tilde{H} \cong \times_{m\in[M]} \mathds{Z}_{p_m^{r_m-\tilde{r}_m}}\equiv K \equiv \times_{m\in[M]} K_m.
    \label{eq:QuotientIsomorphism}
\end{equation} 

Note, in the above notation, we use $\tilde{H}$ to denote the subgroup of $G$ (equipped, therefore, with the group operation of $G$), i.e., $\tilde{H}\le G$. On the other hand, $H$ is in fact a sub\textit{set} of $G$, i.e., $H\subseteq G$ [see Eq.~\eqref{eq:ExampleSubgroup} and Eq.~\eqref{eq:ExampleSubSet}]. We can promote $H$ to a group by equipping it with its own group multiplication, $\oplus_H$, in which case we have $(\tilde{H},\oplus_G)\cong (H,\oplus_H)$. Similarly, $K$ is also a subset of $G$, i.e., $K\subseteq G$, but we can equip it with $\oplus_K$ to promote it to a group, yielding $(G/H, \oplus_G)\cong (K,\oplus_K)$.  We make this distinction between the subset $H$ and the subgroup $\tilde{H}$ as we will need the subset $H$ for the constructions $U_g$ and $P_{r,q}$. For a concrete example of this convention, see Appendix~\ref{sec:ExampleGroupProperties}.

Given the quotient group, there is then a natural bijection, $g\in G \mapsto (h(g),k(g))\in (H,K)$, corresponding to element-wise Euclidean division [see Eqs.~\eqref{eq:ExampleBijection1},~\eqref{eq:ExampleBijection2},~and~\eqref{eq:ExampleEuclidianDivision}]. Generally, only $k(g)$ is a homomorphism. Indeed, we have the following identities
\begin{align}
    k(c)&=c \label{eq:AbelianGroupIdenitity1}, \\
    k(g_1 \oplus_G g_2) &= k(g_1)\oplus_K k(g_2) \label{eq:AbelianGroupIdenitity2}, \\
    h(g_1 \oplus_G g_2) &= h(g_1)\oplus_H h(k(g_1) \oplus_G g_2) \label{eq:AbelianGroupIdenitity3},\\
    g&= |K| h(g) \oplus_G k(g) \label{eq:AbelianGroupIdenitity4},
\end{align}
for all $g_1,g_2\in G$ and $c\in K\subseteq G$, and where $\oplus_G, \oplus_H$, and  $\oplus_K$ refer to modulo-addition in G, H and K respectively. Moreover, $|K|$ should be understood as the vector $|K|=(|K_m|)_{m\in[M]}$, and multiplication is element-wise (again, see Appendix~\ref{sec:ExampleGroupProperties} for a concrete example).

Finally, we can consider the elements in $H\subseteq G$ as elements in $G$ via two further identities. For this, we first recognize that there is another subgroup of $G$ isomorphic to $K$, $\tilde{K}\equiv |H|K \le G$\footnote{Note, this is another consequence of the Fundamental Theorem of Abelian Groups; that every quotient group of an abelian group is isomorphic to some subgroup of $G$. This is a special property that is not true for general groups.}. As with $\tilde{H}$, one may also quotient G by $\tilde{K}$. This yields a second, different natural bijection $g\in G\mapsto (\hat{h}(g),\hat{k}(g))\in (H,K)$ [see Eq.~\eqref{eq:ExampleBijection3} and~\eqref{eq:ExampleBijection4}]. Equipped with this bijection, we can then write
\begin{align}
    h_1 \oplus_G h_2 &= (h_1 \oplus_H h_2) \oplus_G |H|\ \hat{k}(h_1 \oplus_G h_2) ,\label{eq:AbelianGroupIdenitity5}\\
    h_1 \ominus_G h_2 &= (h_1 \ominus_H h_2) \oplus_G |H|\ \hat{k}(h_1 \ominus_G h_2),
    \label{eq:AbelianGroupIdenitity6}
\end{align}
for all $h_1,h_2\in H\subseteq G$.

\subsubsection{Linear Representations}
\label{sec:AbelianGroupLinearRep}
All irreps of an abelian group, $\chi^{q}:G\rightarrow U(1)$, are one dimensional. As the set of irreps of a finite abelian group, $G^*$, is isomorphic to the group itself, we can choose to the label irreps of $G$ by the elements of $G$ itself. Explicitly, for the group given in Eq.~\eqref{eq:GroupStructure}, the irrep labeled by $q\in G$ is given by
\begin{align}
    \chi^q(g)=\chi^q_g &= \prod_{m\in [M]} \exp(2\pi i\ \frac{g_m q_m}{|G_m|})\\
    &= \chi^g_q
\end{align}
[see also Eq.~\eqref{eq:ExampleCharacterofG}].

In the following, we will also consider irreps of subgroups. In particular, let $\tilde{\chi}$ correspond to irreps of $(H ,\oplus_H)$. Then we have
\begin{equation}
    \tilde{\chi}^{p\in H}_{h\in H} = \chi^{|K| p\in G}_{h\in H \subseteq G} = \chi^{p\in G}_{|K|h \in G}.
    \label{eq:IrrepsOfSubgroups}
\end{equation}
That is, we can relate irreps of $H$ to irreps of $G$ [see Eq.~\eqref{eq:Exampleirrepsubgroup}].
We introduce this notation as in order to define our physical symmetry later on, we will need to reference irreps of $G$ using elements of $H\subseteq G$ [see Eq.~\eqref{eq:ExampleCharacterUsingSubset}]. For a concrete example of this notation, see Appendix~\ref{sec:ExampleAbelianGroupLinearRep}.

\subsubsection{Projective Representations}
\label{sec:AppendixProofProjectiveRepresentations}
Here we introduce several properties of projective representations for finite abelian groups that will be useful in the following derivation (see Appendix~\ref{app:noninjectivesym} for general properties of projective representations). As in our subsequent arguments we only consider projective representations of the subgroup, we will consider here projective representations of $H$. 

So, to begin, let  $\omega_{g\in H}^\mu$ be the $\mu$-irrep\footnote{For definition of $\mu$-irrep, see Appendix~\ref{app:AppendixAInjectiveMPS}, and note, for abelian groups, projective irreps are unique up to projective equivalence.} obeying the relation
\begin{equation}
    \omega_g^\mu \omega_h^\mu = \gamma^\mu(g,h)\ \omega_{gh}^\mu,
\end{equation}
for all $g,h\in H$, where $\gamma^\mu: H \times H \rightarrow U(1)$ is the corresponding cocycle. One can show that, for any $g\in H$, $\tilde{\gamma}^{\mu,g}: H\mapsto U(1)$, given by
\begin{equation}
    \tilde{\gamma}^{\mu,g}(x)= \frac{\gamma^{\mu}(x,g)}{\gamma^{\mu}(g,x)},
\end{equation}
is in-fact a homomorphism between $H$ and the irreps of $H$ \cite{Kleppner1965, ElseEtAl2012_MBQC1}. That is, there is some homomorphism, $\phi_\mu: H \rightarrow H$ such that 
\begin{equation}
    \tilde{\gamma}^{\mu,g}(x)= \chi^{\phi_\mu(g)}_x
\end{equation}
[see Eq.~\eqref{eq:ExampleHomomorphism}]. Moreover, we have
\begin{equation}
    \omega_g^{\mu} \omega_h^{\mu} = \chi^{\phi_\mu(h)}_g \omega_h^{\mu} \omega_g^{\mu},\ \forall g,h\in H.
    \label{eq:ProjectiveQuasiComRel}
\end{equation}

We may also define the \emph{projective center} of $H$ with respect to $\mu$,
\begin{equation}
    Z^{\mu}(H)= \{g\in H : \phi_\mu(g)=e \}\le H.
\end{equation}
That is, $Z^{\mu}(H)$ is the set of $g\in H$ for which $\tilde{\gamma}^{\mu,g}$ is the trivial representation. When $Z^{\mu}(H)=\{e\}$, $\mu$ is referred to as a maximally non-commuting (MNC) phase. It is easily verified that the Pauli representation from the example in Appendix~\ref{sec:ExampleProjectiveRepresentations} corresponds to one of these MNC phases. Indeed, these phases have found considerable applications in measurement-based quantum computing~\cite{ElseEtAl2012_MBQC1, StephenEtAl2017_MBQC2, Stephen2019_MBQC3}.

Given $Z^{\mu}(H)$ the dimension of the $\mu$-irrep is then given by~\cite{karpilovsky1994_ProjTraceCondition}\footnote{See the Pauli representation in Appendix~\ref{sec:ExampleProjectiveRepresentations}. $|H|=|\mathds{Z}_2\times \mathds{Z}_2|=4$ and the projective center is trivial. Therefore, $D_\mu = \sqrt{4/1}=2$.}
\begin{equation}
    D_\mu =\sqrt{\frac{|H|}{|Z^{\mu}(H)|}}.
    \label{eq:ProjRepDim}
\end{equation}
Finally, for any $\mu$-irrep, the following holds \cite{karpilovsky1994_ProjTraceCondition}
\begin{equation}
        \mathrm{tr}[\omega_h^{\mu}] = \begin{cases}
            D^\mu, & \text{if }h\in Z^{\mu}(H)\\
            0, &  \text{if }h\not\in Z^{\mu}(H).
        \end{cases}
        \label{eq:ProjRepsInComCenterAreOrthogonal}
\end{equation}
This yields the following corollary.
\begin{cor}\label{corr:basis}
    Consider an $\mu$-irrep $\omega^{\mu}_{g\in H}\in U(D_{\mu})$. Let $Q \subseteq H$ be a set of representatives for each of the cosets $\frac{H}{Z^{\mu}(H)}$. Then $\{\omega^{\mu}_g\}_{g\in Q}$ is an orthonormal basis for $D^{\mu} \times D^{\mu} $ complex matrices.
\end{cor}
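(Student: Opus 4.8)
The plan is to prove orthonormality by a direct computation and then conclude by a dimension count. I take the inner product on $D_\mu\times D_\mu$ matrices to be the normalized Hilbert--Schmidt one, $\langle A,B\rangle = \tfrac{1}{D_\mu}\mathrm{tr}[A^\dagger B]$, with respect to which, e.g., the Pauli representation of Eq.~\eqref{eq:pauliprepresentations} is orthonormal. First I would compute $\langle \omega^\mu_g,\omega^\mu_h\rangle$ for $g,h\in Q$ and show it equals $\delta_{g,h}$; the key input is the trace formula Eq.~\eqref{eq:ProjRepsInComCenterAreOrthogonal}, stating that $\mathrm{tr}[\omega^\mu_k]=D_\mu$ when $k\in Z^\mu(H)$ and $0$ otherwise.

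For the off-diagonal case ($g\neq h$) I would first use the phase freedom of the $\mu$-irrep (the $\omega^\mu_g$ are defined only up to phases) to normalize $\omega^\mu_e=\id$. Unitarity of $\omega^\mu_g$ together with the cocycle relation $\omega^\mu_g\,\omega^\mu_{g^{-1}} = \gamma^\mu(g,g^{-1})\,\id$ gives $(\omega^\mu_g)^\dagger = \overline{\gamma^\mu(g,g^{-1})}\,\omega^\mu_{g^{-1}}$, so that, applying the cocycle relation once more,
\[
  \mathrm{tr}\big[(\omega^\mu_g)^\dagger\,\omega^\mu_h\big]
  = \overline{\gamma^\mu(g,g^{-1})}\,\gamma^\mu(g^{-1},h)\;\mathrm{tr}\big[\omega^\mu_{g^{-1}h}\big].
\]
Since $\omega^\mu_{g^{-1}h}$ is the value of the same $\mu$-irrep at the element $g^{-1}h\in H$, Eq.~\eqref{eq:ProjRepsInComCenterAreOrthogonal} applies and the right-hand side vanishes unless $g^{-1}h\in Z^\mu(H)$, that is, unless $g$ and $h$ lie in the same coset of $Z^\mu(H)$ in $H$. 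Because $Q$ contains exactly one representative of each coset in $H/Z^\mu(H)$, this cannot happen when $g\neq h$, so $\langle\omega^\mu_g,\omega^\mu_h\rangle=0$. On the diagonal, $\langle\omega^\mu_g,\omega^\mu_g\rangle = \tfrac{1}{D_\mu}\mathrm{tr}[(\omega^\mu_g)^\dagger\omega^\mu_g]=\tfrac{1}{D_\mu}\mathrm{tr}[\id]=1$ follows immediately from unitarity. Hence $\{\omega^\mu_g\}_{g\in Q}$ is an orthonormal set.

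To finish, I would observe that $|Q| = |H/Z^\mu(H)| = |H|/|Z^\mu(H)|$ by Lagrange's theorem, which by Eq.~\eqref{eq:ProjRepDim} equals $D_\mu^2$; this is precisely the complex dimension of the space of $D_\mu\times D_\mu$ matrices. An orthonormal set whose cardinality matches the dimension of the ambient inner-product space is automatically a basis, which completes the proof.

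I do not anticipate a genuine obstacle here; the proof is essentially bookkeeping. The points that require some care are tracking the cocycle phases $\gamma^\mu$ far enough to see that they can neither make a nonzero trace vanish nor a vanishing trace become nonzero, and the (routine but essential) observation that the coset-representative property of $Q$ is exactly what eliminates the off-diagonal terms. One should also recall that $D_\mu$ being an integer — equivalently, $|H|/|Z^\mu(H)|$ being a perfect square — is exactly the content of Eq.~\eqref{eq:ProjRepDim}, which is where the structure theory of projective representations of finite (abelian) groups is invoked.
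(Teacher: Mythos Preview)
Your proposal is correct and follows essentially the same approach as the paper: reduce $\mathrm{tr}[(\omega^\mu_g)^\dagger\omega^\mu_h]$ to a phase times $\mathrm{tr}[\omega^\mu_{g^{-1}h}]$, apply the trace formula Eq.~\eqref{eq:ProjRepsInComCenterAreOrthogonal}, use that $Q$ contains one representative per coset, and finish with the dimension count via Eq.~\eqref{eq:ProjRepDim}. The only differences are cosmetic: the paper writes the phase step as a simple proportionality rather than tracking the cocycle factors explicitly, and it performs the cardinality count first rather than last.
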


\begin{proof}
    Firstly, note that $|Q|=|\frac{H}{Z^{\mu}(H)}|=(D^\mu)^2$. Secondly, 
    \begin{align}
        \mathrm{tr} \left[ (\omega_{x}^{\mu})^\dagger \omega_{y}^{\mu} \right]&\propto \tr \left[ \omega_{x^{-1}y}^\mu \right]\\
        &=  \begin{cases}
            D^\mu, & \text{if }x^{-1}y \in Z^{\mu}(H)\\
            0, &  \text{if }x^{-1}y \not\in Z^{\mu}(H)
        \end{cases}\\
        &=  \begin{cases}
            D^\mu, & \text{if }y \in x Z^{\mu}(H)\\
            0, &  \text{if } y \not\in x Z^{\mu}(H)
        \end{cases}\\
         &=  \begin{cases}
            D^\mu, & \text{if } x=y \\
            0, &  \text{if } x\ne y
        \end{cases}
    \end{align}
    for all $x,y\in Q$, where the last line follows as there is a unique element in Q per coset. Thus $\{\omega^{\mu}_g\}_{g\in Q}$ is an ONB.
\end{proof}
Note that if $\mu$ corresponds to a MNC class, then $\{\omega_g^\mu\}_{g\in H}$ is an ONB for $\mathds{C}_{D^\mu, D^\mu}$ (see also~\cite{Klappenecker2002}). That is the full set, $\{\omega_g^\mu\}_{g\in H}$, is an ONB rather than a strict subset. Once again, for a concrete example of any of the introduced notation, see Appendix~\ref{sec:ExampleProjectiveRepresentations}.

\subsection{Symmetries and Measurements}
Given the above notation, we now precisely define, in full generality, the physical symmetry $U_g$, referred to in Eq.~\eqref{eq:Ug}, and the symmetric, projective measurement, $\{P_{r,q}\}_{r,q\in S}$ referred to in Eq.~\eqref{eq:projMeasurement}.

\subsubsection{Symmetry}
We begin with the physical symmetry. Recall, given the phase $(H,\mu)$, in Eq.~\eqref{eq:Ug}, we had
\begin{equation}
\begin{aligned}\label{fig::AppendixPhaseRepresentativeSymmetry}
    \includegraphics[width=0.45\linewidth]{PhaseRepresentativeSymmetry.pdf}
\end{aligned}
\end{equation}
Equivalently, we can write this as
\begin{align}
    U_g &=\left(\bigoplus_{\alpha\in K}  (\omega_{h(g,\alpha)}^\mu)^*\otimes (\omega_{h(g,\alpha)}^\mu)\right) \left(P_{k(g)} \otimes \id \otimes \id\right)\\
    &=\left[\bigoplus_{\alpha\in K} W_{h\left(g\oplus_G\alpha\right)}^\mu \right] \left(P_{k(g)} \otimes \id \otimes \id\right)
    \label{eq:AppendixPhysicalSymmetry}
\end{align}
where we have defined $W_{h(g, \alpha)}^\mu = (\omega_{h(g,\alpha)}^\mu)^*\otimes (\omega_{h(g,\alpha)}^\mu)$ to simplify notation\footnote{Note, that $W_{h}^\mu$ is a linear representation of $H$.}. Explicitly, we choose the permutation representation to be given by
\begin{align}
    P_{k(g)}=\otimes_{m\in[M]} X_{|K_m|}^{k_m(g)}\equiv X^{k(g)}_{|K|}
\end{align}
where $X_{|K_i|}$ is $|K_i|$-dim shift operator, i.e., $X_{|K_i|}=\sum_{i\in[|K_i|]} \ket{i}\bra{i\oplus 1}$\footnote{The shift operator here generalizes the Pauli X operator from the example in Appendix~\ref{app:Example} to higher dimensions. It is easily verified that this satisfies the properties of $P_g$ as discussed in Appendix~\ref{app:noninjectivesym} [see Eq.~\eqref{eq::PermutationAction}].}. Additionally, $(\omega^\mu)_h$ is the $\mu$-irrep of $H$ and $h(g,\alpha)=h(g\oplus_G \alpha)$, where $\alpha\in K\subseteq G$ is considered as an element in G and $\oplus_G$ denotes group addition in $G$. Note, that
\begin{equation}
d\equiv\text{dim}(U_g)=|K| D_\mu^2= \frac{|G|}{|H|} \frac{|H|}{|Z^\mu(H)}= \frac{|G|}{|Z^\mu(H)|}.   
\end{equation}
In particular, if $\mu$ corresponds to a MNC phase (of $H$), then $d=|G|$. Finally, one can verify that $U_g$ is indeed a linear representation.

\subsubsection{Measurements}
Recall that we defined the following measurement operators in Eq.~\eqref{eq:projMeasurement} in the main text
\begin{align}\label{eq:AppendixprojMeasurement}
     P_{r,q} &= (\id \otimes V_{r,q}) \ketbra*{\widetilde{\Phi}^+}(\id \otimes V_{r,q}^\dagger),
\end{align}
where
\begin{equation}
    \ket*{\widetilde{\Phi}^+}\propto\sum_{i\in [|K|]} \sum_{j,k\in [D_\mu]} \ket{i,j,k,i,k,j}.
\end{equation}
We emphasize again that the state $\ket*{\widetilde{\Phi}^+}$ is different from the typical $\ket{\Phi^+}$ state as we must ensure the legs connect appropriately. This can be seen for instance from the following graphical representation of the measurement operators.
\begin{equation*}
\begin{aligned}
    \includegraphics[width=0.35\linewidth]{MeasureOp.pdf}
\end{aligned}
\end{equation*}
Let us now describe in detail the structure of the unitary $V_{r,q}$. It is given by $V_{r,q} = U_r \tilde{V}_q$, where $r,q\in G$, and thus, is a product of the physical symmetry $U_r$, and a unitary $\tilde{V}_q$. The latter is given by
\begin{align}\label{eq:appendixV}
 \tilde{V}_{q\in G} \equiv \tilde{Z}^q \otimes  \id \otimes (\omega_{h(q)}^{\mu}).
\end{align}
The last tensor factor, $\omega_{h(q)}^{\mu}$, is the same projective representation as in $U_g$, and the first tensor factor, $\tilde{Z}^q$, is a diagonal phase matrix. This phase matrix is itself a product of two phase matrices
\begin{equation}
    \tilde{Z}^q = \underbrace{\vphantom{\left( \bigoplus_{\alpha\in K} \chi^{\phi_\mu (h(q)) }_\alpha   \right)}Z_{|K|}^{k(q)}}_{(1)} \underbrace{\left( \bigoplus_{\alpha\in K} \chi^{\phi_\mu (h(q)) }_\alpha   \right)}_{(2)}.
\end{equation}
The first phase matrix is defined by
\begin{equation}
    Z^{k(q)}_{|K|}\equiv \bigotimes_{m\in [M]}Z^{k_m(q)}_{|K_m|},
\end{equation}
where $Z_{|K_m|}$ is the $|K_m|$-dim clock operator, i.e.,
\begin{equation}
Z_{|K_m|}=\sum_{j\in[|K_m|]} \exp(\frac{2 \pi i}{|K_m|}j)\ketbra{j}{j}.    
\end{equation}
In the second phase matrix $\chi_\alpha ^{\phi_\mu(h(q))}$ is a linear irrep of G, with $\alpha \in K \subseteq G$, and $\phi_\mu(h(q)) \in H \subseteq G$ coming from the quasi-commutation relation in Eq.~\eqref{eq:ProjectiveQuasiComRel}.

Let us now elaborate more on the multiplication properties of the matrices $\{\tilde{V}_{q}\}_{g}$, which will become useful later. It is important to note that $\{\tilde{V}_{q}\}_{g\in G}$ is closed under multiplication, up to phases, as the following lemma shows.
\begin{lem}\label{lem:Vclosed1}
Let $\tilde{V}_{p,q}$ be defined as above. Then,
\begin{align}
    \tilde{V}_p \tilde{V}_q &\propto \tilde{V}_{f(p,q)}, \\
    \tilde{V}_p^\dagger \tilde{V}_q &\propto \tilde{V}_{\tilde{f}(p,q)},
\end{align}
where $f: G\times G \rightarrow G$ is defined by
\begin{multline}
    f(p,q)=|K|\left(h(p)\oplus_h h(q)\right) \oplus_G  \\
    \quad\left\{ k(p)\oplus_K k(q) \oplus_K \hat{k}\left[\phi_\mu \left(h(p)\right) \oplus_G \phi_\mu \left(h(q)\right)\right] \right\},
\end{multline}
and $\tilde{f}: G\times G \rightarrow G$ is defined by
\begin{multline}
    \tilde{f}(p,q)=|K|\left(h(q)\ominus_h h(p)\right) \oplus_G  \\
    \quad \left\{ k(q)\ominus_K k(p) \oplus_K \hat{k}\left[\phi_\mu \left(h(q)\right) \ominus_G \phi_\mu \left(h(p)\right)\right] \right\}.
\end{multline}
\end{lem}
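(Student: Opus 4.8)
The plan is to verify both identities factor by factor, using that $\tilde V_q = \tilde Z^q \otimes \id \otimes \omega^\mu_{h(q)}$ is a simple tensor, so $\tilde V_p \tilde V_q = (\tilde Z^p \tilde Z^q)\otimes \id \otimes (\omega^\mu_{h(p)}\omega^\mu_{h(q)})$. The third factor is immediate: by the cocycle relation $\omega^\mu_{h(p)}\omega^\mu_{h(q)} = \gamma^\mu(h(p),h(q))\,\omega^\mu_{h(p)\oplus_H h(q)}$, which contributes only a global phase (hence the ``$\propto$'') and equals $\omega^\mu_{h(f(p,q))}$ precisely because $f(p,q)$ was defined via the Euclidean-division decomposition [Eq.~\eqref{eq:AbelianGroupIdenitity4}] as $f(p,q) = |K|(h(p)\oplus_H h(q)) \oplus_G (\text{an element of }K)$, so that $h(f(p,q)) = h(p)\oplus_H h(q)$ and $k(f(p,q))$ is the bracketed element. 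The middle factor is trivially $\id\cdot\id$. Everything therefore reduces to showing $\tilde Z^p\tilde Z^q = \tilde Z^{f(p,q)}$ on the nose (and, for the second identity, $(\tilde Z^p)^\dagger\tilde Z^q = \tilde Z^{\tilde f(p,q)}$).

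For the diagonal $\tilde Z$ factor I would expand $\tilde Z^q = Z_{|K|}^{k(q)}\,(\bigoplus_{\alpha\in K}\chi^{\phi_\mu(h(q))}_\alpha)$; all operators in sight are diagonal and hence commute. Multiplying the two clock pieces gives $Z_{|K|}^{k(p)\oplus_K k(q)}$, and multiplying the two character pieces gives $\bigoplus_\alpha \chi^{\phi_\mu(h(p))\oplus_G\phi_\mu(h(q))}_\alpha$ by multiplicativity of characters in the upper index. The key move is to reduce the exponent $\phi_\mu(h(p))\oplus_G\phi_\mu(h(q))$, a sum in $G$ of two elements of $H\subseteq G$, back into $H$: by Eq.~\eqref{eq:AbelianGroupIdenitity5} it equals $(\phi_\mu(h(p))\oplus_H\phi_\mu(h(q)))\oplus_G |H|\,\hat k(\phi_\mu(h(p))\oplus_G\phi_\mu(h(q)))$, and since $\phi_\mu$ is a homomorphism of $H$ the first summand is $\phi_\mu(h(p)\oplus_H h(q)) = \phi_\mu(h(f(p,q)))$. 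The $|H|\,\hat k(\cdots)$ part is then extracted via the elementary identity $\bigoplus_{\alpha\in K}\chi^{|H| c}_\alpha = Z_{|K|}^{c}$ for $c\in K$, which holds because $|G_m| = |H_m||K_m|$, so $\chi^{|H|c}_\alpha = \prod_m \exp(2\pi i\,c_m\alpha_m/|K_m|)$ is exactly the $\alpha$-th diagonal entry of $Z_{|K|}^c$. Collecting the three clock exponents gives $Z_{|K|}^{k(p)\oplus_K k(q)\oplus_K \hat k(\phi_\mu(h(p))\oplus_G\phi_\mu(h(q)))} = Z_{|K|}^{k(f(p,q))}$, while the residual character piece is $\bigoplus_\alpha\chi^{\phi_\mu(h(f(p,q)))}_\alpha$; together these are $\tilde Z^{f(p,q)}$, as required. (One should also remark at the start that $h(p)\oplus_H h(q)\in H$ and the bracketed element lies in $K$, so that $f(p,q)$ is a genuine element of $G$ with the claimed components.)

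The adjoint identity is handled in the same way with $p$'s contributions inverted: $(\omega^\mu_{h(p)})^\dagger \propto \omega^\mu_{\ominus_H h(p)}$ (the inverse of a projective-representation element is, up to a cocycle phase, the representation at the inverse group element, since $H$ is abelian), $(Z_{|K|}^{k(p)})^\dagger = Z_{|K|}^{\ominus_K k(p)}$, and $\overline{\chi^{\phi_\mu(h(p))}_\alpha} = \chi^{\ominus_G \phi_\mu(h(p))}_\alpha$; the only structural change is that one invokes the subtraction identity Eq.~\eqref{eq:AbelianGroupIdenitity6} in place of Eq.~\eqref{eq:AbelianGroupIdenitity5}, which reproduces exactly the stated $\tilde f(p,q)$.

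I expect the only real difficulty to be bookkeeping: keeping track of which modular addition ($\oplus_G$, $\oplus_H$, $\oplus_K$) is active at each step, distinguishing the two Euclidean-type bijections $g\mapsto(h(g),k(g))$ (from quotienting by $\tilde H$) and $g\mapsto(\hat h(g),\hat k(g))$ (from quotienting by $\tilde K$), and fixing the identification of the index set $K=\times_m K_m$ with the computational basis underlying the clock and shift operators. Conceptually nothing is deep: beyond the group identities \eqref{eq:AbelianGroupIdenitity4}--\eqref{eq:AbelianGroupIdenitity6}, the homomorphism property of $\phi_\mu$, and the cocycle relation, the only genuinely new input is the observation that $\bigoplus_{\alpha\in K}\chi^{|H|c}_\alpha$ is a clock operator, which is precisely where the factorization $|G_m|=|H_m||K_m|$ enters.
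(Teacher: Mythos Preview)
Your proof is correct and follows essentially the same route as the paper's: factor $\tilde V_p\tilde V_q$ as a tensor product, handle the projective-representation factor via the cocycle relation, multiply the diagonal $\tilde Z$ pieces, and then use Eq.~\eqref{eq:AbelianGroupIdenitity5} (resp.~\eqref{eq:AbelianGroupIdenitity6}) to split $\phi_\mu(h(p))\oplus_G\phi_\mu(h(q))$ into its $H$-part and its $|H|\hat k$-correction, absorbing the latter into the clock exponent. You are in fact slightly more explicit than the paper in two places---spelling out that $\bigoplus_{\alpha\in K}\chi^{|H|c}_\alpha = Z_{|K|}^c$ via $|G_m|=|H_m||K_m|$, and invoking the homomorphism property of $\phi_\mu$ to identify $\phi_\mu(h(p))\oplus_H\phi_\mu(h(q))$ with $\phi_\mu(h(f(p,q)))$---both of which the paper uses implicitly.
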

\begin{proof}
To see this, first note that
\begin{align}
    \tilde{Z}^p \tilde{Z}^q&= \bigoplus_{\alpha\in K\subseteq G} \hat{\chi}^{k(p)}_\alpha  \chi^{\phi_\mu (h(q)) }_\alpha  \hat{\chi}^{k(q)}_\alpha  \chi^{\phi_\mu (h(p)) }_\alpha  \\
    &= \bigoplus_{\alpha\in K\subseteq G}  \hat{\chi}^{k(p)\oplus_K k(q)}_\alpha \chi^{\phi_\mu (h(p)) \oplus_G \phi_\mu (h(q))}_\alpha\\
    &= \bigoplus_{\alpha\in K\subseteq G}  \hat{\chi}^{k(p)\oplus_K k(q)}_\alpha \chi^{\phi_\mu (h(p)) \oplus_H \phi_\mu (h(q))}_\alpha \nonumber\\
    &\qquad \qquad \qquad \times \chi^{|H|\hat{k}\left[\phi_\mu (h(p)) \oplus_G \phi_\mu (h(q))\right]}_\alpha\\
    &= \bigoplus_{\alpha\in K\subseteq G}  \hat{\chi}^{k(p)\oplus_K k(q) \oplus_K \hat{k}\left[\phi_\mu (h(p)) \oplus_G \phi_\mu (h(q))\right]}_\alpha \nonumber\\
    &\qquad \qquad \qquad    \times  \chi^{\phi_\mu (h(p)) \oplus_H \phi_\mu (h(q))}_\alpha\\
    &= \tilde{Z}^{f(p,q)},
\end{align}
where we have used the group identity in Eq.~\eqref{eq:AbelianGroupIdenitity5} in the 3rd line and the irrep identity in Eq.~\eqref{eq:IrrepsOfSubgroups} in the fourth line.
Returning to $\tilde{V}$, we have
\begin{align}
    \tilde{V}_p \tilde{V}_q &= \tilde{Z}^p\tilde{Z}^q \otimes  \id \otimes \omega_{h(p)}\omega_{h(q)}\\
    & \propto \tilde{Z}^{f(p,q)} \otimes \id\otimes \omega_{h(p)\oplus_H h(q)}  \\
    &= \tilde{Z}^{f(p,q)} \otimes \id \otimes \omega_{h(f(p,q))} \\
    &= V_{f(p,q)}
\end{align}
where we have used $h(f(p,q))=h(p)\oplus_H h(q)$. Thus, the set $\{\tilde{V}_{q}\}_{g\in G}$ is closed up to phases. By an identical argument, we also have
\begin{equation}
    \tilde{V}_p^\dagger \tilde{V}_q \propto V_{\tilde{f}(p,q)}.
\end{equation}
\end{proof}
Moreover, the following lemma holds.
\begin{lem}\label{lem:Vclosed2}
Let $\tilde{V}_{p,q}$ be defined as above. Then,
    \begin{align}
    \tilde{V}_p^\dagger \tilde{V}_q = V_0 = \id \Leftrightarrow p=q.
\end{align}
\end{lem}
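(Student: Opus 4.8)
The plan is to prove the biconditional by proving each direction separately, with the nontrivial direction being the forward one. The backward direction is immediate: if $p=q$ then $\tilde V_p^\dagger \tilde V_q = \tilde V_p^\dagger \tilde V_p = \id$ since $\tilde V_p$ is unitary, and one checks $\id = \tilde V_0$ directly from the definition in Eq.~\eqref{eq:appendixV}, because $\tilde Z^0 = Z^{k(0)}_{|K|}\bigl(\bigoplus_{\alpha} \chi^{\phi_\mu(h(0))}_\alpha\bigr) = \id$ (using $k(0)=0$, $h(0)=e$, $\phi_\mu(e)=e$, and $\chi^e_\alpha = 1$) and $\omega^\mu_{h(0)} = \omega^\mu_e = \id$.

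For the forward direction, I would not work with $\tilde V_p^\dagger \tilde V_q$ directly but instead with $\tilde V_p \tilde V_q^\dagger$ or, more conveniently, apply Lemma~\ref{lem:Vclosed1}: $\tilde V_p^\dagger \tilde V_q \propto \tilde V_{\tilde f(p,q)}$, so the hypothesis $\tilde V_p^\dagger\tilde V_q = \id$ forces $\tilde V_{\tilde f(p,q)} \propto \id$. Hence it suffices to show that $\tilde V_s \propto \id$ implies $s = 0$; then $\tilde f(p,q)=0$, and I would extract $p=q$ from the explicit formula for $\tilde f$. To show $\tilde V_s\propto\id$ forces $s=0$, recall $\tilde V_s = \tilde Z^s \otimes \id \otimes \omega^\mu_{h(s)}$. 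If this is proportional to the identity, then each nontrivial tensor factor must be proportional to the identity. The third factor $\omega^\mu_{h(s)}\propto\id$: I would invoke that $\omega^\mu$ is an \emph{irreducible} projective representation, so by (projective) Schur's lemma the only group elements mapped to scalars are those in the projective center $Z^\mu(H)$ — actually one needs a touch more care, since elements of $Z^\mu(H)$ need not map to scalars, but one can use Eq.~\eqref{eq:ProjRepsInComCenterAreOrthogonal}: $\omega^\mu_{h(s)}\propto\id$ would give $|\tr \omega^\mu_{h(s)}| = D_\mu$, forcing $h(s)\in Z^\mu(H)$, and then among such elements the scalar ones form the kernel of the (projective) irrep, which for a faithful choice is just $\{e\}$; alternatively, and more robustly, I would use Corollary~\ref{corr:basis} together with the fact that distinct cosets of $Z^\mu(H)$ give orthogonal (hence non-proportional) matrices $\omega^\mu_g$. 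So $\omega^\mu_{h(s)}\propto\id$ already gives $h(s) \in Z^\mu(H)$, and combined with the first factor it will pin down $h(s)=e$. The first factor $\tilde Z^s = Z^{k(s)}_{|K|}\bigl(\bigoplus_\alpha \chi^{\phi_\mu(h(s))}_\alpha\bigr)\propto\id$: a diagonal matrix is proportional to $\id$ iff all its entries are equal; since the $\alpha$ range over all of $K$ and the clock-operator phases $\exp(2\pi i\, k_m(s)\alpha_m/|K_m|)$ are constant in $\alpha$ only when $k(s)=0$, and the $\chi^{\phi_\mu(h(s))}_\alpha$ contribution is already trivial once we know $h(s)\in Z^\mu(H)$ (then $\phi_\mu(h(s))=e$), we get $k(s)=0$. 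From $h(s)=e$ and $k(s)=0$, the bijection identity Eq.~\eqref{eq:AbelianGroupIdenitity4}, $s = |K|\, h(s)\oplus_G k(s)$, yields $s=0$.

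The main obstacle — really the only place requiring care — is the step asserting $\omega^\mu_{h(s)}\propto\id \Rightarrow h(s)\in Z^\mu(H)$ and then extracting $h(s)=e$. For abelian $H$ the projective center need not be trivial, so a scalar multiple of $\id$ can in principle occur for nonzero group elements; the resolution is that once $h(s)\in Z^\mu(H)$, the first tensor factor's phase $\chi^{\phi_\mu(h(s))}_\alpha$ becomes trivial, so the constraint $\tilde Z^s\propto\id$ collapses to a pure clock-operator constraint $Z^{k(s)}_{|K|}\propto\id$ giving $k(s)=0$; but that does \emph{not} immediately give $h(s)=e$. To close this I would instead go back to the clean route via Lemma~\ref{lem:Vclosed1} applied to $\tilde f$: rather than arguing $\tilde V_s\propto\id \Rightarrow s=0$ abstractly, directly substitute the hypothesis into $\tilde V_p^\dagger\tilde V_q \propto \tilde V_{\tilde f(p,q)}$ and note that, since the map $q\mapsto \tilde V_q$ (for $q$ ranging over a set of coset representatives of $Z^\mu(H)$ in the third slot, paired with $k(q)$ in the first slot) is injective up to phase — which one can see because $\{\omega^\mu_g\}$ restricted to coset representatives are orthonormal (Corollary~\ref{corr:basis}) and the $Z^{k(q)}_{|K|}$ distinguish the remaining freedom — one concludes $\tilde f(p,q)$ and $0$ label the same such pair, hence $h(p)\oplus_H h(q)^{-1}$-type data collapses and the explicit formula for $\tilde f$ forces $k(p)=k(q)$ and $h(p)=h(q)$, i.e. $p=q$ by Eq.~\eqref{eq:AbelianGroupIdenitity4}. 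I would present the orthogonality-of-$\omega^\mu_g$ argument as the clean engine and relegate the clock-operator bookkeeping to a one-line computation.
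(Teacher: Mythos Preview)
Your proposal ultimately converges on the paper's argument---working through the explicit formula for $\tilde f(p,q)$ from Lemma~\ref{lem:Vclosed1}---but takes a needlessly circuitous route. The paper's proof is three lines: since
\[
\tilde f(p,q)=|K|\bigl(h(q)\ominus_H h(p)\bigr)\oplus_G\Bigl\{k(q)\ominus_K k(p)\oplus_K\hat k\bigl[\phi_\mu(h(q))\ominus_G\phi_\mu(h(p))\bigr]\Bigr\},
\]
the bijection in Eq.~\eqref{eq:AbelianGroupIdenitity4} means $\tilde f(p,q)=0$ forces the $h$-part $h(q)\ominus_H h(p)$ to vanish, i.e., $h(p)=h(q)$; this collapses the $\hat k$-term (since then $\phi_\mu(h(p))=\phi_\mu(h(q))$), whereupon the $k$-part forces $k(p)=k(q)$; hence $p=q$.

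Your tensor-factor analysis of $\tilde V_s\propto\id$ is a dead end, as you correctly diagnose: $\omega^\mu_{h(s)}\propto\id$ only gives $h(s)\in Z^\mu(H)$, not $h(s)=e$. Your attempted rescue via ``injectivity of $q\mapsto\tilde V_q$ restricted to coset representatives'' does not close the gap either, because nothing guarantees $\tilde f(p,q)$ lands in that restricted set. Drop both detours and go straight to reading off the two constraints $h(p)=h(q)$ and $k(p)=k(q)$ from the formula for $\tilde f$---no orthogonality, Schur, or injectivity argument is needed.
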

\begin{proof}
To begin, $\tilde{f}(p,q)=0$ only if $h(q)=h(p)$. In this case, $\phi_\mu(h(q))=\phi_\mu(h(p))$ and thus $\hat{k}[\phi_\mu(h(q))\ominus_G\phi_\mu(h(p))]=\hat{k}(0) =0$. Thus $\tilde{f}(p,q)=0$ only if $k(q)=k(p)$. Therefore, $\tilde{f}(p,q)=0$ if and only if $h(q)=h(p)$ and $k(q)=k(p)$.
\end{proof}

\subsection{Proof of Lemma \ref{lem:bellmeasurements}}\label{app:LemmaBell}
Having explicitly defined $U_g$, and $P_{r,q}$, we now prove Lemma \ref{lem:bellmeasurements}. For clarity, let us begin by restating Lemma~\ref{lem:bellmeasurements} more precisely.
\begin{manuallemma}{3}
  Let $G$ be a finite abelian group and $(H,\mu)$ label a phase under $G$. Let $Q\in H$ be a set of representatives for the cosets $\frac{H}{Z^{\mu}(H)}$ and $S^\mu=\{g \in G : h(g)\in Q \}$. Correspondingly, let $U_g$ be given as in Eq.~\eqref{eq:AppendixPhysicalSymmetry} and let $\{P_{r,q}\}_{q,r\in S^\mu}$ be given as in Eq.~\eqref{eq:AppendixprojMeasurement}.
  Then  $\{P_{r,q}\}_{q,r\in S^\mu}$ is a complete, symmetric projective measurement. 
\end{manuallemma}
\begin{proof}
    We start by verifying that $P_{r,q}$ is symmetric for all $r,q\in G$. It is easily verified that \footnote{Note, that we get a $U^\dagger$ instead of a $U^T$ because of the structure of $\ket*{\widetilde{\Phi}^+}$.} 
    \begin{equation}
        (U_g\otimes U_g) (\id \otimes A) \ket*{\widetilde{\Phi}^+}= (\id \otimes U_g A U_g^\dagger)\ket*{\widetilde{\Phi}^+}
    \end{equation}
    for any operator A. Thus it is clear that $[U_g^{\otimes 2}, P_{r,q}]=0,\ \forall g\in G$ if and only if $U_g \tilde{V}_{q} \propto \tilde{V}_{q} U_g,\ \forall g\in G$. We can verify this as follows. Consider
    
      \begin{align}
        U_g \tilde{V}_q &= \left[\bigoplus_{\alpha\in K} W_{h\left(g\oplus_G\alpha\right)}^\mu \right] \left(X^{k(g)} \otimes \id \otimes \id\right) \nonumber \\
        & \qquad \left(   Z_{|K|}^{k(q)} \left[ \bigoplus_{\alpha\in K} \chi^{\phi_\mu (h(q)) }_\alpha    \right] \otimes \id \otimes  (\omega_{h(q)}^{\mu})\right).
      \end{align}
      First, we move the clock matrix through the shift matrix, yielding a global phase which is a irrep of $K$. The clock matrix then commutes with the projective representations, and therefore, we can move it all the way to the front. This results in
      
        \begin{multline}
        \hat{\chi}^{k(q)}_{k(g)} \left(Z^{k(q)}_{|K|} \otimes \id \otimes \id \right)  \left[\bigoplus_{\alpha\in K} W_{h\left(g\oplus_G\alpha\right)}^\mu \right] \\
        \left(X^{k(g)} \otimes \id \otimes \id\right)  \left(  \left[ \bigoplus_{\alpha\in K} \chi^{\phi_\mu (h(q)) }_\alpha    \right]\otimes \id \otimes  (\omega_{h(q)}^{\mu})  \right).
        \end{multline}
        Next, we move the rest of the diagonal phase matrix through the shift matrix. This will permute the diagonal elements. It can then be brought to the front, yielding
        \begin{multline}
        \hat{\chi}^{k(q)}_{k(g)} \left(Z^{k(q)}_{|K|} \left[\bigoplus_{\alpha\in K}  \chi^{\phi_\mu (h(q)) }_{\alpha\oplus_K k(g)}  \right] \otimes \id \otimes \id \right) \\
        \left[\bigoplus_{\alpha\in K} W_{h\left(g\oplus_G\alpha\right)}^\mu \right]  \left(X^{k(g)}\otimes \id  \otimes (\omega_{h(q)}^{\mu}) \right).
        \end{multline}
        Now we move the projective representation part of $\tilde{V}$ through to the left. This picks up a phase in each block according to Eq.~\eqref{eq:ProjectiveQuasiComRel} when passing through $\oplus_\alpha W_{h(g\oplus\alpha)}$, yielding
        \begin{multline}
        \hat{\chi}^{k(q)}_{k(g)} \left(Z^{k(q)}_{|K|} \left[\bigoplus_{\alpha\in K}  \chi^{\phi_\mu (h(q)) }_{\alpha\oplus_K k(g)} \tilde{\chi}^{\phi_\mu(h(q))}_{h(g\oplus_G \alpha)}  \right] \otimes \id \otimes (\omega_{h(q)}^{\mu})  \right) \\
        \left[\bigoplus_{\alpha\in K} W_{h\left(g\oplus_G\alpha\right)}^\mu \right]   \left(X^{k(g)} \otimes \id \otimes \id\right).
         \label{eq:AppendixLemma2Derivation1}
        \end{multline}
        The phase we picked up in each block, $\tilde{\chi}$, is expressed as irrep of $H$. We can convert it to an irrep of $G$, $\chi$, via the following equation
        \begin{equation}
             \tilde{\chi}^{\phi_\mu(h(q))}_{h(g\oplus_G \alpha)}= \chi^{\phi_\mu(h(q))}_{|K| h(g\oplus_G \alpha)}.
        \end{equation}
        Using this, and  the fact $\chi^{\phi_\mu(h(q))}$ is a homomorphism of $G$, we have
        \begin{align}
            \chi^{\phi_\mu (h(q)) }_{\alpha\oplus_K k(g)} \tilde{\chi}^{\phi_\mu(h(q))}_{h(g\oplus_G \alpha)}&=
            \chi^{\phi_\mu (h(q)) }_{\alpha\oplus_K k(g)} \chi^{\phi_\mu(h(q))}_{|K| h(g\oplus_G \alpha)}\\
            &=\chi^{\phi_\mu (h(q)) }_{(\alpha\oplus_K k(g)) \oplus_G |K| h(g\oplus_G \alpha)}.
        \end{align}
        Then, using the group identity Eq.~\eqref{eq:AbelianGroupIdenitity4}, one notices that $(\alpha\oplus_K k(g)) \oplus_G |K| h(g\oplus_G \alpha)=g\oplus_G \alpha$, and then
        \begin{equation}  
            \chi^{\phi_\mu (h(q)) }_{g\oplus_G \alpha}= \chi^{\phi_\mu (h(q)) }_{g} \chi^{\phi_\mu (h(q)) }_{\alpha}.
        \end{equation} 
        Inserting this into Eq.~\eqref{eq:AppendixLemma2Derivation1}, one finds that
        \begin{equation}
        U_g\tilde{V}_q= \hat{\chi}^{k(q)}_{k(g)} \chi^{\phi_\mu (h(q)) }_{g} \tilde{V}_q U_g.
        \end{equation}
        
    Thus we have verified that $U_g$ quasi-commutes with $V_q$ for all $g,q\in G$, and thus all projectors, $P_{q,r}$, are symmetric.
    
    We now verify the second part of the Lemma, namely that $\{P_{r,q}\}_{r,q\in S}$ is a complete measurement on $\mathds{C}_d^{\otimes 2}$. It is clear that this holds if and only if $\{V_{r,q}\}_{r,q\in S}$ is an orthonormal basis for $\mathds{C}_{d\times d}$. Recall, $Q\subseteq H$ is a set of representatives for the cosets $\frac{H}{Z^{\mu}(H)}$ and 
    \begin{equation}
        S=\{g \in G : h(g)\in Q \}.
    \end{equation}
    Note that $|S|= |K| |Q| = |K|(D^\mu)^2= d \equiv \text{dim}(U_g)$. Therefore, $|\{V_{r,q}\}_{r,q\in S}|=d^2$. Thus all that remains to be shown is that the $\{V_{r,q}\}_{r,q\in S}$ are orthonormal. As $U_r$ and $\tilde{V}_q$ quasi-commute, and are closed up to phases (see Lemmata~\ref{lem:Vclosed1}, and~\ref{lem:Vclosed2}), it is sufficient to show that $\tr[U_{r}^\dagger V_{q}]=\delta_{r,0}\delta_{q,0}\ d$.
    
Inserting Eqs.~\eqref{eq:AppendixPhysicalSymmetry}~and~\eqref{eq:appendixV} one obtains
\begin{multline}
    \tr[U_{r}^\dagger V_{q}] = \underbrace{\sum_{\alpha,\beta \in K} \tr[(X^{k(r)})^T\ketbra{\alpha}{\alpha} Z^{k(p)}  \chi^{\phi_\mu (h(q)) }_\beta  \ketbra{\beta}{\beta}]}_{(1)} \\
    \otimes \underbrace{\tr[\omega_{h(r\oplus \alpha)}^T]}_{(2)} \otimes  \underbrace{\tr[ \omega_{h(r \oplus \alpha)}^\dagger (\omega_{h(q)})]}_{(3)}. \label{eq:appOrthogEq}
\end{multline}
Evaluating the sum over $\beta$, the first term becomes
\begin{multline}\label{eq:random}
    \sum_{\alpha \in K}  \chi^{\phi_\mu (h(q)) }_\alpha  \bra{\alpha} Z^{k(q)} (X^{k(r)})^T \ket{\alpha}\\
    =\delta_{k(r),0}\sum_{\alpha \in K}  \chi^{\phi_\mu (h(q)) }_\alpha  \bra{\alpha}Z^{k(q)} \ket{\alpha}.
\end{multline}
In the second term, using Eq.~\eqref{eq:AbelianGroupIdenitity3}, one finds that $h(r\oplus\alpha)=h(r)\oplus_H h(k(r)\oplus_G \alpha)$. As the first term is proportional to $\delta_{k(r),0}$, and $h(\alpha)=0$ for $\alpha\in K$, we get $h(r\oplus\alpha)=h(r)$ and thus the second term in Eq~\eqref{eq:appOrthogEq} evaluates to $D_\mu\delta_{h(r),0}$. This implies that the third term in Eq~\eqref{eq:appOrthogEq} evaluates to $D_\mu\delta_{h(q),0}$. Due to $\delta_{h(q),0}$ one finds that $\chi^{\phi_\mu (h(q)) }_\alpha=1$ as $\phi_\mu$ is a homomorphism. Then, the sum in Eq.~\eqref{eq:random} simplifies to $\abs{K}\delta_{k(q),0}$. Finally, we arrive at
\begin{align}
    \tr[U_{r}^\dagger V_{q}] &=D_\mu^2 \abs{K}\delta_{k(r),0}\delta_{k(q),0}\delta_{h(q),0}\delta_{h(r),0}\notag \\
    &=D_\mu^2 \abs{K}\delta_{r,0}\delta_{q,0}.
\end{align}
Thus $\{V_{r,q}\}_{r,q\in S}$ does indeed form an orthonormal basis that quasi-commutes with $U_g$, and thus $\{V_{r,q}\}_{r,q\in S}$ forms a symmetric von Neumann measurement.
\end{proof}

\subsection{Proof of Eq.~\eqref{eq:slidethrough}}
\label{sec:AbelianSlideThrough}
In this section, we prove the rules in Eq.~\eqref{eq:slidethrough} in the main text, that describe how errors, induced by measurements, propagate in the target MPS. The first rule directly follows from the symmetry of the state with respect to $U_g^{\otimes 3}$. So let us prove the second rule. Let $V_q$ be defined as previously. Then the following identity holds.
\begin{equation}\label{eq:AppendixSlideThrough}
    \begin{aligned}
    \includegraphics[width=0.45\linewidth]{SlideV.pdf}
    \end{aligned}  
\end{equation}
This identity can be seen by observing that $[\tilde{Z}^q\otimes \id \otimes \omega_{h(q)}]\otimes \id\otimes\id\ket{\psi}=\id\otimes [\tilde{Z}^q\otimes \omega^T_{h(q)}\otimes \id]\otimes\id\ket{\psi}$, where $\ket{\psi}$ is the representative state depicted in Eq.~\eqref{eq:FiducialRep} in the main text. By inserting an identity, $\id = [\id\otimes \omega^T_{h(q)}\otimes \omega^\dagger_{h(q)}][\id\otimes \omega^*_{h(q)}\otimes \omega_{h(q)}]$, one notices that $\id\otimes [\tilde{Z}^q\otimes \omega^T_{h(q)}\otimes \id]\otimes \id\ket{\psi}=\id\otimes [\id\otimes \omega^T_{h(q)}\otimes \omega^\dagger_{h(q)}][\tilde{Z}^q\otimes \id\otimes \omega_{h(q)}]\otimes \id\ket{\psi}$. Finally, by noticing that
\begin{align}
    &\id \otimes \omega_{h(q)}^T\otimes \omega_{h(q)}^\dagger = \left(\id \otimes \id \otimes \id\right) \left[\bigoplus_{\alpha\in K} W_{h(q)}^\dagger \right]\\
    & = \left(P_{k(|K| h(q))}^T \otimes \id \otimes \id\right) \left[\bigoplus_{\alpha\in K} W_{h\left(|K|h(q)\oplus_G \alpha\right)}^\dagger \right]\\
    &  = U_{|K|h(q)}^\dagger,
\end{align}
where again we have used the group identities, the identity directly follows.

\subsection{Proof of Lemma \ref{lem:ZeroVectorOutcome}} \label{app:ZeroVectorOutcome}
Finally, we also prove Lemma~\ref{lem:ZeroVectorOutcome}. Let $\tilde{U}^{(0)} = \prod_{j=1}^n U_{r_j}$ as explained in the main text.
\begin{manuallemma}{4}
    Let $\tilde{U}^{(0)} \ne \id$. The following equation holds.
    \begin{equation}\label{eq:ZeroVectorOutcomeApp}
    \begin{aligned}
        \includegraphics[width=0.5\linewidth]{ZeroVectorOutcome.pdf}
    \end{aligned}
    \end{equation}
\end{manuallemma}

\begin{proof}
    The left-hand side of Eq.~\eqref{eq:ZeroVectorOutcomeApp} can be written equivalently as
    \begin{multline}
        \ket{\psi} =\sum_{\vec{i},\vec{j},\vec{k}} \tr[U_g  A^{i_1,j_1,k_1}A^{i_2,j_2,k_2}\dots A^{i_n,j_n,k_n}]\\
        \ket{i_1 j_1 k_1 i_2 j_2 k_2\dots i_n j_n k_n},
    \end{multline}
    for some $g\in G$. Consider the coefficients for each $\vec{i},\vec{j},\vec{k}$. We have
    \begin{align}
       &  \tr[U_g  A^{i_1,j_1,k_1}A^{i_2,j_2,k_2}\dots A^{i_n,j_n,k_n}]\nonumber\\
       &= \sum_{\vec{\gamma}} \tr \Big[ U_g  \ketbra{i_1 j_1 \gamma_1}{i_1 k_1 \gamma_1}\ketbra{i_2 j_2 \gamma_2}{i_2 k_2 \gamma_2}\dots \nonumber\\
       &\qquad \qquad\qquad \qquad\qquad \qquad\ketbra{i_n j_n \gamma_n}{i_n k_n \gamma_n}  \Big]\\
       &= \delta_{i_2,i_1}\delta_{i_3,i_2}\dots \delta_{i_{n},i_{n-1}}\delta_{j_2,k_1}\delta_{j_3,k_2}\dots \delta_{j_{n},k_{n-1}}\nonumber\\
       &\qquad \qquad\qquad \qquad\qquad \sum_{\gamma_1 } \bra{i_n k_n \gamma_1} U_g \ket{i_1 j_1 \gamma_1}\\
       &= \delta_{i_2,i_1}\delta_{i_3,i_2}\dots \delta_{i_{n},i_{n-1}}\delta_{j_2,k_1}\delta_{j_2,k_1}\dots \delta_{j_{n},k_{n-1}}\nonumber\\
       &\qquad \qquad\qquad \qquad\qquad \sum_{\gamma_1 } \bra{i_1 k_n \gamma_1} U_g \ket{i_1 j_1 \gamma_1}.
    \end{align}
    Now taking the last term, we have
    \begin{align}
         &\sum_{\gamma_1} \bra{i_1 k_n \gamma_1} U_g \ket{i_1 j_1 \gamma_1}\nonumber \\
         &\qquad = \sum_{ \alpha \in K} \bra{i_1} \ketbra{\alpha}{\alpha} X^{k(g)} \ket{i_1} \nonumber \\
         & \qquad \qquad \quad \bra{k_n} \omega_{h(g\oplus \alpha)}^* \ket{j_1}  \sum_{\gamma_1} \bra{\gamma_1} \omega_{h(g \oplus \alpha)} \ket{\gamma_1}\\
         & =  \bra{i_1} X^{k(g)} \ket{i_1} \bra{k_n} \omega_{h(g\oplus i_1)}^* \ket{j_1} \tr[\omega_{h(g \oplus i_1)}]\\
         & = D_\mu \delta_{k(g),0} \delta_{h(g\oplus i_1)}  \bra{k_n} \omega_0^* \ket{j_1}\\
         & = D_\mu^2 \delta_{k(g),0} \delta_{h(g)}  \delta_{j_1,k_n}\\
         & = \delta_{g,0} \delta_{j_1,k_n} D_\mu^2,
    \end{align}
    where we have used the group identities to reach the second to last line. Thus, Eq.~\eqref{eq:ZeroVectorOutcomeApp} is zero unless $g=0$, i.e., unless $U_g=\id$.
\end{proof}

\section{Non-Abelian Symmetries}\label{app:non-abelian}
In this appendix, we provide the supporting proofs for the claims made in Section~\ref{sec:NonAbelianSymmetries}.

\subsection{Proof of Result~\ref{res:LOCC}}\label{app:proofLOCC}

In this section, we prove Result~\ref{res:LOCC}. Any linear unitary representation of a non-abelian group can be split into (1D) abelian irreps and $(D>1)$ non-abelian irreps. That is, we may write $U_g\cong \equiv U_g^{A} \oplus U_g^{NA}$. This decomposition also decomposes the Hilbert space on which $U_g$ acts, $\mathcal{H}\cong\mathcal{H}^{A} \oplus \mathcal{H}^{NA}$. By Schur's lemma, symmetric reduced density matrices must also decompose this way, $\rho \cong \rho^{A} \oplus \rho^{NA}$. Now, for convenience, let us restate the theorem. 
\begin{manualresult}{5}
Let $G$ be a non-abelian group such that $\omega_g^\mu \otimes (\omega^{\mu}_g)^*$ is non-abelian for all nontrivial projective $\mu$-irreps $\omega_g^\mu$. Then, for any translationally invariant SPT state $\ket{\psi_n}$ associated with $\mu$, there is some $n_0$ such that $\forall n>n_0$ $\ket{\psi_n}$ cannot be deterministically converted via symmetric LOCC to a product state.
\end{manualresult}
\begin{proof}
        Recall that SPT states correspond to normal MPS, and therefore the tensor, after blocking and in canonical form, has only one block. Consequently, as discussed in Appendix~\ref{app:noninjectivesym}, after blocking sufficiently many sites, the physical symmetry in an appropriate basis decomposes as $(\omega_g\otimes\omega_g^*)\oplus U_g^{\rm extra}$, where $\omega_g$ is a projective representation corresponding to a nontrivial cohomology class $\mu$. At this length scale and in this decomposition, the reduced density matrix of the MPS has the form $\rho \oplus 0$, where $\rho$ is full rank \cite{PerezGarciaEtAl2007_MPSrepresentations}. In particular, as $\omega_g\otimes \omega_g^*$ is non-abelian, $\rho$ has support on a non-abelian subspace.
        
        Now, let $n$ be big enough that one can bipartition the state into two regions, $A$ and $B$, large enough that the above properties hold. Considered then as a bipartite system, we have
    \begin{equation}
        U_g^A \otimes U_g^B \ket{\psi}_{AB} \propto \ket{\psi}_{AB}
    \end{equation}
    for some non-abelian representations $U_g^{A,B}$, and the reduced density matrices of $\ket{\psi}_{AB}$ on $A$ and $B$ have support on the non-abelian subspaces. Note, that the fact $\rho_A$ has non-abelian support is alone sufficient to ensure that $\ket{\psi}_{AB}$ is entangled as if $\tr[\rho^{NA}]\ne 0$, then $\text{rk}(\rho)>1$.

    Let us consider a symmetric projective measurement applied by Alice. As the measurement is symmetric, by Schur's lemma and the measurement completeness, there must be some outcome that has support on the non-abelian subspace. Moreover, again by Schur's lemma, the reduced density matrix cannot be rank 1. Thus at least one outcome remains entangled.
    
    Let us now consider all the possible allowed operations. 
    As any auxiliary system must be initialized in a pure eigenstate - therefore having support on the abelian subspace - adding auxiliary systems cannot eliminate the states support on the non-abelian subspace. Moreover, as with Ref.~\cite{LoPopescu2001_BipartiteLOCC} (see also Refs~\cite{SchuchEtAl2004_SuperselectionLOCC1,SchuchEtAl2004_SuperselectionLOCC2}), any symmetric projective measurement by $B$ can be simulated by a projective measurement by $A$ (with symmetric local unitary corrections on $B$). Therefore, no actions by either party can deterministically eliminate the support on the non-abelian subspace.
    Finally, by the assumption that $\omega_g^\mu \otimes (\omega^{\mu}_g)^*$ is non-abelian for all nontrivial projective $\mu$-irreps $\omega_g^\mu$, this holds for all SPT states in the phases corresponding to the cohomology class $\mu$.
\end{proof}

\ \\

\subsection{Proof of Eq.~\eqref{eq:probDistr1}}\label{app:evenLemma}
Starting from the expression
\begin{equation}
    \bra*{{\rm SPT}_\abs{f}} P_f^{\otimes \abs{f}} \ket*{{\rm SPT}_\abs{f}},
\end{equation}
one uses the fact that, in general, $X_{AB}\otimes \id_C\ket*{\Phi^+}=X_{AC}^{T_2}\otimes \ket{\Phi^+}$, where $T_2$ denotes the partial transpose in the computational basis on the second system. This leads to the expression
\begin{equation}
    \bra*{{\rm SPT}_\abs{f}} P_f^{\otimes \abs{f}} \ket*{{\rm SPT}_\abs{f}}=\bra*{{\rm SPT}_\abs{f}}\prod_{i\in \abs{f}} P_f^{(i,i+1)}\ket*{{\rm SPT}_\abs{f}},
\end{equation}
where, like $\ket*{{\rm SPT}_\abs{f}}$, $P_f^{(i,i+1)}$ should be understood to have periodic boundary conditions, and where we have used that $P_f$ is invariant under partial transposition as it is diagonal in the computational basis. From there it is easy to see that the operator product $\prod_{i\in \abs{f}} P_f^{(i,i+1)}=0$ iff $\abs{f}$ is odd. We may graphically represent the previously discussed steps for $\abs{f}=3$ as
\begin{equation}
\begin{aligned}
    \includegraphics[width=0.5\linewidth]{P3network.pdf}
\end{aligned}=0.
\end{equation}
In case $\abs{f}$ is even the expression above yields $\frac{2}{2^{\abs{f}}}$, which proves Eq.~\eqref{eq:probDistr1}.

\subsection{Probability of obtaining Non-Correctable Error States}\label{app:probError}
Finally, here we expand on the argument for why, in the limit $n\rightarrow\infty$, a $\text{log}(n)$-depth circuit ensures an asymptotically deterministic transformation. Let us assume we are given a certain distribution of error outcomes in the output string $\vec{x}$, and we can apply a circuit of a certain depth $l$. Then one needs a circuit of at least depth $l+1$ to correct $\vec{x}$ if and only if there is an error at some $k_i$ with no other error outcome within distance $\pm 2l$ to it. Such an output string is not correctable and the protocol would fail. As derived in the main text, the probability distribution is flat on outcomes with even number of $f$s and zero otherwise. Therefore, the probability $p_{\rm fail}$ to obtain a non-correctable outcome $\vec{x}$ is given by $p_{\rm fail}=(\# \vec{x}\mathrm{\,not\,correctable})/2^{n-1}$. The number of error strings can be upper bounded by fixing an error at position $i$ and no errors withing distance $\pm 2l$. This is given by $2^{n-(4l+1)-1}$. There are $n$ possibilities for $i$, hence we can upper bound
\begin{equation}
    p_{\rm fail}\leq \frac{n 2^{n-(4l+1)-1}}{2^{n-1}}=\frac{n}{2^{4l+1}}.
\end{equation}
If $l=l(n)\equiv\log(n)$ one obtains $p_{\rm fail}\rightarrow 0$, for $n\rightarrow \infty$.

\subsection{Proof of Eq.~\eqref{eq:probDistr}}\label{app:GHZ}
From the relations in Eqs.~\eqref{eq:relation1}, and~\eqref{eq:relation2} in the main text one can deduce that upon the first round of measurements, the probability of obtaining a completely successful outcomes, i.e., $\vec{x}=(s,s,\dots,s)$, is given by
\begin{align}
    p(\vec{x})&=\sum_{\vec{i}\in\{0,\dots ,3\}^{n}} \abs*{\bra{\phi_0}^{\otimes n}\bigotimes_{k\in [n]} (\tilde{Z}^{i_k}_k)^\dagger\ket*{{\rm GHZ}_n^8}}^2\\
    &=\left(\frac{1}{8}\right)^{n-1}\sum_{\vec{i}\in\{0,\dots ,3\}^{n}}\abs*{\bra{\varphi_0}\prod_{k\in [n]} (\tilde{Z}^{i_k})^\dagger\ket{\varphi_0}}^2\\
    &=\frac{1}{2^{n-1}}
\end{align}
where the second equation follows from Eqs.~\eqref{eq:relation1}, and~\eqref{eq:relation2} in the main text. The last equation follows from the fact that the overlap is one in the case $\prod_{k\in [n]} (\tilde{Z}^{i_k})^\dagger=\id$, for which there are $4^{n-1}$ possibilities, and zero otherwise. Moreover, any post-measurement state is a product state and, up to quasi-commuting unitaries, equivalent to the state $\ket{\varphi_0}^{\otimes n}$.

Whenever unsuccessful outcomes occur one can verify that the probabilities of obtaining a certain distribution of successes can at most depend on the number of successes, and not their distribution. To see this, consider a $p(\vec{x})$ with $\abs{f}$ unsuccessful outcomes; it reads
\begin{equation}
   p(\vec{x}) = 4^{n-\abs{f}}\qty(\frac{1}{8})^{n-\abs{f}}\bra*{{\rm GHZ}^8_{\abs{f}}}P_f^{\otimes \abs{f}}\ket*{{\rm GHZ}^8_{\abs{f}}}.
\end{equation}
Upon inspection one notices that $\ket*{{\rm GHZ}^8_{\abs{f}}}\simeq \ket*{{\rm GHZ}_{\abs{f}}^{2}}\otimes \ket*{{\rm GHZ}_{\abs{f}}^{2}}\otimes \ket*{{\rm GHZ}_{\abs{f}}^{2}}$, and $P_f=\id_2\otimes\ketbra{-}\otimes\id_2$. It follows that the probability above evaluates to zero whenever $\abs{f}$ is odd, and otherwise yields a factor $1/2^{\abs{f}-1}$. This proves that the probability distribution $p(\vec{x})$ is flat for $\abs{f}$ even, and zero otherwise, which we claimed in Eq.~\eqref{eq:probDistr} in the main text.

\end{document}